\newtheorem{theorem}{Theorem}[section]
\newtheorem{lemma}[theorem]{Lemma}
\newtheorem{proposition}[theorem]{Proposition}
\def\beq{\begin{equation}}
\def\eeq{\end{equation}}
\def\bH{\mathbf{H}}
\def\bI{\mathbf{1}}
\def\cA{\mathcal{A}}
\def\cB{\mathcal{B}}
\def\cG{\mathcal{G}}
\def\cH{\mathcal{H}}
\def\cI{\mathcal{I}}
\def\cK{\mathcal{K}}
\def\cM{\mathcal{M}}
\def\cN{\mathcal{N}}
\def\cO{\mathcal{O}}
\def\cR{\mathcal{R}}
\def\cS{\mathcal{S}}
\def\cT{\mathcal{T}}
\def\cV{\mathcal{V}}
\def\cW{\mathcal{W}}
\def\cZ{\mathcal{Z}}
\def\tf{\tilde{f}}
\def\tZ{\tilde{Z}}
\def\tmu{\tilde\mu}
\newcommand{\eps}{\varepsilon}
\def\sab{S_{\alpha,\sigma}}
\def \E{\mathbf{E}}
\def \P{\mathbf{P}}
\def \Pois{\text{Poiss}}
\def\({\left(}
\def\rt){\right)}
\def\[{\left[}
\def\]{\right]}
\def\dsim{\stackrel{d}{\sim}}
\def \R{\mathbb{R}}
\def \N{\mathbb{N}}
\def \P{\mathbf{P}}
\def \E{\mathbf{E}}
\def \aa{\alpha}
\def \eps{\epsilon}
\def \ff{\infty}
\def \({\left(}
\def \){\right)}
\def \beq{\begin{equation}}
\def \ee{\end{equation}}
\def \bea{\begin{eqnarray}}
\def \eea{\end{eqnarray}}
\def \bes{\begin{eqnarray*}}
\def \ees{\end{eqnarray*}}
\def \nn{\nonumber}
\def \aaa{\mathfrak	{a}}
\def\tP{\tilde{P}}
\def\tZ{\tilde{Z}}
\numberwithin{equation}{section}
\begin{document}

\title{Stable laws for chaotic billiards with cusps at flat points
\thanks{ {\bf Received date}: }
}
\author{PAUL JUNG\,\,\,\,AND\,\,\,\,  HONG-KUN ZHANG}

\date{}


\maketitle
\begin{abstract}  We consider billiards with a single cusp where the walls meeting at the vertex of the cusp have zero one-sided curvature, thus forming a flat point at the vertex. For H\"older continuous observables, we show that properly normalized Birkhoff sums, with respect to the billiard map, converge in law to a totally skewed $\alpha$-stable law.\end{abstract}
\tableofcontents

\section{Introduction}

Here we consider  dispersing billiards with  a cusp at a flat point, similar to that which was discussed in \cite{Z2016b}.  More precisely, for any fixed constant $\beta>2$ (which will determine the sharpness of the cusp), we consider a dispersing billiard table $Q=Q_{\beta}$  with a boundary consisting of a finite number of $C^3$ smooth curves $\Gamma_i$, $i=1\cdots, n_0$, where $n_0\geq 3$, with a cusp formed by two of those curves and such that there is a `perpendicular opposing point' to the cusp (see below). To simplify things, we have assumed the table has a single cusp at $P=\Gamma_1\cap\Gamma_2$;  both $\Gamma_1$ and $\Gamma_2$ have zero  derivatives up to $\beta-1$ order  at $P$, and the $\beta$-order derivative is not zero; we also assume that all other boundary components are dispersing and have curvature bounded away from zero.

 We  choose a Cartesian coordinate system $(s,z)$ originated at $P$, with the horizontal $s$-axis being the tangent line to both $\Gamma_1$ and $\Gamma_2$. Assume $\Gamma_1$ and $\Gamma_2$ can be represented as
 \beq\label{z1s}
 z_1(s)=\beta^{-1} s^{\beta},\,\,\,\,\,\,{z_2(s)=-\beta^{-1} s^{\beta}}\eeq
 for $s\in [0,\eps_0]$ with $\eps_0>0$ being a small fixed number.

We investigate a limit law for the billiard system on $Q_{\beta}$. To simplify our analysis, we denote $\Gamma_3$ as the opposition side to the cusp. Following a similar assumption in \cite{Z2016b}, we also  suppose that the (unique) horizontal trajectory running out of the cusp from $P$ will hit the boundary $\Gamma_3$  perpendicularly, i.e., at a perpendicular opposing point. Let us note that when extending the analysis below to multiple cusps, it is important for each cusp to have a perpendicular opposing point.

The billiard flow $(\Phi^t)$ is defined on the unit sphere bundle $Q\times\mathbf{S}^1$ and preserves
Liouville measure. There is a natural cross section $\cM$ in	 $Q\times	 \mathbf{S}^1$ that contains all post-collision vectors based at the boundary of the table $\partial Q$. The set $\cM=\partial Q \times [0,\pi]$ is called the collision space.
The corresponding  billiard map $  T : \cM\to \cM$ takes a vector $x\in \cM$ to the next post-collision vector	 along the trajectory of $x$.
Let the set $S_0$ consist of all grazing collision vectors with walls as well as all collision vectors at corner points. Then $\cS:=S_0\cup   T       ^{-1}S_0$ is called the {\it{singular}} set of $T$.
The billiard map $T: \cM\setminus	\cS\to \cM\setminus   T       \,\cS$ is a local $C^{2}$ diffeomorphism and preserves a natural
absolutely continuous probability measure $d\mu=\frac{1}{2|\partial Q|} \sin\varphi\, dr\, d\varphi$ on the collision space $\cM=\{(r,\varphi)\}$ (here $|\partial Q|$ is the length of $\partial Q$).

Any post-collision vector $x\in \cM$ can be represented by $x=(r, \varphi)$, where $r$ is the arclength parameter
along $\partial Q$,	 and $\varphi\in [0, \pi]$ is the angle formed by  the tangent line of the boundary and the collision vector in the clockwise direction. For  simplicity, we assume the cusp point has $r$-coordinate
\beq\label{r'}
r=r'\quad\text{and}\quad r=r'',\eeq with respect to $\Gamma_1$ and $\Gamma_2$, respectively (we will also later set $r'=0$).
We define the subset $M\subset \cM,$ which consists of all collisions on $\partial Q\setminus(\Gamma_1\cup\Gamma_2)$. We then define $F: M\to M$ (see \eqref{Fdef} below) as the first return map, such that for any $x\in M$, $Fx\in M$ is the first return to $M$ along the forward iterations of $T $. It is known that $F$ preserves the measure
\beq\label{hmu}
\tmu:=\frac{1}{\mu(M)}\mu|_M.
\eeq

Rigorous bounds on the decay of correlations for billiards with flat points were derived recently
in \cite{Z2016b}, where a detailed description of
billiards with flat points is also given. It was shown that if $f, g$ are  H\"older continuous functions
 on the collision space $\cM$, then for all $n\in \mathbb{Z}$,
\beq\label{cCn}\mu(f\circ   T^n\cdot g)-\mu(f)\mu(g)= \cO(n^{\frac{1}{1-\beta}})\eeq Here we use the standard notation $\mu(f)=\int_{\cM} f\,d\mu$. It is the above slow decay of correlations that leads one to expect limiting behavior, which is different from the classical Central Limit Theorem,
in the Birkhoff sums
$$ \cS_n f := f + f\circ   T   +\cdots +f\circ   T  ^{n-1}$$
for H\"older continuous functions on $\cM$.
As usual, we consider the case {$\mu(f) = 0$}, where the general case follows by simply subtracting off $\mu(f)$.

Letting $\alpha=\frac{\beta}{\beta-1}$ (equivalently, $\beta=\frac{\alpha}{\alpha-1}$), one can check that $\alpha \in (1,2)$. Our main goal is to establish an $\alpha$-Stable Limit Theorem for the sequence $\{\cS_n f, n\ge 0\}$. Indeed, a {function $f$} with $\mu(f)=0$ is said to be in the domain of attraction of a (strictly) $\alpha$-stable law if there exists $\{b_n\}$ such that $\{\frac{\cS_nf}{b_n}\}$ converges in distribution to a random variable $S_\alpha$ with an $\alpha$-stable law. Here, {\em strictly} simply means that $\mu(f)=0$, and we shall henceforth just say $\alpha$-stable.  In particular, then there exist constants {$C, C^-\ge 0$} such that the limiting stable law satisfies
\beq
\lim_{x\to\infty}x^\alpha\P(S_\alpha>x)=C, \quad \lim_{x\to\infty}x^\alpha\P(S_\alpha<-x)=C^-.
\eeq
{If $f$ is positive in the neighborhoods of $r'$ and $r''$ corresponding to the vertex of the cusp $P$, then $C>0$ and $C^-=0$ (this holds under the weaker condition $I_f>0$ defined in \eqref{If}), and the stable law is said to be {\it totally skewed}. We will consider only totally skewed stable laws in what follows (without loss of generality, positively skewed so that $C>0, C^-=0$).}

The constant $C$ above  determines a so-called {\em scale parameter} $\sigma>0$ which plays
a role analogous to the standard deviation of a Gaussian distribution. In particular  (see \cite[p.17]{samorodnitsky1994stable}), $$\sigma^\alpha={C\,\Gamma(2-\alpha)\cos(\pi\alpha/2)}{(1-\alpha)^{-1}}.$$
We will henceforth denote by $\sab$, a stable random variable with characteristic function
\beq
\E\(e^{i u \sab}\right)=\exp\(-|u\sigma|^\alpha\(1-i \text{sign}(u)\tan\frac{\pi\alpha}{2}\right)\right).
\eeq

{For any $\gamma\in (0,1)$, we denote $\cH_{\gamma}$ as the class of all H\"older continuous functions $f:\cM\to\R$, with H\"older exponent $\gamma$.}


\begin{theorem}[Stable Limit Theorem for billiards with a cusp]\label{thm:1}
Let $Q_\beta$, where $\beta\in(2,\infty)$, be a  billiard table with cusp defined by (\ref{z1s}) and suppose $f\in \cH_{\gamma}$ for some $\gamma>0$. Suppose $\mu(f)=0$ and
\beq\label{If}
I_f:=\frac{1}{4}\int_{0}^{\pi} ( f(r',\varphi)+f(r'',\varphi))\sin^{\frac{1}{\alpha}}\varphi\, d\varphi\neq 0\eeq
where  $r',r''$ are as in \eqref{r'}.
Then as $n\to\infty$,
\begin{equation}\label{stablelaw}
\frac{\cS_{n} f}{n^{1/\alpha}}\xrightarrow{d} S_{\alpha, \sigma_f}
\end{equation}
with $\alpha=\frac{\beta}{\beta-1}$ and $\sigma_f^{\alpha}=\frac{2 I_f^{\alpha}}{{\beta|\partial Q|}}$.\end{theorem}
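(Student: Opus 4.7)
I would follow the classical inducing strategy for stable limits in intermittent dynamics. First, induce on the ``good'' region $M$ using the return map $F$ and return time $R$, and decompose
\begin{equation*}
\cS_n f(x) = \sum_{k=0}^{N_n(x)-1} \hat f(F^k x) + \text{(boundary terms)},\qquad
\hat f(x) := \sum_{j=0}^{R(x)-1} f(T^j x),
\end{equation*}
where $N_n(x)$ counts visits of $x$ to $M$ up to time $n$. Since $\alpha>1$, Kac's formula gives $\tmu(R)=1/\mu(M)<\infty$, so the ergodic theorem yields $N_n/n\to\mu(M)$ $\mu$-a.s.

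The heart of the proof is the one-sided tail asymptotic
\begin{equation*}
\tmu\!\left(\hat f>x\right)\;\sim\;c_f\,x^{-\alpha},\qquad x\to\infty,
\end{equation*}
with $c_f$ explicit in terms of $I_f$. A cusp excursion enters with coordinates $(r,\varphi)$ near $P$ and collides alternately on $\Gamma_1,\Gamma_2$, staying arbitrarily close to $P$ as $R$ grows. The flat-point geometry $z_i=\pm\beta^{-1}s^\beta$ produces a near-invariant built from $s$ and $\sin\varphi$; tracking it gives not only the return-time asymptotic $\tmu(R>n)\asymp n^{-\alpha}$ already implicit in \eqref{cCn}, but also the cell-sum approximation
\begin{equation*}
\hat f(x)\;=\;\tfrac12 R(x)\bigl(f(r',\varphi(x))+f(r'',\varphi(x))\bigr)+o(R(x))
\end{equation*}
along long excursions, by H\"older continuity of $f$ and the fact that cusp collisions cluster near the vertex. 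Converting the joint distribution of $(R,\varphi)$ under $\tmu$ to that of the near-invariant via its Jacobian produces the weight $\sin^{1/\alpha}\!\varphi$, and the normalization $d\mu=\sin\varphi\,dr\,d\varphi/(2|\partial Q|)$ together with the $\beta$ in \eqref{z1s} pin down the prefactor $2/(\beta|\partial Q|)$; so $c_f$ is proportional to $I_f^\alpha$.

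With the tail in hand, the induced system $(F,\tmu)$ has exponential decay of correlations for H\"older observables \cite{Z2016b}, so an abstract $\alpha$-stable limit theorem (in the Aaronson--Denker / Gou\"ezel framework) gives, for $m\to\infty$,
\begin{equation*}
\frac{1}{m^{1/\alpha}}\sum_{k=0}^{m-1}\hat f\circ F^k \;\xrightarrow{d}\; S_{\alpha,\tilde\sigma},
\end{equation*}
with $\tilde\sigma$ determined by $c_f$. Because $N_n\sim n\mu(M)$, the scaling $m^{1/\alpha}$ converts to $n^{1/\alpha}$ with a factor $\mu(M)^{1/\alpha}$ absorbed into the scale parameter; the boundary terms are $o(n^{1/\alpha})$ because $\tmu(R)<\infty$. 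Matching constants then reproduces $\sigma_f^\alpha=2I_f^\alpha/(\beta|\partial Q|)$.

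The main obstacle is the precise tail computation for $\hat f$, not merely for $R$. One must follow the evolution of $\varphi$ inside an excursion carefully enough to identify its effective entry angle, control the H\"older error in replacing $f(T^jx)$ by its vertex value so that the error is strictly of smaller order than $R$, and execute the change of variables to the near-invariant with its Jacobian in order to recover the $\sin^{1/\alpha}\!\varphi$ weight that defines $I_f$. Once this is done, the abstract stable limit for $F$ and the lifting step to $T$ are standard given the exponential mixing of $F$ from \cite{Z2016b} and the finiteness of $\tmu(R)$ for $\alpha>1$.
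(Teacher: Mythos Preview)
Your overall inducing strategy matches the paper's, but there is a genuine gap in the core approximation for $\hat f$. You write
\[
\hat f(x)=\tfrac12\,R(x)\bigl(f(r',\varphi(x))+f(r'',\varphi(x))\bigr)+o(R(x))
\]
for some ``effective entry angle'' $\varphi(x)$. This is false: during a long cusp excursion the collision angle $\eta_n=\min(\varphi_n,\pi-\varphi_n)$ does \emph{not} stay near any entry value; it starts near $0$, climbs monotonically to $\pi/2$ at the deepest point $n\approx N/2$, and returns to $0$. What the paper proves (Section~\ref{sec:scale}) is that $\eta_n\approx\Psi^{-1}(2I_1 n/N)$ with $\Psi(\eta)=\int_0^\eta(\sin u)^{1/\alpha}\,du$, so $\sum_{n}f(r_n,\varphi_n)$ is a Riemann sum for $\int\bar f(\varphi)\,d\Psi(\varphi)$ and one obtains
\[
\hat f(x)=\frac{I_f}{I_1}\,R(x)+O\!\bigl(R(x)^{1-\gamma/(\beta-1)}\bigr).
\]
The coefficient is the \emph{constant} $I_f/I_1$, independent of $x$, and the weight $\sin^{1/\alpha}\varphi$ in $I_f$ comes from the speed $d\eta_n/dn\propto(\sin\eta_n)^{-1/\alpha}$ along the orbit, not from a Jacobian on a joint law of $(R,\varphi)$ as you propose. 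With the correct approximation the tail of $\hat f$ is immediate from that of $R$; your joint-distribution route, based on a fixed entry angle, would not recover $I_f$.

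A secondary point: the paper does not invoke an abstract Aaronson--Denker/Gou\"ezel theorem. It proves Poisson convergence of the point process $\sum_k\delta_{n^{-1/\alpha}R\circ F^k}$ directly via a Bernstein big-block/small-block argument, using exponential mixing of $F$ to verify analogues of the extremal conditions $D(u_n)$ and $D'(u_n)$. The extension from $R$ to $\hat f$ then requires a three-level truncation $\hat f=\hat f_L+\hat f_I+\hat f_H$ and a separate decay-of-correlations estimate (Proposition~\ref{boundedcovtf}) to kill the variance of $S_n\hat f_L$ and of the error $E_I=\hat f_I-(I_f/I_1)(R_I-\tmu(R_I))$; this is where the real work lies, and an off-the-shelf theorem does not bypass it since $\hat f$ is unbounded with H\"older norm blowing up like $N^{1+\gamma}$ on $M_N$. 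The lifting to $T$ is also more delicate than ``boundary terms are $o(n^{1/\alpha})$'': the paper first establishes $J_1$ process convergence for the truncated $\hat f_I$ (finite L\'evy intensity), composes with the random time change $t\mapsto N_n t\,\tmu(R)/n$ via the Continuous Mapping Theorem, and only then removes the truncation.
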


\noindent{\bf Remark:}
 The result extends easily to bounded, piecewise H\"older functions which are H\"older continuous in  a neighborhood of the region in $\cM$ corresponding to the cusp (at $r'$ and $r''$) and whose discontinuities are contained in the singular set of $T$.

A word on the method of proof of the above theorem. We follow a recently popular approach in studying the statistical properties of $(T  ,\cM)$, namely we use an inducing scheme as introduced in \cite{markarian2004billiards, chernov2005billiards}. By removing spots with weak hyperbolicity from the phase space, one considers first the return map on the subspace $M\subset \cM$.
More precisely, we define $M$ to be the collision space on dispersing boundaries $\Gamma_i$, $i=3,\cdots, n_0$,
as well as those collision vectors on $\Gamma_i$, $i=1,2$, such that the number of forward collisions on the two boundary components $i=1,2$, before hitting a boundary component with $i\ge 3$, does not exceed $K_0$, for some fixed $K_0\geq 1$. For any $x\in M$
we call $$\cR(x) := \min\{n\geq	 1:   T^n(x)\in M\}$$ the {\em first return time function} and the  {\em return map} $F\colon M \to M$ is defined by
\beq \label{Fdef}
     F(x) :=   T  ^{\cR(x)}(x),\,\,\,\,\,\text{for all } x\in M.
\eeq
The return map $F$ preserves the conditional measure $\tmu$, defined earlier. It was proved in \cite{Z2016b} that the induced system $(F,M,\tmu)$ is uniformly hyperbolic and enjoys exponential decay of correlations -- a property which aids tremendously in proving probabilistic limit theorems.

The basic outline of the proof is that we first prove the Stable Limit Theorem
on $M$  first for the special case of a centered version of the return time function using exponential decay of correlations on $M$ -- this special case introduces the basic ideas involved in the proof of the main theorem.
 Next we show that a general {\em induced} function
\beq\label{inducef}
  \tf(x) :=\sum_{k=0}^{\cR(x)}
f(T^k x), \quad x\in M,\eeq
can always be approximated by a return time function, and consequently extend the result for return time functions to general induced functions.  Finally, using the idea of \cite{balint2011limit}, we lift the limit theorem from the induced system to the original system.
We do this in our context, by an application of the Continuous Mapping Theorem.

The rest of the paper is organized as follows. In Section \ref{sec:prelim} we gather and review some preliminary tools concerning billiards with cusps and H\"older bounds for induced functions. In Section \ref{sec:return}, we prove convergence to a stable random variable for the partial sums of iterations of the return time function, which should be considered a special case of an induced function.
In Section \ref{sec:induced} we extend this result to general induced functions on $M$. In Section \ref{sec:main thm} we ultimately extend this to convergence to $\alpha$-stable laws for functions on the original space $\cM$. In the last section, we provide technical calculations for the scale parameter $\sigma$. Throughout the proofs $C, C', C'', C_1, C_2, \ldots$ denote positive constants that may change in different paragraphs, and sometimes within the same proof.

\section{Preliminaries}\label{sec:prelim}
\subsection{Properties of billiards with cusps}

Billiards with cusps were previously studied in \cite{chernov2007dispersing} and \cite{Z2016b}, and the key analysis was a careful consideration of trajectories in the cusp which were called {\it corner series}. These trajectories play an important role here as well, and our first lemma below summarizes the key estimates concerning such trajectories, including an estimate which nearly gives us power law tails for the return times -- a key ingredient in the proofs below.

Let $(F,M,\tmu)$ be the induced system, with singularity set $\tilde{S}$.
 It easily follows from Kac's formula that $\tmu(\cR)= 1/\mu(M)$ which shows that $\cR$ has finite mean.  Our first important property
states among other things, that furthermore, the return time function $\cR$
satisfies the polynomial tail bound \beq \label{Mpol}
	\tmu(x\in M:
	 \cR(x)>n)\sim
	\,  n^{-\alpha},
\eeq where $\alpha=1+\frac{1}{\beta-1}$ is the constant of Theorem~\ref{thm:1} ($a_n\sim b_n$ denotes $C_1 b_n<a_n<C_2 b_n$ for some  constants $0<C_1<C_2<\infty$).     We will need even more precise estimates than this.

Let the domains $(M_N, N\ge 1)$, called $N$-cells, be defined by
\beq\label{Ncells}
M_N :=\{x\in M \,: \, \cR(x) = N+1\},\quad N\in\mathbb{N},
\eeq
In words, they consist of points $x$ whose forward trajectory enters the cusp and exits after exactly $N$-iterations inside the cusp.
It will be convenient to deal only with angles in $[0,\frac{\pi}{2}]$, so we denote
\beq\label{eta}
\eta_n:=\min(\varphi_n, \pi-\varphi_n)
\eeq
 where $ (r_n,\varphi_n)\equiv T^n x$, for $n=1,\cdots, N$.

Let $T^n x=(r_n,\varphi_n)$ be sufficiently close to the vertex $P$ of the cusp, and in the $(s,z)$-coordinate system described above \eqref{z1s}, denote $s_n$ as the $s$-coordinate of the point $T^n x$. If we set $r'=0$, one can check that
on the side of the cusp corresponding to $r'$,
\beq\label{sn}
s_n=r_n+\cO(r_n^2).
\eeq
On the side of the cusp corresponding to $r''$, similarly we have $s_n=r_n-r''+\cO(r_n^2)$, but by symmetry we will henceforth focus on the $r'$-side of the cusp.
We also have that {$(1, s_n^{\beta-1})$} is a tangent vector of $\partial Q$ at $s_n$, and
$$\rho_n:=\tan^{-1}(s_n^{\beta-1})$$ is the angle of the tangent vector at $s_n$ made with the horizontal axis, or equivalently, with the tangent line through the flat point $P$.

On $M_N$, we have $\cR=N$. Using notation similar to that of \cite{Z2016b}, define $\bar N$ by way of
$$\rho_{\bar N}:=\min\{\rho_n\,:\, 1\leq n\leq N\}.$$
It was argued in \cite{chernov2007dispersing} that on $M_N$, one has $|\bar N-N/2|\leq 2$ so that $\bar N$ is close to the middle of $N$.
We further subdivide the corner series into three segments. Fix a small value
$\bar\eta>0$ and let $$N_1=\max\{ n\leq \bar N\,:\, \eta_n<\bar\eta\},\,\,\,\, N_3=\min\{n\geq \bar N\,:\, \eta_n<\bar\eta\},$$ and also put $N_2=\bar N$.
In previous works, the segment on $[1, N_1]$ was called the ``entering period'' in the corner series, the
segment $[N_1 + 1;N_3-1]$ the ``turning period'', and the segment $[N_3,N]$
its ``exiting period''.

\begin{lemma}[Trajectory and strip estimates, \cite{Z2016b}]\label{Mm} For any $N\geq 1$, points in the cell $M_N$ have the following properties:  \\
{(1)  $N_1\sim N_2-N_1\sim N_3-N_2\sim N-N_3\sim N$, i.e. all the three segments in the corner series have length of order $N$;\\
(2) $\rho_1\sim N^{-\frac{\alpha}{\alpha+1}},\,\,\,\, \rho_{n}\sim n^{-1}\sim N^{-1}$, for $n\in [N_1, N_2]$;\\
(3) $\rho_n\sim n^{-\frac{1}{\alpha+1}}N^{-\frac{\alpha}{\alpha+1}}$, for $n\in [1,N_1]$;\\
(4) $\eta_1=\cO(N^{-\frac{\alpha}{\alpha+1}}), \,\,\eta_2\sim N^{-\frac{\alpha}{\alpha+1}}$;\\
(5) $\eta_n\sim (n N^{-1})^{\frac{\alpha}{\alpha+1}}$, for $n\in [1,N_1]$;\\
(6) the cell $M_N$ has width $\sim N^{-\frac{\alpha^2+\alpha+1}{\alpha+1}}$, length $\sim N^{-\frac{\alpha}{\alpha+1}}$, and density $\sim 1$; \\
(7) $\tilde\mu(M_N)\sim N^{-1-\alpha}$};\\
(8) The forward iterations of $T^kM_N$, $k=1,\cdots, N$, are all contained in a single strip $H_N$, bounded by two curves with equation $|r-r_f|^{\beta}\cos\varphi=C_N$, where $C_N\sim N^{-\alpha}$.
\end{lemma}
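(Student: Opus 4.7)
The plan is to track the evolution of collisions inside the cusp via the variables $s_n$ (horizontal position) and the grazing angle $\eta_n$. The local geometry of $\Gamma_1, \Gamma_2$ from \eqref{z1s} gives $\rho_n = \tan^{-1}(s_n^{\beta-1}) \sim s_n^{\beta-1}$, and combining the law of reflection at each collision with the straight-line motion between collisions produces coupled recursions for $(s_{n+1}, \eta_{n+1})$ in terms of $(s_n, \eta_n)$ whose linearization separates the corner series into three qualitatively different regimes.

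Items (1)--(5) are then obtained by asymptotically solving this system. The entering period $[1, N_1]$ is characterized by $\eta_n$ growing from $\cO(\rho_{\bar N})$ up to $\cO(\bar\eta)$ while $s_n$ decreases; in the turning period $[N_1, N_3]$, $\eta_n$ stays $\cO(\bar\eta)$ and the trajectory reverses direction with $\rho_n \sim n^{-1}$; the exiting period is the time-reversal image of the entering period. The constraint that the total count of cusp collisions equals $N$ fixes the scale of $\rho_1$: matching the three regimes gives $\rho_1 \sim N^{-\alpha/(\alpha+1)}$ and $\rho_n \sim n^{-1/(\alpha+1)} N^{-\alpha/(\alpha+1)}$ on the entering segment. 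Items (2)--(5) follow by substituting into these formulas for $\rho_n$ and $\eta_n$, while (1) follows from matching the three regimes together with the observation of \cite{chernov2007dispersing} that $\bar N$ lies within $\cO(1)$ of $N/2$.

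For (6), one parametrizes $M_N$ near its entering configuration: the ``length'' direction is controlled by the initial angle $\eta_1 \sim N^{-\alpha/(\alpha+1)}$, and the ``width'' direction is constrained by requiring that the trajectory experience exactly $N$ cusp bounces, which by pulling back through the corner-series recursion yields width $\sim N^{-(\alpha^2+\alpha+1)/(\alpha+1)}$. Since $x \in M_N$ lies on a dispersing boundary component, $\sin\varphi$ at $x$ is bounded away from zero, giving density $\sim 1$; multiplying then yields (7), since the exponents sum as $\alpha/(\alpha+1) + (\alpha^2+\alpha+1)/(\alpha+1) = (\alpha+1)^2/(\alpha+1) = \alpha+1$. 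Finally, (8) follows from a near-conservation law for cusp trajectories: the quantity $|r-r_f|^\beta \cos\varphi$ is approximately constant along a single corner series by the specific geometry of $\Gamma_1, \Gamma_2$, so the forward iterates of $M_N$ lie in a single strip bounded by its extreme level curves, with $C_N \sim N^{-\alpha}$ dictated by the typical value on $M_N$.

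The main obstacle is the precise asymptotic solution of the discrete corner-series recursion, including the careful matching at the boundaries $N_1, N_3$ between regimes where the linearization fails. This delicate analysis is carried out in detail in \cite{Z2016b} (adapting \cite{chernov2007dispersing}), so the proof reduces to invoking those calculations and checking that the $\beta$-dependent exponents agree with the claimed $\alpha$-exponents through the substitution $\alpha = \beta/(\beta-1)$.
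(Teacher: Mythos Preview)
Your proposal is appropriate: the paper itself does not prove this lemma but simply cites it from \cite{Z2016b}, and your sketch likewise ultimately defers to that reference after outlining the corner-series analysis. Your exposition of the mechanism (the $(s_n,\eta_n)$ recursion, the three-regime decomposition, and the near-invariant $|r-r_f|^\beta\cos\varphi$) is a faithful summary of the argument in \cite{Z2016b}, so there is nothing to add.
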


Note that part (7) above implies \eqref{Mpol}. In order to prove stable convergence, we will use Lemma \ref{Mm} above to prove, in Section \ref{sec:scale}, a refinement of \eqref{Mpol} which is given in the next lemma.
\begin{lemma}[Power law return times]\label{Mm2}
The return time function satisfies
\beq\label{eq:scale parameter}\lim_{n\to\infty} n\tmu\left(x\in M:\cR(x) >n^{\frac{1}{\alpha}}\right)=\frac{2I_1^\alpha}{\beta\mu(M)|\partial Q|}\eeq
where
{$I_1=\int_{0}^{\pi/2}\sin^{\frac{1}{\alpha}}\varphi\,d\varphi.$}
\end{lemma}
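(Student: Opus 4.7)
The plan is to parametrize trajectories in the cusp by an adiabatic invariant and invert the resulting relation to obtain $\tmu(\cR>m)$ asymptotically. For a trajectory bouncing between the walls $z_{1,2} = \pm\beta^{-1} s^\beta$ at horizontal position $s$ with angle $\varphi$ to the tangent, the natural adiabatic invariant for the vertical oscillation is $J \propto \sin\varphi\cdot s^\beta$, approximately conserved throughout a corner series. I will write $J = A_0\sin\varphi\cdot s^\beta$ for some geometric constant $A_0$, whose precise value will not matter because it cancels in the final answer.

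The proof rests on two parallel calculations. First, combining conservation of $J$ with the horizontal drift $ds/dn = -(2 s^\beta/\beta)\cot\varphi$ between consecutive bounces and integrating along the entering and exiting portions of the corner series, I would derive
\[
N(J) \;\sim\; (A_0/J)^{1/\alpha}\,I_1, \qquad J\to 0,
\]
where $I_1 = \int_0^{\pi/2}\sin^{1/\alpha}\varphi\,d\varphi$ emerges from the integral $\int_0^1 u^{1/\alpha}(1-u^2)^{-1/2}\,du$ via the substitution $u=\sin\varphi$ (using $(\beta-1)/\beta = 1/\alpha$). Second, by integrating $d\mu = (2|\partial Q|)^{-1}\sin\varphi\,dr\,d\varphi$ directly over the cusp region $\{(s,\varphi):\sin\varphi\cdot s^\beta < t/A_0\}$ on both sides of the cusp, I obtain
\[
\mu\bigl(\{J<t\}\bigr) \;\sim\; \frac{2 I_1}{A_0^{1/\beta}\,|\partial Q|}\, t^{1/\beta}, \qquad t\to 0,
\]
since $\int_0^{\pi}\sin^{1-1/\beta}\varphi\,d\varphi = 2I_1$.

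The next step uses $T$-invariance of $\mu$: a corner series of length $N$ contains $N$ distinct cusp collisions at approximately the same value of $J$, so the density (in $J$) of \emph{first} cusp collisions equals $\rho(J)/N(J)$, where $\rho(J) = d\mu(\{J'<J\})/dJ \sim \frac{2I_1}{\beta A_0^{1/\beta}|\partial Q|} J^{-1/\alpha}$. The crucial identity $1/\alpha + 1/\beta = 1$ makes the powers of $J$ cancel in $\rho(J)/N(J)$, leaving a $J$-independent integrand. Writing $t_m := A_0 I_1^\alpha/m^\alpha$ for the value of $t$ at which $N(J)=m$, integration gives
\[
\mu\bigl(\{x\in M:\cR(x)>m\}\bigr) \;=\; \int_0^{t_m}\frac{\rho(J)}{N(J)}\,dJ \;\sim\; \frac{2}{\beta A_0|\partial Q|}\,t_m \;=\; \frac{2 I_1^\alpha}{\beta|\partial Q|}\, m^{-\alpha},
\]
with $A_0$ dropping out as anticipated. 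Dividing by $\mu(M)$ and substituting $m = n^{1/\alpha}$ yields \eqref{eq:scale parameter}.

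The main obstacle is justifying the leading-order adiabatic formula $N(J) \sim (A_0/J)^{1/\alpha} I_1$, since Lemma \ref{Mm} provides only the scalings of $s_n, \eta_n, \rho_n$, not their leading constants. Upgrading these to a sharp asymptotic demands careful control of the corner-series dynamics, most delicately through the turning period $n\in[N_1,N_3]$ where the adiabatic approximation is weakest; the entering and exiting periods are cleanest, and one expects the convergence of the integral at $u\to 1$ to absorb the turning-period contribution into the single constant $I_1$. The cancellation of the normalization $A_0$ in the final answer is a fortunate feature, since it means only the functional form $J\propto \sin\varphi\cdot s^\beta$ needs to be identified rigorously, not its precise coefficient.
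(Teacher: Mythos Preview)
Your approach is essentially the paper's: the paper too uses the near-invariant $H_n=s_n^\beta\sin\eta_n\approx C_N$ (your $J/A_0$), derives $C_N=I_1^\alpha N^{-\alpha}$ by the discrete substitution $v_n=\int_0^{\eta_n}(\sin u)^{1/\alpha}\,du$ (which satisfies $v_{n+1}-v_n=2H_n^{1/\alpha}+\text{error}$ and is precisely the rigorous version of your ODE $d\varphi/dn=2s^{\beta-1}$), and then integrates $d\mu$ over the region $\{s^\beta\sin\varphi<C_N\}$ to obtain $\mu(x\in\cM:\cR\ge N)=\frac{2I_1 C_N^{1/\beta}}{\alpha|\partial Q|}+\text{error}$. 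The only cosmetic difference is the conversion to $\tmu(\cR>m)$: instead of your density formula $\rho(J)/N(J)$, the paper uses the equivalent tail-sum identity $\mu(x\in\cM:\cR\ge N)=\sum_{m\ge N}\mu(x\in M:\cR\ge m)+O(N^{-\alpha})$ and differentiates in $N$.
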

{Note that if $I_f=I_1$} in Theorem \ref{thm:1}, then the limit above is exactly $\sigma^\alpha$.

\subsection{H\"{o}lder continuity of $\tf$}

We divide $\cM$ into horizontal homogenous strips as introduced by Sinai.  More precisely, one divides $\cM $ into countably many sections (called
\emph{homogeneity strips}) defined by
$$
	\bH_k:=\{(r,\varphi)\in \cM \colon \pi/2-k^{-2}<\varphi <\pi/2-(k+1)^{-2}\},
$$
and
$$
	\bH_{-k}:=\{(r,\varphi)\in \cM \colon -\pi/2+(k+1)^{-2}<\varphi < -\pi/2+k^{-2}\},
$$
for all $k\geq k_0$ and \beq \label{bbH0}
	\bH_0:=\{(r,\varphi)\in \cM \colon -\pi/2+k_0^{-2}<\varphi <
	\pi/2-k_0^{-2}\}.
\eeq Here $k_0 \geq 1$ is a fixed (and usually large) constant.
For any $N\geq 1$, an  unstable curve $W\subset FM_N$ is called a homogeneous unstable curve for the induced map $F$,
if $T^{-k}W$ is contained by a single homogeneous strip for any given $k=0,1,\cdots, N$.
Next, let us discuss the regularity of  unstable curves for the induced map.
We fix a constant $C_b$, and denote $\cW^u_H$ as the collection of
all homogeneous unstable curves with curvature bounded by $C_b$
for the induced system $(F,M,\tmu)$.
We have the following result which says that curves in $\cW^u_H$  have uniform distortion bounds.

\begin{proposition}\label{distortionbound}
Let $W\in\cW^u_H$ be an unstable curve. Then there exists $C=C(Q)>0$ such that
\beq\label{distortion}|\ln J_WF^{-1}(x)-\ln J_WF^{-1}(y)|\leq C d_W(x,y)^{\gamma_0},\eeq for any $\gamma_0\in (0,1/3)$.
\end{proposition}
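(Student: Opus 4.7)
The plan is to decompose the Jacobian of $F^{-1}$ along the single-step orbit under $T^{-1}$ and then reduce the proposition to a sum of classical one-step distortion bounds for the billiard map. On each cell $M_N$ we have $F=T^{N+1}$, so for $x\in W\subset FM_N$ and $W_k:=T^{-k}W$,
\[
\ln J_W F^{-1}(x)-\ln J_W F^{-1}(y)=\sum_{k=0}^{N}\Bigl(\ln J_{W_k}T^{-1}(T^{-k}x)-\ln J_{W_k}T^{-1}(T^{-k}y)\Bigr).
\]
Because $W$ is a homogeneous unstable curve for $F$, each $W_k$ is contained in a single homogeneity strip $\bH_{j_k}$ by definition of $\cW^u_H$, so the usual $T$-distortion machinery of Chernov--Markarian applies on each $W_k$.

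Second, I invoke the standard single-step distortion bound for dispersing billiards on a homogeneous unstable curve $V\subset \bH_j$ with curvature bounded by $C_b$:
\[
|\ln J_V T^{-1}(u)-\ln J_V T^{-1}(v)|\le C\,d_V(u,v)^{1/3}.
\]
The exponent $1/3$ is sharp here: it reflects that within $\bH_j$ the angle factor $\cos\varphi$ varies by at most a bounded factor while $|\partial\varphi|$ is $\asymp j^{-3}$ and $\cos\varphi\asymp j^{-2}$, and it is precisely this Hölder exponent that produces the restriction $\gamma_0<1/3$ in the statement. This single-step bound is entirely local and does not see the cusp geometry.

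The core task is therefore to sum $\sum_{k=0}^{N}d_{W_k}(T^{-k}x,T^{-k}y)^{1/3}$ and show it is controlled by $d_W(x,y)^{\gamma_0}$ uniformly in $N$. Here I would split the sum into the ``outside-cusp'' and ``inside-cusp'' regimes. For indices $k$ for which $T^{-k}W$ has already exited the cusp, the induced map $F$ is uniformly hyperbolic on $M$ (as stated in the excerpt and proved in \cite{Z2016b}), so $d_{W_k}(T^{-k}x,T^{-k}y)\le C\Lambda^{-k}d_W(x,y)$ for some $\Lambda>1$, and the corresponding sum is dominated by a geometric series in $d_W(x,y)^{1/3}$. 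For indices $k$ with $T^{-k}W\subset T^{N+1-k}M_N$ still inside the cusp, I would use Lemma~\ref{Mm}(8) to confine all $W_k$ inside the single strip $H_N$ of length $\sim N^{-\alpha/(\alpha+1)}$, together with parts (2)--(5) and (6) of Lemma~\ref{Mm} to control the local angle factors and the total length $d_{W_k}$.

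The main obstacle is precisely this inside-cusp summation: there are $N$ terms and $N$ can be arbitrarily large, while $T^{-1}$ is only weakly contracting on cusp iterates, so a naive bound would blow up. The argument requires trading a small power of the contraction against a small power of $d_W(x,y)$ using the scale $d_W(x,y)\lesssim N^{-\alpha/(\alpha+1)}$ from Lemma~\ref{Mm}(6): writing the sum as $N\cdot(\text{strip dimension})^{1/3}\lesssim N^{1-\alpha/(3(\alpha+1))}$ and comparing with $d_W(x,y)^{\gamma_0}$, one sees that any $\gamma_0<1/3$ is absorbed by choosing the split point between the two regimes appropriately. It is this bookkeeping---using the strip estimate to turn a potentially divergent sum into a bound of order $d_W(x,y)^{\gamma_0}$ for every $\gamma_0<1/3$---that constitutes the heart of the proof, the rest being an application of classical billiard distortion theory.
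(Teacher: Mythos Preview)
Your reduction to a sum of single-step distortion bounds has a genuine gap: the constant in the one-step bound
\[
|\ln J_V T^{-1}(u)-\ln J_V T^{-1}(v)|\le C\,d_V(u,v)^{1/3}
\]
is \emph{not} uniform over the cusp iterates.  The Chernov--Markarian bound requires curvature bounded below, but here $\cK_n\sim r_n^{\beta-2}\to 0$ as the orbit goes deep into the cusp (this is exactly the ``flat point'').  If you unpack the one-step derivative you find $\bigl|\tfrac{d}{dx_n}\ln J_{W_n}T^{-1}(x_n)\bigr|\sim 1/r_n$, which blows up, and the whole problem is to show that the contraction $\|dx_n\|/\|dx\|$ (given by the precise expansion factors in \cite{Z2016b}) beats this growth.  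Your sentence ``this single-step bound is entirely local and does not see the cusp geometry'' is therefore exactly backwards.

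Even granting a uniform step bound, your bookkeeping fails: the quantity $N\cdot(\text{strip length})^{1/3}\sim N^{1-\alpha/(3(\alpha+1))}$ has a positive exponent for all $\alpha\in(1,2)$ and diverges with $N$, so it cannot be dominated by any positive power of $d_W(x,y)$.  Also, your ``outside-cusp / inside-cusp'' split is vacuous for a single application of $F^{-1}$: on $W\subset FM_N$ all but the two endpoint steps of $T^{-1},\dots,T^{-(N+1)}$ lie inside the cusp, so there is no geometric series to exploit.  The paper's proof instead differentiates the explicit Jacobian formula, introduces a threshold $\hat N$ where the unstable slope crosses $1$, uses two different representations of the Jacobian on $[1,\hat N]$ and $[\hat N,N-\hat N]$, and sums the resulting derivative bounds against the contraction factors from \cite{Z2016b}~Proposition~4; the final step trades the leftover power of $N$ against $|W|\sim(\cos\varphi_1)^{3/2}$ via $\cos\varphi_1=\cO(N^{-\beta/(2\beta-1)})$, which is where the restriction $\gamma_0<1/3$ actually arises.
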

The proof of this proposition can be found in  Section \ref{sec:7}.

Next we study the H\"older continuity of the induced function $\tf$.  As before, for $\gamma\in (0,1)$, let $\cH_{\gamma}$	be the set of $\gamma$-H\"older functions. We denote the H\"older norm by
$$\|f\|_{\gamma}:= \sup_{ x, y}\frac{|f(x)-f(y)|}{d(x,y)^{\gamma}}<\infty,
$$
where $d(x,y)$ denotes distance.
For every $f\in \cH_{\gamma}$ we also define
\beq \label{defCgamma}
\|f\|_{C^{\gamma}}:=\|f\|_{\infty}+\|f\|_{\gamma}.
\eeq

For future reference, we record here a bound from \cite[Theorem 3]{CZ09}. For  \mbox{$f, g\in \cH_{\gamma}$,} and any integer $k$, the correlations of $f$ and $g\circ   F^k$ satisfy:
\beq\label{correT}\text{Cov}(f,g\circ   F^k):=|\tmu(f\cdot g\circ   F^k)-\tmu(f)\tmu(g)|\leq C \|f\|_{C^{\gamma}} \|g\|_{C^{\gamma}}\vartheta^k\eeq
where $C>0$ and $\vartheta\in (0,1)$ are constants.


We now estimate the H\"{o}lder norm of $\tf$, for any $f\in \cH_{\gamma}$.
\begin{lemma}\label{Holdertf}
For $\gamma\in (0,1)$, $f\in \cH_{\gamma}$, $N\geq 1$, and $x,y\in M_N$, the induced function $\tf$ has a H\"{o}lder norm given by the following:
\beq\label{holdertf}
|\tf(x)-\tf(y)|\leq C\|f\|_{\gamma}  N^{1+\gamma}d(x,y)^{\frac{\gamma}{\beta}}
\eeq
where $C=C(\gamma)>0$ is a constant.
\end{lemma}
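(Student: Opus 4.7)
The plan is to reduce the bound on $|\tf(x)-\tf(y)|$ to a sum of H\"older increments of $f$ and then leverage the structural estimates of Lemma~\ref{Mm}. Since $\cR\equiv N+1$ on $M_N$, both induced sums have the same number of terms, so
\[
\tf(x)-\tf(y)=\sum_{k=0}^{N+1}\bigl[f(T^k x)-f(T^k y)\bigr],
\]
and by $\gamma$-H\"older continuity of $f$,
\[
|\tf(x)-\tf(y)|\leq \|f\|_\gamma\sum_{k=0}^{N+1}d(T^k x,T^k y)^{\gamma}.
\]
The lemma will therefore follow from a uniform orbit-spread bound of the form $d(T^k x,T^k y)\leq C\,N\,d(x,y)^{1/\beta}$ for $0\leq k\leq N+1$; summation together with the elementary inequality $(N+2)(CN)^\gamma\leq C'N^{1+\gamma}$ then yields the claim.

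The key geometric ingredient for this orbit-spread bound is Lemma~\ref{Mm}(8): every iterate $T^k M_N$ is contained in a single strip $H_N$ defined by $|r-r_f|^\beta\cos\varphi=C_N$ with $C_N\sim N^{-\alpha}$. The transverse width of this strip at angle $\varphi$ is of order $(N^{-\alpha}/\cos\varphi)^{1/\beta}$, and it is precisely this $1/\beta$-exponent, inherited from the flat-point wall equation $z=s^\beta/\beta$ in \eqref{z1s}, that produces the final H\"older exponent $\gamma/\beta$. I would combine the strip confinement with the explicit asymptotics for $\rho_n$, $\eta_n$ and for the width and length of $M_N$ from Lemma~\ref{Mm}(1)--(7), and with the distortion control of Proposition~\ref{distortionbound}, to convert the width bound on $H_N$ into an estimate comparing $d(T^k x,T^k y)$ with $d(x,y)$.

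The principal technical obstacle is converting the soft containment ``$T^k M_N\subset H_N$'' into the quantitative bound $d(T^k x,T^k y)\leq CN\,d(x,y)^{1/\beta}$ uniformly in $k\in\{0,1,\dots,N+1\}$. This calls for a case split along the three segments of the corner series introduced before Lemma~\ref{Mm}. On the turning period $[N_1{+}1,N_3{-}1]$, all angles $\eta_n$ are of order $1$, so the collisions are of standard dispersing type and the bound reduces to familiar hyperbolic control. On the entering and exiting periods the collisions are near-grazing ($\eta_n\ll 1$), and one must track the amplification of transverse distance using the explicit scalings $\rho_n\sim n^{-1/(\alpha+1)}N^{-\alpha/(\alpha+1)}$ and $\eta_n\sim(n/N)^{\alpha/(\alpha+1)}$ from Lemma~\ref{Mm}(3),(5) to verify that the total amplification across the corner series does not exceed the promised factor $N$. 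An alternative I would keep in reserve is to prove the bound inductively in $k$ via step-by-step distortion estimates, summing the resulting increments against the scalings of $\rho_n$.
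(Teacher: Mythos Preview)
Your opening reduction to $\sum_k d(T^kx,T^ky)^\gamma$ is exactly right, and the final summation is what the paper does too. But the route to the orbit-spread bound diverges from the paper's, and there are two concrete gaps.

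First, the source of the $1/\beta$ exponent. You locate it in the transverse width of the strip $H_N$, but the paper obtains it from a more direct mechanism: the H\"older continuity of the single map $T$ at the \emph{first} collision of the corner series, which can be arbitrarily close to tangential with the flat wall. The flat-point geometry \eqref{z1s} gives
\[
d(Tx,Ty)\leq C\,d(x,y)^{1/\beta},
\]
and this is the only place where the $1/\beta$ enters. All subsequent steps are controlled multiplicatively in the ordinary sense. Your strip-width reading would at best produce a diameter bound on $T^kM_N$, not something proportional to $d(x,y)^{1/\beta}$; as you yourself flag, converting containment in $H_N$ into a pairwise estimate is the crux, and nothing you propose actually bridges that gap.

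Second, the tracking of expansion. The paper does not use the containment $T^kM_N\subset H_N$ here at all. It first reduces to $x,y$ lying on a common unstable curve $W\subset M_N$ (harmless, since iterates contract in the stable direction), and then invokes precise per-step expansion factors $1+\lambda_k$ in the $p$-norm from \cite{Z2016b}, Lemmas~17--18 and Proposition~16: roughly $\lambda_k\sim c/k$ across all three segments, with explicit constants. Telescoping these products and converting from $p$-norm to Euclidean norm via the $\cos\varphi$ ratios (and the distortion bound of Proposition~\ref{distortionbound}) gives $d(T^mx,T^my)\leq C\,N\,d(Tx,Ty)$; combined with the H\"older step above, this is $d(T^mx,T^my)\leq C\,N\,d(x,y)^{1/\beta}$. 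So your ``alternative kept in reserve'' is in fact the main line, and the specific input you need is these $\lambda_k$ formulas together with the $p$-norm/Euclidean conversion, not the $\rho_n,\eta_n$ scalings you list. One more point: your claim that the turning period is of ``standard dispersing type'' is misleading --- the angles are bounded away from $0$ there, but the curvature is of order $N^{-(\beta-2)/(\beta-1)}$ and hence small, so the cusp-specific expansion estimates from \cite{Z2016b} are still needed.
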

\begin{proof} For any $N\geq 1$ and $x,y\in M_N$,
$$|\tf(x)-\tf(y)|\leq \sum_{k=0}^{N-1}\|f\|_{\gamma} d(T^k x, T^k y)^{\gamma}.$$
The images $\{T^k(M_N), k=1,\cdots, N-1\}$  stretch in the unstable
direction and shrink in the stable direction, as $k$ increases, thus we can assume that $x, y$ lie on one unstable curve $W\subset M_N$.
We first review a relation between the Euclidean norm and the $p$-norm in the tangent space:

\beq\label{penorm}\frac{\|dx_{1}\|}{\|dx\|}
=\frac{\|dx_{1}\|_p}{\|dx_{}\|_p}\cdot \frac{\cos\varphi_{}}{\cos\varphi_{1}}\cdot\sqrt{\frac{1+(d\varphi_{1}/dr_{1})^2}{1+(d\varphi_{}/dr_{})^2}}\eeq
where $x=(r,\varphi)$, $x_m=T^mx=(r_m,\varphi_m)$, for $m=1,\cdots, N$.

It was shown in \cite{Z2016b}~Lemma 17, that for $k=1,\ldots, N_1$, the unstable manifolds $\{T^kW\}$ at the points $\{T^k x\}$ expand under $T$ by a factor $1+\lambda_k$ in the $\|\cdot\|_p$-norm, with
\beq\label{lambda1}\lambda_k= \frac{\beta/(2\beta-1)}{k+C' \ln k+C''}+o(N^{-1}).\eeq
Similarly, by \cite{Z2016b}~Lemma 18, for $k=N_3,\ldots, N-1$, the unstable manifolds $\{T^kW\}$ at $\{T^k x\}$ are expanded under $T$ by a factor $1+\lambda_k$ in the $\|\cdot\|_p$-norm, with
\beq\label{lambda2}\lambda_k= \frac{(\beta-1)/(2\beta-1)}{k+C' \ln k+C''}+o(N^{-1}).\eeq
Finally, from \cite{Z2016b}~Proposition 16, it follows that for $k=N_1,\ldots, N_3$, the expansion factor $1+\lambda_k$ satisfies the asymptotic $\lambda_k \sim 1/k$.

On the other hand, it follows from \cite{Z2016b}~Proposition 4, that $\lambda_0$ and $\lambda_{N-1}$ can be arbitrarily large, since $x$ or $Fx$
may be asymptotically tangential to $\partial Q$.
For these two iterations with unbounded expansion factors, we instead
use the H\"{o}lder continuity of the original billiard
map $T$ near tangential collisions with the flat point:
\beq\label{holder bound}
d(T x, T y) \leq Cd(x, y)^{\frac{1}{\beta}}
\eeq
for some $C>0$ (see Eqn. (\ref{z1s})).

Let $\hat N$ be as in
 \eqref{eq:nhat2} in the proof of Proposition \ref{distortionbound}.
By using (\ref{lambda1}), (\ref{lambda2}), (\ref{penorm}) and the bounded distortion -- Proposition \ref{distortionbound},
we have for $m\in [1,\hat N]$
$$d(T^mx, T^m y)\leq C_0d(Tx,Ty)\prod_{l=1}^{m}(1+\lambda_l)\cdot\frac{\cos\varphi_l}{\cos\varphi_{l+1}}\leq C_1 m^{\frac{\beta-1}{2\beta-1}}d(Tx,Ty) $$
where $C_0>0$ depends on the distortion bound.

On the other hand, for $m\in [\hat N, N]$,
 \begin{align*}
 d(T^mx, T^m y)&\leq C_0d(Tx,Ty)\prod_{l=1}^{N}(1+\lambda_l)\cdot\frac{\cos\varphi_l}{\cos\varphi_{l+1}}\cdot\sqrt{\frac{1+(d\varphi_{l+1}/dr_{l+1})^2}{1+(d\varphi_{l}/dr_{l})^2}}\\
 &\leq C_1  N^{\frac{\beta-1}{2\beta-1}}N^{\frac{\beta}{2\beta-1}}d(Tx,Ty) \cos\varphi_1=C_1  N d(Tx,Ty) \cos\varphi_1,\end{align*} according to the estimation for expansion factor in \cite{Z2016b}~Proposition 2.

Combining the above facts, we have
\begin{align*}
|\tf(x)-\tf(y)|&\leq \sum_{k=0}^{N}\|f\|_{\gamma} d(T^kx,T^ky)^{\gamma}\\
&\leq \|f\|_{\gamma} C' d(Tx,Ty)^{\gamma}N^{1+\gamma}\\
&\leq \|f\|_{\gamma} C'' N^{1+\gamma}d(x,y)^{\frac{\gamma}{\beta}}.\end{align*}
This implies that $\tf$ has a H\"{o}lder norm of order $N^{1+\gamma}$, and has H\"{o}lder exponent $\frac{\gamma}{\beta}$.
\end{proof}

\subsection{Standard families}
We next review the concept of a standard pair and state a growth lemma. For an unstable curve $W$ and a probability measure $\nu_0$ on the Borel $\sigma$-algebra of $W$, we say that the pair $(W,\nu_0)$ is a \textit{standard pair} if $\nu_0$ is absolutely continuous with respect to the Lebesgue measure, $ m_W$, induced by the curve length, with density function $f(x):=d\nu_0/dm_W$ satisfying
\beq\label{lnholder}
|\ln f(x)-\ln f(y)|\leq C {d}_W(x,y)^{\gamma_0}.
\eeq
 Here  $\gamma_0$ is a fixed H\"{o}lder exponent which appears in the distortion bound, see Proposition \ref{distortionbound}. Also, ${d}_W(x,y)$ is the distance between $x$ and $y$ measured along the smooth curve $W$.

The notion of a standard pair was studied  by Chernov and Dolgopyat in \cite{CD}. In particular, they considered families of standard pairs
 $\cG=\{(W_\aaa, \nu_\aaa)\,:\, \aaa\in \cA\}$ where $\cA\subset [0,1]$.
Let $\cW=\{ W_{\aaa}\,|\, \aaa\in \cA\}$. We call $\cG$ a {\em standard family} if $\cW$ is a measurable foliation of a  measurable subset of $M$,
and there exists a finite Borel measure $\lambda_{\cG}$ on $\cA$, which defines a measure $\nu$
 on $M$  by
\beq\label{cGnm}
   \nu(B):=\int_{\aaa\in\cA} \nu_{\aaa}(B\cap
   W_{\aaa})\,
   d\lambda_{\cG}(\aaa)\hspace{1cm}
   \eeq
for all  measurable sets $B\subset M$. In the following, we denote a standard family by $\cG=(\cW,\nu)$.

Define a function $\cZ$ on standard families, such that
for any standard family $\cG=(\cW,\nu)$,
\beq\label{cZ}
\cZ(\cG):=\frac{1}{\nu(M)}\,\int_{\aaa\in\cA}|W_{\aaa}|^{-1}\,\lambda_{\cG}(d\aaa). \eeq

For any unstable curve $W\in \cW$, any $x\in W$,  and any $n\geq 1$, let $W^{k}(x)$ be the smooth unstable curve  in $F^k W$ that contains $F^k x$.
We define $r_k(x)$ as the minimal distance between $F^k x$ and  the two end points of $W(F^k x))$, measured along $W^k(x)$.
 According to the Growth Lemma in \cite[Lemma 9]{Z2016b}, we know that an $\eps$-neighborhood of the singular set of $F$ has measure of order $\eps^q$, with
\beq\label{q}
 q=\frac{\alpha(\alpha+1)}{\alpha^2+\alpha+1}.
 \eeq
\begin{lemma}[Growth Lemma]\label{growthlemma}
Let $\cG=(\cW, \nu)$ be a standard family such that \mbox{$\cZ(\cG)<\infty$.} Then for any $\eps>0$ and $k\ge 0$,
$$\nu(r_k(x)<\eps)\le C_0\eps^q \cZ(F^k\cG)\leq C_1(\vartheta^{k-1}\cZ(F\cG)+C_2)\eps^q$$
where $C_0>0, C_1>0, C_2>0$ and $\vartheta\in (0,1)$ are constants.
\end{lemma}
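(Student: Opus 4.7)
The plan is to prove the two inequalities separately: the first is a density bound integrated against the pushed-forward standard family $F^k\cG$, while the second is the classical iterated growth estimate for $\cZ$ under $F$ in the Chernov--Dolgopyat standard family formalism \cite{CD}.

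For the first inequality, represent $F^k\cG=(\widehat\cW,\widehat\nu)$ as a standard family with curves $\{\widehat W_\beta\}_{\beta\in\widehat\cA}$, factor measure $\widehat\lambda$, and conditional probability measures $\widehat\nu_\beta$, obtained by cutting the curves in $\cW$ along the singular set of $F^k$ and pushing forward. Since $r_k(x)$ is by definition the distance along $W^k(x)$ from $F^k x$ to $\partial W^k(x)$, pushing $\nu$ forward yields
\beq
\nu(r_k(x)<\eps)=\int_{\widehat\cA}\widehat\nu_\beta\bigl(\{y\in\widehat W_\beta:d_{\widehat W_\beta}(y,\partial\widehat W_\beta)<\eps\}\bigr)\,d\widehat\lambda(\beta).
\eeq
The log-H\"older control \eqref{lnholder} on $d\widehat\nu_\beta/dm$, combined with $\int d\widehat\nu_\beta=1$, gives $d\widehat\nu_\beta/dm\le C/|\widehat W_\beta|$, hence each endpoint-$\eps$-neighborhood has $\widehat\nu_\beta$-measure at most $C\min(1,4\eps/|\widehat W_\beta|)$. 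Since $q\in(0,1)$ and one may assume $\eps<1$, a direct check shows $\min(1,4\eps/|W|)\le 4\eps^q/|W|$ for any $|W|>0$: for $\eps\le|W|/4$ this reduces to $\eps\le\eps^q$, and for $\eps>|W|/4$ both sides exceed $1$ since $|W|\le 4\eps\le 4\eps^q$. Substituting and using $\int|\widehat W_\beta|^{-1}\,d\widehat\lambda=\nu(M)\,\cZ(F^k\cG)$ yields the first inequality.

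For the second inequality I would prove a one-step bound $\cZ(F\cG)\le\vartheta\cZ(\cG)+C$ with $\vartheta\in(0,1)$ and iterate it $k-1$ more times starting from $F\cG$ to obtain $\cZ(F^k\cG)\le C_1'(\vartheta^{k-1}\cZ(F\cG)+C_2')$ by a geometric series. The one-step bound uses uniform hyperbolicity of the induced system from \cite{Z2016b}: each image $FW_\alpha$ decomposes into homogeneous unstable components $\{FW_\alpha^j\}_j$ with $|FW_\alpha^j|\ge\Lambda|W_\alpha^j|$ for some $\Lambda>1$, so the ``long'' pieces contribute at most $\Lambda^{-1}|W_\alpha|^{-1}$ to $\cZ(F\cG)$. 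The ``short'' pieces, created by cuts near the singular set of $F$, contribute a uniformly bounded constant, by the $\eps^q$-bound on the $\eps$-neighborhood of the singular set from \cite[Lemma~9]{Z2016b} combined with the distortion control of Proposition~\ref{distortionbound}.

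The main technical step is controlling the short pieces in the one-step bound: a priori $FW_\alpha$ can be cut into many components, and one must show that their total $\lambda_\cG$-weighted inverse-length contribution is bounded by a constant independent of $\cG$. This is precisely where the cusp geometry enters through the exponent $q<1$ of \eqref{q}, replacing the simpler $O(\eps)$ bound available in the uniformly dispersing setting. With this estimate in hand the remainder of the proof is routine: iteration of the one-step estimate gives the second inequality, and combination with the first completes the chain.
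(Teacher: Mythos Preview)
The paper does not supply its own proof of this lemma: it is stated as a direct citation of \cite[Lemma~9]{Z2016b}, where the Growth Lemma for the induced map $F$ on this class of billiards is established within the Chernov--Dolgopyat standard-family formalism. Your outline is the standard argument in that formalism and is essentially what one finds in \cite{Z2016b} and \cite{CD}.

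Two remarks. First, your argument for the first inequality actually yields the stronger bound $\nu(r_k<\eps)\le C_0\,\eps\,\cZ(F^k\cG)$ with exponent $1$; since $q<1$ and one may take $\eps<1$, the stated $\eps^q$ bound follows a fortiori, so the interpolation $\min(1,4\eps/|W|)\le 4\eps^q/|W|$, while correct, is unnecessary. Second, in the one-step bound for $\cZ$ you invoke ``the $\eps^q$-bound on the $\eps$-neighborhood of the singular set from \cite[Lemma~9]{Z2016b}'' to control the short pieces; but \cite[Lemma~9]{Z2016b} \emph{is} the Growth Lemma being proved, and the $\eps^q$ measure bound is a corollary of it, so this step is circular as written. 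The correct logical input is the \emph{one-step expansion (complexity) estimate} for the singularity set of $F$---a bound on how many homogeneous components a short unstable curve can be cut into, with the sum of inverse expansion factors strictly less than $1$---which in \cite{Z2016b} is established prior to and used in the proof of Lemma~9. With that substitution your sketch is complete.
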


For a fixed large constant $C_\text{prop}>0$ (to be chosen in \eqref{cq} below), any standard family $\cG$ with $\cZ(\cG)<C_\text{prop}$ will be called a {\it{proper}} family. The following was proved in \cite[Theorem 2]{CZ09}. Denote  $F_* \nu$ as the push-forward measure.

 \begin{lemma}[Equidistribution]\label{equidistribution}
If $\cG=(\cW,\nu)$ is a proper family, then for any $g\in \cH_{\gamma}$ with $\gamma\in (0,1)$ and $k\ge 0$,
\beq\label{properg}
|F_*^k\nu(g)-\tilde\mu(g)|\leq C \|g\|_{C^{\gamma}}\vartheta^k.
\eeq
\end{lemma}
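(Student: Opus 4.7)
The plan is to prove the equidistribution bound by the coupling method of Chernov--Dolgopyat, which is exactly the framework the lemma is stated in. The ingredients already available are the distortion bound of Proposition \ref{distortionbound}, the Growth Lemma \ref{growthlemma}, and the fact that the induced system $(F,M,\tmu)$ is uniformly hyperbolic with an invariant foliation of stable and unstable manifolds. The idea is to couple $F_*^k\nu$ with $\tmu$ so that, after $k$ iterations, a fraction of mass at least $1-C\vartheta^k$ has been identified on matched pairs of points lying on a common stable manifold.

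First I would set up a reference proper standard family $\cG^*=(\cW^*,\nu^*)$ equivalent to $\tmu$ (say the one obtained by disintegrating $\tmu$ along unstable manifolds). Because $\cG$ is proper, Lemma \ref{growthlemma} gives a uniform bound $\cZ(F^k\cG)\le C_\text{prop}'$ for all $k\ge 1$, and likewise for $\cG^*$. In particular, for each $k$ the proportion of mass carried by unstable curves shorter than $\eps$ is bounded by $C\eps^q$, independently of $k$. Combined with Proposition \ref{distortionbound}, this ensures that after a single application of $F$, both families consist essentially of long unstable curves with smooth, uniformly log-H\"older densities.

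Next I would run the standard Chernov--Dolgopyat coupling. Fix a finite collection of ``magnet'' rectangles $R_j$ with hyperbolic product structure on which $\tmu$ has positive density. At each step $n$, long enough curves in $F^n\cW$ and in $F^n\cW^*$ that fully cross some $R_j$ are paired via the stable holonomy; on the paired parts, a uniform fraction of the remaining mass can be coupled so that matched points $x,x^*$ satisfy $d(F^{k-n}x,F^{k-n}x^*)\le C\Lambda^{-(k-n)}$ for the hyperbolicity constant $\Lambda>1$. Uniform hyperbolicity and the Growth Lemma guarantee a uniform lower bound $\delta_0>0$ on the coupled fraction at each step, so the uncoupled mass decays as $(1-\delta_0)^n$, which is the source of $\vartheta$.

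Finally I would evaluate $g$. Write
\begin{equation*}
F_*^k\nu(g)-\tmu(g)=\int_{\text{coupled}}\!\!\!\bigl(g(F^k x)-g(F^k x^*)\bigr)\,dP\;+\;\int_{\text{uncoupled}}\!\!\!g\circ F^k\,d(\nu-\nu^*),
\end{equation*}
where $P$ is the coupling measure and matched points $x,x^*$ are on common stable manifolds. On the coupled part, H\"older continuity gives $|g(F^k x)-g(F^k x^*)|\le \|g\|_\gamma\,C\Lambda^{-\gamma k}$. On the uncoupled part, the mass is bounded by $C(1-\delta_0)^k$ and the integrand by $\|g\|_\infty$. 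Setting $\vartheta:=\max\{(1-\delta_0),\Lambda^{-\gamma}\}^{1/2}<1$ gives the claimed bound $C\|g\|_{C^\gamma}\vartheta^k$.

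The main obstacle is the bookkeeping for the coupling at each step: one must verify that the ``magnet'' construction of \cite{CD} applies to the induced map $F$ on $M$, and in particular that its singularity structure (controlled through Lemma \ref{growthlemma} with exponent $q$ as in \eqref{q}) does not destroy the uniform lower bound $\delta_0$ on the coupled fraction. Once this one-step coupling estimate is established, the exponential bound follows by straightforward iteration and the H\"older estimate above.
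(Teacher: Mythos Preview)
The paper does not prove this lemma at all: it simply cites it as \cite[Theorem 2]{CZ09}, noting that the required hypotheses for the induced map $(F,M,\tmu)$ were verified in \cite{Z2016b}. Your proposal correctly sketches the Chernov--Dolgopyat coupling argument that underlies that reference, so there is no gap, but you are supplying a proof where the paper only gives a citation.
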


\section{Preliminary case: the return time function}\label{sec:return}

In this section we prove convergence in distribution to a stable random variable for the normalized partial sums of iterations of the centered return time function; this special case gives insight into the basic ideas behind the main theorem.
Indeed the return time is a special case of an induced function $\tf_0=\cR-\tmu(\cR)$ for
\beq\label{defnf0}f_0:=1-\frac1{\mu(M)}\bI_M\eeq
which satisfies $\mu(f_0)=0$.  One can check that, in general, $\mu(f)=0$ implies $\tmu(\tf)=0$. Note that although $f_0$ is not  H\"older continuous, it is H\"older continuous on a neighborhood of the cusp (at both $r'$ and $r"$), as well as piecewise H\"older, which is good enough for our purposes (see the remark following Theorem \ref{thm:1}).

Denote the Birkhoff sums of an induced function $\tf$, under the induced map $F$, by $S_n  \tf$:
$$
S_n  \tf :=  \tf+ \tf\circ   F+\cdots + \tf\circ   F^{n-1}.
$$

\begin{theorem}[Stable limits for the return time function]\label{thm:2}  Let $\cR$ be the first return time function on $M$. Then
\beq\label{stableR2}
\frac{S_n (\cR - \tmu(\cR))}{\sqrt[\alpha]{n}}\xrightarrow{d}  S_{\alpha,\tilde\sigma_\cR}
\eeq
where $\tilde\sigma_{\cR}^{\alpha}=\frac{2I_1^\alpha}{\beta\mu(M)|\partial Q| }$ and
{$I_1=\int_{0}^{\pi/2}\sin^{\frac{1}{\alpha}}\varphi\,d\varphi.$}
\end{theorem}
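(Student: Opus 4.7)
The plan is to prove convergence of characteristic functions. Write $\tf_0 := \cR - \tmu(\cR)$, so $\tmu(\tf_0)=0$, and let $\chi_n(u) := \tmu(e^{iuS_n\tf_0/n^{1/\alpha}})$. The goal is to show
$$\chi_n(u)\longrightarrow\exp\bigl(-|u\tilde\sigma_\cR|^\alpha(1-i\,\text{sign}(u)\tan(\pi\alpha/2))\bigr)$$
for every $u\in\R$; convergence in distribution then follows from L\'evy's continuity theorem.

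\textbf{Step 1 (one-step Fourier asymptotics).} Using $\tmu(\tf_0)=0$, I first write
$$n\bigl(\tmu(e^{iu\tf_0/n^{1/\alpha}})-1\bigr) = n\,\tmu\bigl(e^{iu\tf_0/n^{1/\alpha}} - 1 - iu\tf_0/n^{1/\alpha}\bigr),$$
and split at a threshold $|\tf_0|\leq\delta n^{1/\alpha}$. The small-jump part is controlled by $(e^{iuy}-1-iuy)=O(y^2)$ together with $\tmu(\tf_0^2\bI_{\tf_0\leq L})\sim L^{2-\alpha}$ (a direct consequence of \eqref{Mpol} since $\alpha<2$), giving an $O(\delta^{2-\alpha})$ contribution that vanishes as $\delta\downarrow 0$. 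For the large-jump part, Lemma \ref{Mm2} rescaled yields $n\tmu(\tf_0>x n^{1/\alpha})\to\tilde\sigma_\cR^\alpha x^{-\alpha}$ for every $x>0$; since $\cR\geq 1$ there is no left tail, so the scaled measures converge to the totally right-skewed L\'evy measure $\alpha\tilde\sigma_\cR^\alpha y^{-\alpha-1}dy$ on $(0,\infty)$. Passing to the limit gives
$$n\bigl(\tmu(e^{iu\tf_0/n^{1/\alpha}})-1\bigr)\longrightarrow\int_0^\infty (e^{iuy}-1-iuy)\,\alpha\tilde\sigma_\cR^\alpha y^{-\alpha-1}dy = -|u\tilde\sigma_\cR|^\alpha(1-i\,\text{sign}(u)\tan(\pi\alpha/2)),$$
with the closed form of the L\'evy integral matching the scale constant $\sigma^\alpha = C\,\Gamma(2-\alpha)\cos(\pi\alpha/2)/(1-\alpha)$ quoted in the introduction. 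Consequently $\bigl(\tmu(e^{iu\tf_0/n^{1/\alpha}})\bigr)^n$ converges to the target limit.

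\textbf{Step 2 (propagation under weak dependence).} It remains to show $\chi_n(u) = \bigl(\tmu(e^{iu\tf_0/n^{1/\alpha}})\bigr)^n + o(1)$. Using $S_n\tf_0 = S_{n-1}\tf_0 + \tf_0\circ F^{n-1}$ and telescoping via the identity
$$\chi_n(u) = \chi_{n-1}(u)\,\tmu(e^{iu\tf_0/n^{1/\alpha}}) + \cov_{\tmu}\bigl(e^{iuS_{n-1}\tf_0/n^{1/\alpha}},\ e^{iu\tf_0/n^{1/\alpha}}\circ F^{n-1}\bigr),$$
an induction reduces matters to summing $n$ covariance errors. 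To exploit the exponential decay of correlations \eqref{correT}, I truncate $\tf_0 = \tf_0^{\leq L_n} + \tf_0^{>L_n}$ with $1\ll L_n\ll n^{1/\alpha}$ (say $L_n = n^{1/\alpha-\eta}$ for small $\eta>0$) and suitable recentering. On the truncated piece, Lemma \ref{Holdertf} bounds the piecewise H\"older norm polynomially in $L_n$, so \eqref{correT} absorbs the polynomial blow-up into the geometric factor $\vartheta^k$, yielding summable covariance errors. The truncated sum has variance of order $n\,L_n^{2-\alpha}$, hence contributes $O(n^{-(1-\alpha/2)\eta})=o(1)$ after division by $n^{2/\alpha}$. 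The tail contribution $S_n\tf_0^{>L_n}/n^{1/\alpha}$ is handled by the standard-family equidistribution Lemma \ref{equidistribution}: visits to cells $M_N$ with $N>L_n$ are rare and approximately Poisson in time, and the resulting sum of large jumps converges separately to an $\alpha$-stable random variable, whose scale parameter (computed again from Lemma \ref{Mm2}) matches $\tilde\sigma_\cR$. A Slutsky-type argument combines the two pieces.

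\textbf{Main obstacle.} The crux is making Step 2 quantitative: $\cR$ is not globally H\"older on $M$, and the piecewise H\"older norm of its truncation at level $L_n$ grows polynomially via Lemma \ref{Holdertf}. Reconciling this polynomial blow-up with the exponential mixing rate in \eqref{correT}, while simultaneously running a point-process approximation on the tail piece via the standard-family machinery (Proposition \ref{distortionbound} and Lemma \ref{equidistribution}), is what forces the careful choice of truncation level $L_n$ and exponent $\eta$. The fine cell-by-cell geometry of Lemma \ref{Mm} supplies exactly the quantitative input needed to close this balance.
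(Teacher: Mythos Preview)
Your Step~1 is standard and fine. The difficulty is Step~2, and here the telescoping does not do what you claim. In the covariance $\cov_{\tmu}(G_k,h\circ F^{k-1})$ with $G_k=\exp(iu\,n^{-1/\alpha}S_{k-1}\tf_0)$, applying \eqref{correT} yields a bound $C\|G_k\|_{C^{\gamma}}\|h\|_{C^{\gamma}}\vartheta^{k-1}$. Lemma~\ref{Holdertf} controls the H\"older norm of a single truncated $\tf_0^{\leq L_n}$ on each cell, but $G_k$ is built from $\tf_0\circ F^j$ for $j=0,\ldots,k-2$, and the H\"older norm of $\tf_0\circ F^j$ as a function of $x$ picks up the expansion of $F^j$, which is exponential in $j$ by hyperbolicity of $F$. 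Thus $\|G_k\|_{C^{\gamma}}$ grows like $\Lambda^{\gamma k}$, and summability of $\sum_k\|G_k\|_{C^{\gamma}}\vartheta^{k-1}$ would require $\Lambda^{\gamma}\vartheta<1$, which is neither stated nor derivable from \eqref{correT} alone. Your phrase ``\eqref{correT} absorbs the polynomial blow-up into the geometric factor $\vartheta^k$'' conflates the single-step norm $L_n^{1+\gamma}$ with the cumulative norm $\|G_k\|_{C^{\gamma}}$; only the former is polynomial.

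In fact the last paragraph of your Step~2 already contains the correct mechanism and renders the telescoping unnecessary. If $n^{-1/\alpha}S_n\tf_0^{\leq L_n}\to 0$ in $L^2$ (this is Lemma~\ref{LHzero}, proved via the covariance bounds of Proposition~\ref{boundedcovtf}) and the large-jump piece converges to the stable law via Poisson approximation, then Slutsky finishes; the identity $\chi_n=(\tmu(e^{iu\tf_0/n^{1/\alpha}}))^n+o(1)$ is never needed. The paper takes exactly this route: it shows that the empirical point process $\sum_{k=1}^n\delta_{X_{n,k}}$, $X_{n,k}=n^{-1/\alpha}\cR\circ F^k$, converges to a Poisson random measure with intensity $\alpha\tilde\sigma_{\cR}^{\alpha}x^{-1-\alpha}dx$ (Proposition~\ref{prop:1}), with asymptotic independence manufactured by Bernstein's big-block/small-block scheme. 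The mixing input is not a two-point bound with fixed gap but the $q$-point decorrelation~\eqref{eq:condition D_r(u_n)} across a growing gap, together with the anti-clustering estimate~\eqref{eq:condition D_r(u_n)2}, both supplied by Lemma~\ref{boundedcov}. Your remark that ``visits to cells $M_N$ with $N>L_n$ are rare and approximately Poisson in time'' is precisely this argument; promote it from a side comment to the main line of proof and drop the characteristic-function recursion.
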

Note that {$I_1 =I_f$} in the special case where one takes $f=1-\frac{1}{\mu(M)}\bI_M$.


In the rest of the section we prove the above Theorem \ref{thm:2}, up to Lemma \ref{Mm2} which calculates the scale parameter  and which requires further technical calculations (in actuality, we  also refer to Lemma \ref{LHzero} in the proof of Theorem \ref{thm:2}, but a simplified version of this lemma for return times easily follows from the arguments in this section-- we omit the details).

From a probabilistic viewpoint, the two properties that lead to such convergence are (a) the power law tails of the return time function and (b) the  exponential decay of correlations for return time functions.
Once these two properties are established, one obtains a purely probabilistic Poisson-type convergence (Subsection \ref{sec:poisson}) which then leads to stable convergence.
 Lemma \ref{Mm2} establishes (a), while Lemma \ref{boundedcov} below will give (b).

Fix a finite union of open intervals $\cI=\cup (a_n,b_n)$.   
Our correlation bounds will depend on sets of the form $$A_{n,j}:=\{x\in M:\frac{1}{\sqrt[\alpha]{n}}\cR\circ F^{j} \in \cI^c\}.$$
     \begin{lemma}[Exponential decay of correlations for $q$-point marginals]\label{boundedcov}
For every finite union of open intervals $\cI\subset(0,\infty)$, there is a constant $C>0$  and $\theta\in(0,1)$ such that 
\beq\label{eq:condition D_r(u_n)}
\tmu({A}_{n,1}\cap{\cdots}\cap {A}_{n,q}\cap{A}_{n,q+k+1}\cap\cdots\cap{A}_{n,2q+k})-\(\tmu({A}_{n,1}\cap\cdots\cap {A}_{n,q})\)^2\leq C \theta^k
\eeq
for all $k,n,q\in\N$ satisfying $2q+k\le n$.
Also, there exists $\varepsilon>0$ such that for all $1\le i<j\le n$
\beq\label{eq:condition D_r(u_n)2}
\tmu({A}_{n,i}^c\cap{A}_{n,j}^c)\leq o\(\frac1{n^{1+\varepsilon}}\).
\eeq
     \end{lemma}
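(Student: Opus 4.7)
The strategy is to reduce both statements to correlations of indicator functions on unions of return-time cells. By the $F$-invariance of $\tmu$ and the identity $\chi_{A_{n,j}}=\chi_{B_n}\circ F^{j}$, where $B_n:=\{y\in M:\cR(y)/n^{1/\alpha}\in\cI^c\}=\bigcup_{N\in\cN_n}M_N$, the left-hand side of \eqref{eq:condition D_r(u_n)} equals $\cov(G,G\circ F^{q+k})$ with $G:=\prod_{j=1}^{q}\chi_{B_n}\circ F^{j}$. The two main tools are the exponential decay of correlations for H\"older observables \eqref{correT} and the equidistribution of proper standard families (Lemma \ref{equidistribution}), together with the Growth Lemma (Lemma \ref{growthlemma}) to control the small-scale structure of cell boundaries.

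For part (1), I approximate the indicator $G$ by H\"older functions. First truncate: replace $B_n$ by $B_n\cap\{\cR\le L\}$, losing $\tmu$-mass at most $CqL^{-\alpha}$ by Lemma \ref{Mm}(7); the truncated $G_L$ is the indicator of a union of finitely many cylinders $\bigcap_{j=1}^{q}F^{-j}(M_{N_j})$, $N_j\le L$, whose boundary lies in (pre-)images of the singular set together with cell boundaries. Smoothing $G_L$ by convolution with a kernel of width $\eps$ yields H\"older sandwich functions $G_{L,\eps}^{\pm}$ with $\tmu(G_{L,\eps}^{+}-G_{L,\eps}^{-})\le CL^{d}\eps^{\kappa}$ (by Lemma \ref{growthlemma}, where $\kappa>0$ is the exponent of \eqref{q} and $d>0$ accounts for the cylinder boundaries), and with $\|G_{L,\eps}^{\pm}\|_{C^{\gamma}}\le C\eps^{-\gamma}$. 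Applying \eqref{correT} to $G_{L,\eps}^{\pm}$ gives $|\cov(G_{L,\eps}^{\pm},G_{L,\eps}^{\pm}\circ F^{q+k})|\le C\eps^{-2\gamma}\vartheta^{k}$. Choosing $L=\vartheta^{-ak}$ and $\eps=\vartheta^{bk}$ with suitable small $a,b>0$ balances the three error terms $L^{-\alpha}$, $L^{d}\eps^{\kappa}$, and $\eps^{-2\gamma}\vartheta^{k}$, yielding $|\cov(G,G\circ F^{q+k})|\le C\theta^{k}$ for some $\theta\in(0,1)$.

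For part (2), set $m:=j-i$; invariance gives $\tmu(A_{n,i}^{c}\cap A_{n,j}^{c})=\tmu(B_{n}^{c}\cap F^{-m}B_{n}^{c})$, and since $\cI\subset(0,\infty)$ is bounded, $B_{n}^{c}=\bigcup_{N\in n^{1/\alpha}\cI}M_{N}$ is a union of $O(n^{1/\alpha})$ cells with $\tmu(B_{n}^{c})\sim C/n$ by Lemma \ref{Mm}(7). For $m\ge C_{0}\log n$ with $C_{0}$ sufficiently large, the same H\"older approximation of $\chi_{B_{n}^{c}}$ (no truncation needed) combined with \eqref{correT} yields $|\cov(\chi_{B_{n}^{c}},\chi_{B_{n}^{c}}\circ F^{m})|=O(n^{-2})$ after balancing $\eps$, hence $\tmu(B_{n}^{c}\cap F^{-m}B_{n}^{c})\le\tmu(B_{n}^{c})^{2}+O(n^{-2})=O(n^{-2})$. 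For small $m<C_{0}\log n$, the strip estimate Lemma \ref{Mm}(8), the cell-size estimates Lemma \ref{Mm}(6)--(7), and bounded distortion (Proposition \ref{distortionbound}) yield a direct quasi-renewal bound $\tmu(F^{-m}B_{n}^{c}\mid M_{N})\le C/n$ uniformly in $N\in\cN_{n}$, so $\tmu(B_{n}^{c}\cap F^{-m}B_{n}^{c})\le C\tmu(B_{n}^{c})/n=O(n^{-2})$. Since $n^{-2}=o(n^{-1-\varepsilon})$ for any $\varepsilon\in(0,1)$, the claim follows.

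The main obstacle is the short-gap regime $m=O(\log n)$ in part (2): equidistribution is not yet applicable because the standard family on $\tmu|_{M_N}$ requires order $\log n$ iterations to become proper, so one must establish the quasi-renewal bound $\tmu(F^{-m}B_n^{c}\mid M_N)\le C/n$ directly from the geometry of cusp excursions. A secondary challenge is the simultaneous optimization of the truncation level $L$ and the smoothing scale $\eps$ in part (1), where three competing error terms must be balanced against the exponential rate $\vartheta^k$.
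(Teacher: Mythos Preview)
Your approach diverges from the paper's in both parts, and while part~(2) could be made to work with more effort, part~(1) has a genuine gap.

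\textbf{Part (1).} The paper's proof is a one-line reduction: since $\chi_{A_{n,0}}$ is constant on each cell $M_N$, it lies in the class of observables to which Theorem~4 of \cite{CZ09} applies (piecewise H\"older with discontinuities on the singular set of $F$), and that theorem directly gives an exponential bound for multiple correlations of the form \eqref{eq:condition D_r(u_n)} with a constant $C$ independent of $q$. Your smoothing approach attempts to reduce everything to the two-point bound \eqref{correT}, but the lemma requires $C$ uniform in $q$ (and in the application to Proposition~\ref{prop:1}, $q$ ranges up to order $n$). Your truncation error already carries an explicit factor $q$, and the smoothing error $\tmu(G_{L,\eps}^{+}-G_{L,\eps}^{-})$ is the measure of an $\eps$-neighborhood of $\bigcup_{j=1}^{q}F^{-j}(\partial B_{n,L})$, a set whose complexity grows with $q$; you do not track this dependence, and your choices $L=\vartheta^{-ak}$, $\eps=\vartheta^{bk}$ are made independently of $q$. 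Without controlling these factors you cannot conclude $|\cov(G,G\circ F^{q+k})|\le C\theta^{k}$ with $C$ fixed.

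\textbf{Part (2).} The paper avoids your two-regime split entirely by a single application of the Growth Lemma valid for every $k\ge 1$. Writing $\cW_\tau:=\cup_{m\ge\tau n^{1/\alpha}}M_m$ (so $A_{n,0}^{c}\subset\cW_\tau$ for suitable $\tau$), one foliates $\cW_\tau$ into unstable curves to obtain a standard family $\cG_\tau$ with $\cZ(F\cG_\tau)\le C n^{1/(\alpha+1)}$ (from the cell-width estimate in Lemma~\ref{Mm}(6)). The target set $\cW_\sigma$ is contained in $\{r_0<\eps\}$ with $\eps\sim n^{-(\alpha^2+\alpha+1)/(\alpha(\alpha+1))}$, hence $\eps^{q}\sim n^{-1}$ with $q$ as in \eqref{q}. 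Lemma~\ref{growthlemma} then gives, for every $k\ge 1$,
\[
\tmu(\cW_\tau\cap F^{-k}\cW_\sigma)\;\le\; C\,\tmu(\cW_\tau)\bigl(\vartheta^{k-1}n^{1/(\alpha+1)}+C'\bigr)\eps^{q}\;\le\; C\vartheta^{k-1}n^{-\frac{2\alpha+1}{\alpha+1}}+Cn^{-2},
\]
and since $\tfrac{2\alpha+1}{\alpha+1}>1$ this is already $o(n^{-1-\varepsilon})$ even at $k=1$. Thus the ``quasi-renewal'' bound you identify as the main obstacle is unnecessary: the Growth Lemma, applied after one iterate of $F$ to control $\cZ$, handles the short-gap regime automatically. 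Your large-$m$ argument via H\"older approximation is correct in spirit but superfluous once this is in place.
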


\begin{proof}
Note that  $\bI_{A_{n,j}}=\bI_{A_{n,0}}\circ F^j$, and $\bI_{A_{n,0}}\in \cH_{1}$, as it is constant on each level set of  $\cR$. Thus we can apply  \cite{CZ09}-- Theorem 4 for the  first part of this lemma, Ineq. (\ref{eq:condition D_r(u_n)}); the conditions for Theorem 4 of \cite{CZ09} were checked in \cite{Z2016b}. Here we only need to prove (\ref{eq:condition D_r(u_n)2}).

Denote for $\tau>0$, $$\cW_\tau:=\cup_{m=[\tau n^{1/\alpha}]}^{\ff}M_m.$$
Note that $\tilde\mu(\cW_\tau)\sim n^{-1}$.
For any  set $M_m$ in $\cW_\tau$, we foliate it into unstable curves $\{W_\aaa\}$  that stretch completely from one side to the other.  Then by introducing a factor measure on the index of these curves, we define a standard family, denoted as $\cG_\tau=(\cW_\tau,\nu_\tau)$ with the factor measure denoted as $\lambda_{\cG_\tau}$ and $\nu_\tau:=\tilde\mu|_{\cW_\tau}/\tilde\mu(\cW_\tau)$.
One can check that $\cG_\tau$ is a standard family. According to Lemma \ref{Mm}, {$M_m$ and $FM_m$ are strips} that have length $\sim m^{-\frac{\alpha}{\alpha+1}}$, width $\sim m^{-\frac{\alpha^2+\alpha+1}{\alpha+1}}$, and density approximately $1$.
Using the width and density we obtain
\begin{align*}
\cZ(F\cG_\tau)
&=\int_{\cA}|FW_{\aaa}|^{-1}\,\lambda_{\cG_\tau}(d\aaa)\\
&\le C \tilde\mu(\cW_\tau)^{-1}\cdot \sum_{m=\tau n^{1/\alpha}}^{\ff}m^{-\frac{\alpha^2+\alpha+1}{\alpha+1}}\\
&\le C_0  n {\(\frac{1}{n}\)^{\frac{\alpha}{\alpha+1}}}.
\end{align*}
Therefore Lemma \ref{growthlemma} implies that
$$\nu_\tau(r_k(x)<\eps)\leq C_1(\vartheta^{k-1}\cZ(F\cG_\tau)+C_2)\eps^q\leq C_3(\vartheta^{k-1}n^{\frac{1}{\alpha+1}}+C_2)\eps^q$$
which for $k\ge 1$, by the definition of $\nu_\tau$, is equivalent to
\beq\label {Mmrn}\tilde\mu((r_k(x)<\eps)\cap \cW_\tau)\leq  C_3\tilde\mu(\cW_\tau)(\vartheta^{k-1}n^{\frac{1}{\alpha+1}}+C_2)\eps^q.\eeq

We use properties of standard families to finish the proof. For any $\sigma>0$, we know that unstable curves in $\cW_\sigma$ have width less than $\eps= C_4 n^{-\frac{\alpha^2+\alpha+1}{\alpha(\alpha+1)}}$. Thus $$\cW_\sigma\subset \{x:r_0(x)<C_4 n^{-\frac{\alpha^2+\alpha+1}{\alpha(\alpha+1)}}\}$$ which implies that $$\eps^q=(C_4n^{-\frac{\alpha^2+\alpha+1}{\alpha(\alpha+1)}})^{\frac{(\alpha+1)\alpha}{\alpha^2+\alpha+1} }=\cO(n^{-1}).$$
Moreover, \eqref{Mmrn} now gives us
\begin{align}\label{bound2}\nn\tilde\mu(F^{-k}\cW_\sigma\cap \cW_\tau)&\leq \tilde\mu((r_k(x)<\eps)\cap \cW_\tau)\\
&\nn\leq  C_3\tilde\mu(\cW_\tau)(\vartheta^{k-1}n^{\frac{1}{\alpha+1}}+C_2)n^{-1} \\
&\leq C_5\vartheta^{k-1}n^{-\frac{2\alpha+1}{\alpha+1}}+C_6n^{-2}\end{align}
which proves \eqref{eq:condition D_r(u_n)2}.
\end{proof}

\subsection{Poisson convergence for dependent arrays}\label{sec:poisson}
Poisson convergence is the main probabilistic apparatus that allows one to prove convergence to a stable law.
Here we state a Poisson convergence result which is a variant of \cite[Thm 4.1]{adler1978weak} (see also \cite[Thm 5.7.1]{leadbetter1983extremes}). Specifically, it concerns the convergence, to empirical Poisson point processes,  of empirical measures corresponding to stationary triangular arrays of random variables.

It is often useful to think of empirical point processes as integer-valued random measures. Recall that a Poisson random measure or Poisson point process $N(d\lambda)$ with  intensity measure $d\lambda$ (also called the mean measure or control measure) is an independently scattered random measure satisfying, for all Borel sets $B$, $N(\lambda(B))\dsim\Pois(\lambda(B))$ so that
$$\P(N(\lambda(B))=k)=e^{-\lambda(B)}\frac{\lambda(B)^k}{k!}.$$

We start with a lemma due to \cite{kallenberg1973characterization}.
\begin{lemma}[Poisson random measure characterization]
Suppose $(\nu_n)$ is a sequence of random empirical measures on $\R^+$ and that $N(d\lambda)$ is a Poisson random measure on $\R^+$. If for any finite union of open intervals $\cI$ with $\lambda(\cI)<\ff$
\beq\label{ppp1}
\lim_{n\to\ff}\P(\nu_n(\cI)=0)= \exp(-\lambda(\cI))
\ee
and
\beq\label{ppp2}
\lim_{n\to\ff} \E \nu_n(\cI)=\lambda(\cI),
\ee
then we have the convergence in distribution $\nu_n\Rightarrow N(d\lambda)$ under the vague topology on measures (which makes the space of measures a Polish space).
\end{lemma}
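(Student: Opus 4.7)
The plan is to verify the two standard ingredients for convergence in distribution of random measures under the vague topology, namely (i) tightness of the sequence, and (ii) agreement of subsequential limits with $N(d\lambda)$, and then invoke Kallenberg's characterization of simple point processes via avoidance probabilities --- indeed the lemma is essentially this characterization specialized to $\R^+$ with the semiring of finite unions of open intervals.

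First I would establish tightness of $(\nu_n)$. Since the space of Radon measures on $\R^+$ with the vague topology is Polish, relative compactness reduces to showing that for every compact $K\subset\R^+$ and every $\eps>0$ there exists $m=m(K,\eps)$ with $\sup_n \P(\nu_n(K)>m)<\eps$. Choose a finite union of open intervals $\cI$ containing $K$ with $\lambda(\cI)<\ff$. Markov's inequality together with (\ref{ppp2}) yields
$$\P(\nu_n(K)>m)\le \P(\nu_n(\cI)>m)\le \frac{\E\nu_n(\cI)}{m} \longrightarrow \frac{\lambda(\cI)}{m},$$
which is arbitrarily small for $m$ large. Prohorov's theorem thus furnishes, along any subsequence, a further vaguely convergent subsubsequence $\nu_{n_k}\Rightarrow\tilde\nu$.

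Second, I would identify the subsequential limit $\tilde\nu$ with $N(d\lambda)$. Because the intensity $\lambda$ is diffuse (in the intended application, a multiple of Lebesgue measure on $\R^+$), finite unions of open intervals $\cI$ satisfy $\lambda(\partial\cI)=0$, making $\{\mu:\mu(\cI)=0\}$ a continuity set of the law of $N(d\lambda)$. The Portmanteau theorem combined with (\ref{ppp1}) then yields
$$\P(\tilde\nu(\cI)=0)=\lim_{k\to\ff}\P(\nu_{n_k}(\cI)=0)=\exp(-\lambda(\cI))=\P(N(\cI)=0)$$
on the dissecting semiring of finite unions of open intervals. Since this semiring generates the Borel $\sigma$-algebra of $\R^+$ and $N(d\lambda)$ is simple almost surely (being Poisson with diffuse intensity), Kallenberg's characterization theorem forces $\tilde\nu\overset{d}{=}N(d\lambda)$. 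Uniqueness of subsequential limits then gives $\nu_n\Rightarrow N(d\lambda)$.

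The main obstacle, I expect, is the Portmanteau step: strictly speaking one must justify that $\{\mu:\mu(\cI)=0\}$ is a continuity set of the limit, which requires knowing a priori that $\tilde\nu$ puts no mass on $\partial\cI$. The standard workaround is an approximation argument --- any open interval can be sandwiched between open intervals $\cI_-\subset\cI\subset\cI_+$ with arbitrarily close $\lambda$-measure, and lower/upper semicontinuity of $\mu\mapsto\mu(\cI_+)$ and $\mu\mapsto\mu(\bar\cI_-)$ in the vague topology produces matching Portmanteau bounds sandwiching $\P(\tilde\nu(\cI)=0)$; letting the sandwich collapse completes the identification, and from there invoking Kallenberg's theorem is routine.
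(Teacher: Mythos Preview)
The paper does not prove this lemma at all; it is stated as a known result ``due to \cite{kallenberg1973characterization}'' and invoked without argument. So there is no paper proof to compare against.

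Your sketch is essentially the standard proof of Kallenberg's convergence criterion and is correct in outline: tightness from (\ref{ppp2}) via Markov, then identification of subsequential limits through their avoidance probabilities on a dissecting semiring. Two small points are worth making explicit. First, you should note that any vague subsequential limit $\tilde\nu$ is automatically a point process, since the set of $\mathbb{N}\cup\{\infty\}$-valued Radon measures is closed in the vague topology and each $\nu_n$ is empirical. Second, and more importantly, Kallenberg's avoidance-function characterization applies to \emph{simple} point processes, so you need to know $\tilde\nu$ is simple before you can conclude $\tilde\nu\overset{d}{=}N(d\lambda)$ from equality of avoidance probabilities alone. This is precisely where (\ref{ppp2}) does real work beyond tightness: by Fatou and vague convergence, $\E\tilde\nu(\cI)\le\liminf_k\E\nu_{n_k}(\cI)=\lambda(\cI)$, so the intensity of $\tilde\nu$ is dominated by $\lambda$; combined with the avoidance identity $\P(\tilde\nu(\cI)=0)=e^{-\lambda(\cI)}$ this forces $\tilde\nu$ to be simple with intensity exactly $\lambda$. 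Once that is said, the rest of your argument (including the sandwich for the Portmanteau step) goes through.
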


\begin{proposition}[Poisson convergence for dependent arrays]\label{prop:1} Suppose $$\{X_{n,j}\,:\,n\geq 1, j=1,\cdots,n\}$$ is a  triangular array of positive random variables such that each row is stationary, and suppose that for some absolutely continuous measure $\lambda$ on $\mathbb{R}^+$,
\begin{align}\label{eq:ppp cond1}
&n\P(X_{n,1}\in\cI)\stackrel{n\to\infty}{\longrightarrow}\lambda(\cI),\quad\text{for any finite union of open intervals }\cI\subset\R^+.
\end{align}
Also suppose that 
for $1\le i<j\leq n$, there exists $\varepsilon>0$ such that
\beq\label{CovXnm}
\P(X_{n,i}\in\cI, X_{n,j}\in\cI)= o\(\frac1{n^{1+\varepsilon}}\).
\eeq
 Finally, suppose that for every finite union of open intervals $\cI$, there exists $\theta\in(0,1)$ and some constant $C>0$ such that for all $k,n,q\in\N$ satisfying $2q+k\le n$,
{\begin{align}\label{cov2}
&| \P(X_{n,1}\in\cI^c, \ldots, X_{n,q}\in\cI^c,X_{n,q+k+1}\in\cI^c,\ldots, X_{n,2q+k}\in\cI^c)\nn\\
&\quad \quad \quad \quad - \(\P(X_{n,1}\in\cI^c, \ldots, X_{n,q}\in\cI^c)\)^2| \le C\theta^k.
\end{align}}

Then we have the following weak convergence of random empirical measures under the vague topology: {$$\sum_{j=1}^n\delta_{X_{n,j}}\Rightarrow N(d\lambda).$$}
\end{proposition}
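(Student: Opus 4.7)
The plan is to apply Kallenberg's characterization lemma stated just above to the random measures $\nu_n:=\sum_{j=1}^{n}\delta_{X_{n,j}}$, so it suffices to verify
$\E\nu_n(\cI)\to\lambda(\cI)$ and $\P(\nu_n(\cI)=0)\to e^{-\lambda(\cI)}$
for every finite union of open intervals $\cI\subset\R^+$. Convergence of the mean is immediate from row-stationarity and (\ref{eq:ppp cond1}): $\E\nu_n(\cI)=n\P(X_{n,1}\in\cI)\to\lambda(\cI)$. So the content of the proof lies in the void-probability statement.

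For the void probability, I would run the classical Leadbetter--Lindgren--Rootz\'en blocking scheme. Pick $\delta\in(1-\varepsilon,1)$ with $\varepsilon$ as in (\ref{CovXnm}), and set $r_n=\lfloor n^{\delta}\rfloor$, $k_n=\lceil(\log n)^2\rceil$, $\ell_n=\lfloor n/r_n\rfloor-k_n$. Partition $\{1,\ldots,n\}$ into $r_n$ blocks $B_1,\ldots,B_{r_n}$ of size $\ell_n$ separated by gaps of size $k_n$, and call $B_i$ \emph{clean} if $X_{n,j}\in\cI^c$ for every $j\in B_i$. The parameters are designed so that $r_n\ell_n\sim n$, the total gap length $r_n k_n$ is $o(n)$, and the mixing penalty $r_n\theta^{k_n}\to 0$.

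Three estimates then finish the argument. First, a stationary union bound using (\ref{eq:ppp cond1}) shows the gap indices contribute $o(1)$ to the void probability, so $\P(\nu_n(\cI)=0)=\P(B_1,\ldots,B_{r_n}\text{ all clean})+o(1)$. Second, inclusion--exclusion together with (\ref{CovXnm}) yields $p:=\P(B_1\text{ clean})=1-\lambda(\cI)/r_n+o(1/r_n)$; the constraint $\delta>1-\varepsilon$ is exactly what makes the summed second-order Bonferroni error $r_n\cdot O(\ell_n^2/n^{1+\varepsilon})=O(n^{1-\delta-\varepsilon})$ negligible. Third, iterating the mixing condition (\ref{cov2}) gives $|\P(B_1,\ldots,B_{r_n}\text{ all clean})-p^{r_n}|\le r_n C\theta^{k_n}\to 0$. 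Combining these with $p^{r_n}\to e^{-\lambda(\cI)}$ delivers the void-probability limit.

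The main obstacle is the iteration of (\ref{cov2}) in the third estimate. As written, that condition only factorizes two consecutive equal-sized clean-block events separated by a gap of at least $k$ indices; moving from two blocks to $r_n$ blocks requires a telescoping argument that peels one block off at each stage and invokes row-stationarity to keep the effective block sizes aligned. The payoff is that the cumulative error compounds only linearly in $r_n$, which the exponential rate $\theta^{k_n}$ under a merely polylogarithmic gap $k_n$ absorbs comfortably.
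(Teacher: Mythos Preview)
Your proposal is correct and follows essentially the same route as the paper: apply Kallenberg's criterion, get the mean from stationarity and \eqref{eq:ppp cond1}, and handle the void probability by a Bernstein big-block/small-block scheme, using \eqref{CovXnm} for the within-block Bonferroni estimate and iterating \eqref{cov2} to decouple the big blocks. The only cosmetic difference is in the parameters: the paper takes polynomial gap sizes $[n^a]$ with $0<a<b<\varepsilon$ (where $[n^b]$ is the big-block length), whereas you take polylogarithmic gaps $k_n=\lceil(\log n)^2\rceil$; both choices make the accumulated mixing error $r_n\theta^{k_n}$ vanish, and your constraint $\delta>1-\varepsilon$ is exactly the paper's $b<\varepsilon$ under the correspondence $b\leftrightarrow 1-\delta$. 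You also correctly flag the one genuinely delicate point---that \eqref{cov2} as literally stated compares two equal-sized blocks, so the peeling step needs stationarity and a telescoping bound---which the paper handles in the same informal way.
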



We remark that conditions (\ref{CovXnm}) and (\ref{cov2}) are analogous to the standard conditions $D'(u_n)$ and $D_r(\textbf{u}_n)$, respectively, in \cite{leadbetter1983extremes}.
These are certainly not the weakest conditions possible, but are tailored to the situation at hand.
\begin{proof}
It is enough to verify \eqref{ppp1} and \eqref{ppp2}. Fix $\cI\subset\R^+$ such that $\lambda(\cI)<\ff$. Since each row of the triangular array is stationary, \eqref{ppp2} follows directly from \eqref{eq:ppp cond1}. So we will concentrate on verifying \eqref{ppp1}.

Condition \eqref{ppp1} would follow if the random variables in each row were independent since then we would have
$$ \P(\nu_n(\cI)=0)={\prod}_{j=1}^n \P(X_{n,j}\notin \cI)\approx \(1-\frac{\lambda(\cI)}{n}\)^n.$$
We do not have independence, but assumptions \eqref{CovXnm} and \eqref{cov2} imply asymptotic independence which
 controls the dependence between $X_{n,1}, \cdots, X_{n,n}$. This will done via Bernstein's classical small-large block method \cite{bernstein1927extension}; see also \cite[Ch. 18]{ibragimov1971independent}. More precisely, choose $$0<a<b<\varepsilon,$$ where $\varepsilon$ is as in \eqref{CovXnm}, and for any $n\geq 1$, divide $\{1,\ldots,n\}$ into a sequence of pairs of alternating big intervals (blocks) of length $[n^b]$ and
small blocks of length $[n^a]$. The number of
pairs of big and small blocks is ${B}= [n/([n^a] + [n^b])]$ so that  $\lim_{n\to\ff} B/n^{1-b}=1$. There may be a leftover partial block $L$ in the end which is is negligible since
$$\sum_{j\in L} \P(X_{n,j}\in \cI)\le \frac{C n^b}{n}.$$
 Thus we may henceforth assume $n=([n^a]+[n^b]){B}$.

We denote by $\cB_{k}$ and $\cS_{k}$ for $k=1,\cdots, {B}$, the elements of $\{1,\ldots,n\}$ in big blocks and small blocks, respectively.
Let
$$Y_{n,k}=\sum_{j\in \cB_{k}} \bI_{\{X_{n,j}\in \cI\}},\,\,\,\,\,\,\,Z_{n,k}=\sum_{j\in \cS_{k}} \bI_{\{X_{n,j}\in \cI\}}.$$
For each $n$, both $\{Y_{n,k}\}$ and $\{Z_{n,k}\}$ are sequences of identically distributed random variables. Let $S'_n=\sum_{k=1}^{B} Y_{n,k}$ and $S''_n=\sum_{k=1}^{B} Z_{n,k}$.

Similar to the argument for $L$, we have
$$\E S''_n \le C\frac{{B} n^{a}}{n}\le C \frac{1}{n^{b-a}},$$
so that we can ignore small blocks. Thus, it is enough to show that $\P(S'_n=0)$ converges to $\exp(-\lambda(\cI))$.
Since big blocks are separated by small blocks, we can peel off one factor at a time in the product $\prod \bI_{\{Y_{n,k}=0\}}$ so that using \eqref{cov2} multiple times implies
\beq\label{Eexpo2}\P(S'_n=0)=(\P(Y_{n,1}=0))^{B}+\cO(\theta_1^{{n^a}}),\eeq
for some $0<\theta_1<1$. It remains to estimate $\P(Y_{n,1}=0)$.
\begin{align*}
\P(Y_{n,1}=0)
&\le 1-\sum_{j=1}^{n^b}\left(\P(X_{n,j}\in \cI)-\sum_{k=j+1}^{n^b}\P(\{X_{n,j}\in \cI\}\cap\{X_{n,k}\in \cI\})\right)\\
&\le 1-\sum_{j=1}^{n^b}\(\P(X_{n,j}\in \cI)-o\(\frac1{n}\)\)\\
&\le 1-n^b\P(X_{n,j}\in \cI)+n^b\,\,\, o\(\frac1{n}\)
\end{align*}
where the second to last inequality follows from \eqref{CovXnm} since $b<\varepsilon$.
An even easier lower bound is given by
\begin{align*}
\P(Y_{n,1}=0)
\ge 1-\sum_{j=1}^{n^b}\P(X_{n,j}\in \cI).
\end{align*}

Putting things together we have
$$\P(S'_n=0)=\(1-{[n^b]}\P(X_{n,1}\in \cI)+o(n^{b-1})\)^{B}+\cO(\theta_1^{n^a}),$$
which is what we need since $B/n^{1-b}\to 1$ and $n\P(X_{n,1}\in \cI)\to \lambda(\cI)$.
\end{proof}

\subsection{Stable laws for return times}\label{sec:R}
Here we prove Theorem \ref{thm:2} by applying Proposition \ref{prop:1} to the (uncentered) triangular array defined by
\begin{align}\label{defnXn} X_{n,k}&:=\frac{1}{\sqrt[\alpha]{n}}\cR\circ F^{k}
\end{align}
for $n\geq 1$ and $1\le k\le n$.
One can easily check that  $\{X_{n,k},k=1,\cdots, n\}$ is a stationary sequence.
The other conditions of
Proposition \ref{prop:1} are verified by  Lemmas \ref{Mm2} and \ref{boundedcov}.

Let
\beq\label{eq:ppp}
\lambda(dx):=\frac{\alpha{\tilde\sigma_\cR}^\alpha dx}{x^{1+\alpha}}
\eeq
where
$$\tilde\sigma_{\cR}^{\alpha}:=\frac{2I_1^\alpha}{\beta\mu(M)|\partial Q| },$$
with $I_1=\int_{0}^{\pi/2}\sin^{\frac{1}{\alpha}}\varphi\,d\varphi$,
is as in Lemma \ref{Mm2}.
In particular, Lemma \ref{Mm2} implies that
{\begin{align*}
\lim_{n\to\infty}n\P(X_{n,0}\ge u)&=\frac{2I_1^\alpha u^{-\alpha}}{\beta\mu(M)|\partial Q|}\\
&=\lambda([u,+\infty)).
\end{align*}}
Thus the random measures
\beq\label{eq:random meas}
\left(\sum_{k=1}^n \delta_{X_{n,k}}, n\in\mathbb{N}\rt)
 \eeq
converge weakly, in the vague topology, to the Poisson random measure with intensity measure $\lambda(dx)$ given in \eqref{eq:ppp}.

Next, for any $\delta>0$, this  implies that the sum of the uncentered terms, $$\sum_{k=1}^n X_{n,k}\bI_{\{\delta^{-1}>X_{n,k}>\delta\}}$$ converges, as $n\to\infty$, to
a compound Poisson distribution.  This follows from vague convergence by simply integrating the identity function $g(x)=x \bI_{\{\delta^{-1}>x>\delta\}}$ over the random measures in \eqref{eq:random meas}, and then taking the limit as $n\to\infty$.
In particular, we obtain the compound Poisson distribution
whose characteristic function has L\'evy exponent
\beq\label{eq:Poisson convergence2}
\tilde\sigma_{\cR}^\alpha\int_{\delta}^{\delta^{-1}} (e^{itx}-1)\alpha x^{-1-\alpha} \, dx.\eeq
This also implies that the sum of the centered terms  $$\sum_{k=1}^n \(X_{n,k}\bI_{\{\delta^{-1}>X_{n,k}>\delta\}}-\E (X_{n,1}\bI_{\{\delta^{-1}>X_{n,k}>\delta\}})\)$$ converge to a distribution with L\'evy exponent
\beq\label{eq:Poisson convergence}
\tilde\sigma_{\cR}^\alpha\int_{\delta}^{\delta^{-1}} \alpha\frac{e^{itx}-1-itx}{x^{1+\alpha}}\, dx.
\eeq
Taking the limit as $\delta\to 0$ for the centered sums with characteristic L\'evy exponent given in \eqref{eq:Poisson convergence},
now proves Theorem \ref{thm:2} since we will see below in Lemma \ref{LHzero} that $$\limsup_{n\to\ff}\sum_{k=1}^n X_{n,k}\bI_{\{X_{n,k}\le \delta\}\cup\{X_{n,k}\ge \delta^{-1}\}}$$ converges in probability to zero as $\delta\to 0$.

Let us remark that it is important to take the limit $\delta\to 0 $ in \eqref{eq:Poisson convergence} rather than in the uncentered sum whose characteristic function
is described in \eqref{eq:Poisson convergence2}, since only \eqref{eq:Poisson convergence}
converges as $\delta\to 0$ (see \cite[Sec 3.7]{durrett2010probability}
 for a thorough discussion).

\section{Intermediate case: induced functions}\label{sec:induced}
In this section we extend the results of the previous section to general induced functions on the space $M$. Recall that
$$
I_f=\frac{1}{4}\int_{0}^{\pi} ( f(r',\varphi)+f(r'',\varphi))\sin^{\frac{1}{\alpha}}\varphi\, d\varphi$$
where $r',r''$ are as in \eqref{r'}.

\begin{theorem}[Stable Limit Theorem for the induced map]\label{thm:3}  Suppose $I_f\neq 0$. Let $ f:\cM\to\mathbb{R}$ satisfy
the assumptions of Theorem 1 and let $\tf$ be the induced function on $M$ constructed by (\ref{inducef}). Then
\beq\label{stable2}
\frac{S_n \tf}{\sqrt[\alpha]{n}}\xrightarrow{d}  S_{\alpha,\tilde\sigma_f}
\eeq
where $\alpha=\frac\beta{\beta-1}$ and  $\tilde\sigma_f^{\alpha}=\frac{2 I_f^{\alpha}}{{\beta\mu(M)|\partial Q|}}$.
\end{theorem}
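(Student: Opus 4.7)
The plan is to reduce Theorem \ref{thm:3} to Theorem \ref{thm:2} by approximating the general induced function $\tf$ by a scalar multiple of the centered return time function. Set $c_f := I_f/I_1$ and write
$$
\tf = c_f\,(\cR - \tmu(\cR)) + g, \qquad g := \tf - c_f\,(\cR - \tmu(\cR)).
$$
The assumption $\mu(f) = 0$ gives $\tmu(\tf) = 0$ via the Kac identity, so $g$ is centered. Theorem \ref{thm:2} then yields
$$
\frac{c_f}{n^{1/\alpha}}\,S_n(\cR - \tmu(\cR)) \xrightarrow{d} c_f\,S_{\alpha,\tilde\sigma_\cR} = S_{\alpha,\,c_f\tilde\sigma_\cR},
$$
and from $\tilde\sigma_\cR^\alpha = 2I_1^\alpha/(\beta\mu(M)|\partial Q|)$ a direct computation gives $(c_f\tilde\sigma_\cR)^\alpha = \tilde\sigma_f^\alpha$, matching the target scale. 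By Slutsky's theorem it therefore suffices to show $S_n g/n^{1/\alpha} \to 0$ in probability.

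The core technical step, and the main obstacle, is the pointwise bound
$$
\bigl|\tf(x) - c_f N\bigr| \le C N^{1-\delta}
$$
uniformly for $x \in M_N$, for some $\delta > 0$. Using H\"older continuity of $f$ together with Lemma \ref{Mm} (in particular, the $r$-coordinate distances $|r_k - r'|,\,|r_k - r''|$ from the iterates $T^k x$ to the cusp sides shrink as a negative power of $N$ along the corner series), one first replaces $f(T^k x) = f(r_k^*,\varphi_k) + O(N^{-\eta})$ with $r_k^* \in \{r',r''\}$ and some $\eta > 0$, introducing a cumulative error of order $N^{1-\eta}$ in $\tf$. What remains is ergodic in nature: the empirical angle distribution $\frac{1}{N}\sum_{k=0}^{N-1}\delta_{\varphi_k}$ along a long corner series in $M_N$ converges, at a polynomial rate as $N\to\infty$, to the normalized density $\frac{1}{2 I_1}\sin^{1/\alpha}\varphi\,d\varphi$ on $(0,\pi)$, so that
$$
\frac{1}{N}\sum_{k=0}^{N-1} f(r_k^*,\varphi_k) \longrightarrow \frac{1}{4 I_1}\int_0^\pi \bigl(f(r',\varphi) + f(r'',\varphi)\bigr) \sin^{1/\alpha}\varphi\,d\varphi = c_f.
$$
The $\sin^{1/\alpha}\varphi$ weight is inherited from the approximate conservation law $|r-r_f|^\beta \cos\varphi \sim C_N$ of Lemma \ref{Mm}(8) after reparametrizing the corner-series trajectory by $\varphi$; this is essentially the same calculation performed in Section \ref{sec:scale} in the proof of Lemma \ref{Mm2}.

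Given the uniform pointwise bound, Lemma \ref{Mm}(7) yields the strictly lighter tail $\tmu(|g| > u) = O(u^{-\alpha/(1-\delta)})$. Combined with the exponential decay of correlations \eqref{correT} for the induced system and the fact that $g$ is piecewise H\"older on each $M_N$ by Lemma \ref{Holdertf}, this suffices to obtain $S_n g / n^{1/\alpha} \to 0$ in probability by standard arguments: if $\alpha/(1-\delta) > 2$ one applies Chebyshev to the bound $\var(S_n g) = O(n)$; otherwise $g$ lies in the domain of attraction of a $p$-stable law with $p = \alpha/(1-\delta) > \alpha$, obtained via the same Poisson apparatus of Proposition \ref{prop:1}, so that $S_n g/n^{1/p}$ converges in distribution and $S_n g/n^{1/\alpha} \to 0$ follows from $n^{1/\alpha - 1/p} \to \infty$. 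Slutsky's theorem then completes the proof.
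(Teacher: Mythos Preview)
Your overall strategy---write $\tf = c_f(\cR-\tmu(\cR)) + g$ with $c_f=I_f/I_1$, invoke Theorem~\ref{thm:2} for the first piece, and kill $S_n g/n^{1/\alpha}$---is exactly the paper's strategy, and your pointwise bound $|\tf(x)-c_fN|\le CN^{1-\delta}$ on $M_N$ is the content of Lemma~\ref{E bound} (your sketch of it matches Section~\ref{sec:scale}).  The difference, and the gap, is in how you dispose of the remainder $g$.

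The correlation bound \eqref{correT} that you cite applies to functions in $\cH_\gamma$ with \emph{finite global} H\"older norm $\|\cdot\|_{C^\gamma}$.  Your function $g$ is unbounded and, by Lemma~\ref{Holdertf}, has H\"older norm of order $N^{1+\gamma}$ on $M_N$; so \eqref{correT} gives nothing useful for $g$ directly, and neither the Chebyshev route ($\var(S_ng)=O(n)$) nor the Poisson-apparatus route can be run ``by standard arguments''.  The paper confronts exactly this obstacle by first introducing an $n$- and $\delta$-dependent truncation $\tf=\tf_L+\tf_I+\tf_H$ and then proving a bespoke correlation estimate, Proposition~\ref{boundedcovtf}, for the truncated pieces $\tf_{i,j}$; the proof of that proposition occupies most of Section~\ref{sec:induced} and mixes the H\"older correlation bound on a further truncation at level $p$ with Growth-Lemma estimates on $\tmu(F^k\cW_{i,j}\cap\cW_{m,l})$.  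Only with Proposition~\ref{boundedcovtf} in hand do Lemma~\ref{LHzero} (low/high parts vanish) and the variance bound $\var(S_nE_I)\le Cn^{2/\alpha-\epsilon}$ go through.

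A second, smaller issue: your fallback claim that $g$ lies in the domain of attraction of a $p$-stable law with $p=\alpha/(1-\delta)$ is not justified by the tail \emph{upper bound} $\tmu(|g|>u)=O(u^{-p})$ alone---domain-of-attraction statements require precise tail asymptotics (cf.\ condition~\eqref{eq:ppp cond1} in Proposition~\ref{prop:1}), which you do not have for $g$.  Tightness of $S_ng/n^{1/p}$ would suffice, but establishing even that again requires correlation control for the unbounded $g$, bringing you back to the missing ingredient above.
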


The idea of the proof is to write a general induced function as
$$\tilde{f}(x) =C\left(\cR(x)-\tmu(\cR)\right)+E(x)$$
where $C$ is a constant and $E(x)$ is an ``error'' function which vanishes in the renormalized limit of Birkhoff sums $n^{-1/\alpha}S_n \tf$.
In order to show that $E(x)$ is inconsequential we will need a decay of correlations result for $\tf$ which is a refined version of Lemma \ref{boundedcov}.
This is much easier to prove if $\tf$ is bounded (for each $n$).
We therefore start this section by truncating both the high and low portions of $\tf$.

Fix a small $\delta>0$, and split $M$ according to the low, intermediate, and high regions of the index $m$
for the sets $\{\cR(x)=m\}$:
\beq
M^L:=\cup_{m<\delta n^{\frac{1}{\alpha}}} M_m,\quad\quad M^I:=\cup_{\delta n^{\frac{1}{\alpha}}\le m< \frac{1}{\delta}n^{\frac{1}{\alpha}}} M_m \quad\text{and}\quad M^H
:=\cup_{m\geq \frac{1}{\delta}n^{\frac{1}{\alpha}}} M_m
\eeq
which all depend implicitly on $n$ and $\delta$.
Note that
$$
\tmu(M^L)\sim 1,\,\,\,\,\,\tmu(M^I)\sim \frac{1}{\delta^{\alpha} n},\,\,\,\,\,\text{ and }\,\,\,\,\,\tmu(M^H)\sim \frac{\delta^{\alpha}}{n},
$$
where the $\delta^\alpha$ is irrelevant at this point since the relation $\sim$ is up to a constant as $n\to\infty$.  Nevertheless we have written the factor of $\delta^\alpha$ since we will eventually send $\delta\to 0$ (after we compare quantities which have equivalent asymptotics as $n\to\infty$).
We also put
\beq\label{function decomp}
\tf_{L}:=\tf|_{M^{L}}, \quad\tf_{I}:=\tf|_{M^{I}} \quad\text{and} \quad\tf_H:=\tf|_{M^{H}}
\eeq
so that
$
\tf=\tf_L+\tf_I+\tf_H.$
Note that the mean $\tmu(\tf)=0$ does not imply  that the truncated means are also zero.  However,
{\begin{align*}
0&=\tmu(\tf)=\tmu(\tf_L)+\tmu(\tf_I )+\tmu(\tf_H)\\
\end{align*}
and so, for each fixed $\delta$, we can always subtract these three constants (which sum to zero) in order to center $\tf_L, \tf_I$, and $\tf_H$. Thus we assume without loss of generality that
 $$\tmu(\tf_L)=\tmu(\tf_I )=\tmu(\tf_H)=0.$$}


\subsection{Decay of correlations for induced functions}

Fix $\delta>0$ and let  $\cW_{i,j}:=\cup_{m=i}^j M_m$ be the union of cells with indices satisfying $i\leq j\leq \frac{1}{\delta} n^{\frac{1}{\alpha}}$ so that $\cW_{i,j}\subset M^L\cup M^I$.
Note that the viable range of $i$ and $j$ depends on $n$ and $\delta$.
Denote
\beq\nn \label{eq: function sequence}
{\tf}_{i,j}:=\tf\cdot \bI_{\cW_{i,j}}-\tmu(\tf\cdot\bI_{\cW_{i,j}}).
\eeq
We also set $\alpha_0=\frac{1+\sqrt{5}}{2}$.

\begin{proposition}[Exponential decay of correlations for $\tf_{i,j}$]\label{boundedcovtf}
 For $1\le i\leq j\leq \frac{1}{\delta} n^{\frac{1}{\alpha}}$, {$k\ge 1$}, and $\alpha\in [\alpha_0,2)$, we have
\beq\label{eq:condition D_r(u_n)2f}
|\tmu(\tf_{i,j}\circ F^k\cdot{\tf}_{i,j})|\leq C_f\theta^k,
\eeq
where  $\theta\in (0,1)$ is a constant, and $C_f>0$ is a constant that depends on $\|f\|_{C^{\gamma}}$.
On the other hand, for any $\alpha\in (1,\alpha_0)$, there exists $\chi>0$, such that for any $1\le k\leq \chi \ln n$,
\beq\label{eq:new condition}
|\tmu(\tf_{i,j}\circ F^k\cdot{\tf}_{i,j})|\leq C_f \delta^{-1-\frac{\alpha}{\alpha+1}+\alpha}n^{\frac{2}{\alpha}-1-\frac{1}{\alpha(\alpha+1)}}\theta^k,
\eeq
and for any $k\geq \chi \ln n$, (\ref{eq:condition D_r(u_n)2f}) holds.
\end{proposition}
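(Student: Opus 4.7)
The strategy is to apply the exponential decay of correlations bound \eqref{correT} from \cite{CZ09} to $\tf_{i,j}$, after replacing it by a globally H\"older approximation. The difficulty is that $\tf_{i,j}$ is only \emph{piecewise} H\"older: by Lemma \ref{Holdertf}, on each cell $M_m$ it has H\"older exponent $\gamma/\beta$ with constant $\lesssim \|f\|_{C^\gamma} m^{1+\gamma}$, and $\|\tf\cdot \mathbf{1}_{M_m}\|_\infty \lesssim m\|f\|_\infty$, but $\tf_{i,j}$ jumps across cell boundaries, which lie in the singular set of $F$. Since only cells with $m \le j \le \delta^{-1} n^{1/\alpha}$ are retained, these per-cell estimates give the global bounds $\|\tf_{i,j}\|_\infty \lesssim \delta^{-1} n^{1/\alpha}$ and a per-cell H\"older constant $\lesssim \delta^{-(1+\gamma)} n^{(1+\gamma)/\alpha}$.

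To produce a bona fide H\"older observable, I would modify $\tf_{i,j}$ inside a $\varepsilon$-neighborhood $\cN_\varepsilon$ of the cell boundaries (interpolating linearly along unstable leaves) to obtain $g_\varepsilon \in \cH_{\gamma/\beta}$ whose H\"older norm is $\lesssim \delta^{-(1+\gamma)} n^{(1+\gamma)/\alpha} + \delta^{-1} n^{1/\alpha}\varepsilon^{-\gamma/\beta}$. Applying \eqref{correT} to $g_\varepsilon$ yields
$$
|\tmu(g_\varepsilon \circ F^k \cdot g_\varepsilon) - \tmu(g_\varepsilon)^2| \le C \|g_\varepsilon\|_{C^{\gamma/\beta}}^2\, \vartheta^k,
$$
while the Growth Lemma (Lemma \ref{growthlemma}) gives $\tmu(\cN_\varepsilon) \lesssim \varepsilon^q$ with $q$ as in \eqref{q}, so that the approximation error is at most $\|\tf_{i,j}\|_\infty^2\,\tmu(\cN_\varepsilon) \lesssim \delta^{-2} n^{2/\alpha}\,\varepsilon^q$. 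Optimizing $\varepsilon$ to balance the correlation contribution against the approximation error (and taking $\gamma$ small enough that the $\varepsilon^{-\gamma/\beta}$ factor is negligible) produces a bound of the form $C_f\,\delta^{a}\, n^{b}\, \vartheta^k$ with
$$
b \;=\; \frac{2}{\alpha} - 1 - \frac{1}{\alpha(\alpha+1)} \;=\; -\,\frac{\alpha^2 - \alpha - 1}{\alpha(\alpha+1)}.
$$

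The sign of $b$ flips precisely at the golden-ratio threshold $\alpha_0 = (1+\sqrt 5)/2$, and this is the source of the two regimes in the statement: for $\alpha \ge \alpha_0$ we have $b \le 0$ and the $n^b$ factor is harmless, yielding \eqref{eq:condition D_r(u_n)2f}; for $\alpha \in (1,\alpha_0)$ we have $b > 0$ and the amplified bound \eqref{eq:new condition} results, which can nevertheless be reabsorbed into $\vartheta^k$ once $k \ge \chi \ln n$ with $\chi$ chosen large enough that $\vartheta^{\chi \ln n}$ defeats the polynomial factor $n^b$. The main obstacle will be the delicate bookkeeping in this optimization --- in particular matching both the $n$- and $\delta$-exponents stated in \eqref{eq:new condition} --- and verifying that the standard-family hypotheses (properness, finite $\cZ$) underlying \eqref{correT} carry over after restricting to $\cW_{i,j}$ and mollifying; one also needs to check that the mean shift $\tmu(g_\varepsilon)-\tmu(\tf_{i,j}) = \tmu(g_\varepsilon)$, controlled by $\|\tf_{i,j}\|_\infty\,\tmu(\cN_\varepsilon)$, contributes only a lower-order term to the squared correlator.
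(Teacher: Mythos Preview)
Your mollification approach has a genuine gap: it cannot produce the $n$-exponent claimed in \eqref{eq:new condition}, nor the $n$-independent bound \eqref{eq:condition D_r(u_n)2f}. The problem is that smoothing near cell boundaries does nothing to reduce the supremum norm of the observable. Away from $\cN_\varepsilon$ your $g_\varepsilon$ coincides with $\tf_{i,j}$, which attains values of order $j$ on $M_j$; hence $\|g_\varepsilon\|_{C^{\gamma/\beta}} \ge \|g_\varepsilon\|_\infty \gtrsim j \sim \delta^{-1} n^{1/\alpha}$ no matter how you choose $\varepsilon$. Plugging this into \eqref{correT} already gives a floor of order $\delta^{-2} n^{2/\alpha}\,\vartheta^k$ for the correlation term, before any approximation error is added. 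So your optimization over $\varepsilon$ can at best yield $b = 2/\alpha$, not $b = 2/\alpha - 1 - 1/(\alpha(\alpha+1))$; the asserted value of $b$ does not follow from the balance you describe. This discrepancy is fatal downstream: in Lemma~\ref{LHzero} one sums $(n-k)\,\text{Cov}(\tf_L,\tf_L\circ F^k)$ over $k$, and with an $n^{2/\alpha}\vartheta^k$ bound the result is $\sim n^{1+2/\alpha}$ rather than $\lesssim n^{2/\alpha}$, so the normalized variance of $S_n\tf_L$ fails to vanish.

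The paper's argument is structurally different and this difference is essential. Instead of a spatial mollification, it truncates in \emph{cell index} at an intermediate level $p=p(k)$, writing $\tf_{i,j}=g_{i,p}+g_{p,j}$. The low piece $g_{i,p}$ now has $\|g_{i,p}\|_{C^{\gamma}}\lesssim p^{1+\gamma}$ --- crucially independent of $n$ and $j$ --- so \eqref{correT} gives $p^{2+2\gamma}\vartheta^k$. The high piece $g_{p,j}$ is \emph{not} controlled by the crude bound $\|\cdot\|_\infty^2\cdot\text{(measure of support)}$; rather, the cross-correlations are reduced to intersection measures $\tmu(F^k\cW_{a,b}\cap\cW_{c,d})$, which are bounded via the Growth Lemma together with the precise cell geometry (length $\sim m^{-\alpha/(\alpha+1)}$, width $\sim m^{-(\alpha^2+\alpha+1)/(\alpha+1)}$) from Lemma~\ref{Mm}. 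It is exactly this refined intersection estimate \eqref{estMiMk} that produces the extra saving of $n^{-1-1/(\alpha(\alpha+1))}$ over the naive $n^{2/\alpha}$. Choosing $p=\vartheta^{-k/(3(1+\gamma))}$ balances the pieces and yields the two regimes separated at $\alpha_0$. For $k\gtrsim\ln n$ a separate argument --- showing that the weighted standard family $(\cW_{i,j},\, \tf\cdot\mathbf{1}_{\cW_{i,j}}\,d\tmu)$ becomes proper after $k$ iterations and then invoking equidistribution (Lemma~\ref{equidistribution}) --- recovers the uniform bound \eqref{eq:condition D_r(u_n)2f}.
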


\begin{proof}
For any  set $M_m$ in $\cW_{i,j}$, we foliate it into unstable curves  that stretch completely from one side to the other. Let $\{W_{\aaa}, \aaa\in \cA, \lambda\}$ be the foliation, and $\lambda$ the factor measure defined on the index set $\cA$.  This enables us to define a standard family, denoted as $\cG_{i,j}=(\cW_{i,j},\tmu_{i,j})$, where $\tmu_{i,j}:=\tmu|_{\cW_{i,j}}$. Moreover, we also define $\cG_{m}=(M_{m},\tmu|_{M_m})$.

Our first step in proving the decay of correlations is to investigate the $\cZ$ function of $F^k\cG_{i,j}$.
According to Lemma \ref{Mm}, $FM_m$ can be approximated by a strip that has length $\sim m^{-\frac{\alpha}{\alpha+1}}$ and width $\sim m^{-\frac{\alpha^2+\alpha+1}{\alpha+1}}$. Also, by construction, the density of $\tmu_{i,j}$ is of order $1$ on $ M_m$. Recalling that $\cZ(F\cG_{m})$ is the average inverse length of unstable curves in $FM_{m}$, by \eqref{cZ}  we obtain
\begin{align*}
\cZ(F\cG_{m})&=\tilde\mu(M_{m})^{-1}\int_{\aaa\in\cA}|W_{\aaa}|^{-1}\,\lambda_{\cG_{m}}(d\aaa)\\
&\le C_1\tilde\mu(M_{m})^{-1}\cdot m^{\frac{\alpha}{\alpha+1}}\cdot \tilde\mu(M_{m})\\
&\leq C m^{-\frac{\alpha}{\alpha+1}}.\end{align*}

Similarly, for $j<\infty$,
\begin{align*}
\cZ(F\cG_{i,j})&=\tilde\mu(\cW_{i,j})^{-1}\int_{\aaa\in\cA}|W_{\aaa}|^{-1}\,\lambda_{\cG_{i,j}}(d\aaa)\\
&\le C_1\tilde\mu(\cW_{i,j})^{-1}\cdot \sum_{m=i}^{j} m^{\frac{\alpha}{\alpha+1}}\cdot m^{-\alpha-1}\\
&= C i^{\frac{\alpha}{\alpha+1}}.\end{align*}

For $m<l$ we have
$$\tmu(F(\cW_{i,j})\cap \cW_{m,l})\leq F_*\tmu_{i,j}(r<\eps_m)$$
where $\eps_m=C_0m^{-\frac{\alpha^2+\alpha+1}{\alpha+1}}$ for some $C_0>0$, i.e., the order of the width of the largest cell in $\cW_{m,l}$.
Using  Lemma \ref{growthlemma} with $q$ as in \eqref{q}, we have that {for $k\ge 1$}
\begin{align}\label{estMiMk}
\tmu(F^k(\cW_{i,j})\cap \cW_{m,l})&\leq F^k_*\tmu_{i,j}(r<\eps_m)\nonumber\\
&\leq C'(\vartheta^{k-1}\cZ(F\cG_{i,j})+C'')\eps_m ^q\tmu(\cW_{i,j})\nonumber\\
&\leq C\vartheta^{k}(i^{\frac{\alpha}{\alpha+1}}+C'')m^{-\alpha} i^{-\alpha}\nonumber\\
&= C\vartheta^{k}i^{\frac{\alpha}{\alpha+1}-\alpha}+CC''i^{-\alpha}m^{-\alpha}
\end{align}

For any fixed large $k$, we truncate $\tf_{i,j}$ at one extra level $p$, with $i\leq p\leq j$, which will be chosen later. Since
$\tf_{i,j}=\tf|_{\cW_{i,j}}-\tilde\mu(\tf|_{\cW_{i,j}})$, we have
\begin{align*}
\tmu(\tf_{i,j}\circ F^k\cdot \tf_{i,j})=\tmu(\tf|_{\cW_{i,j}}\circ F^k \cdot \tf|_{\cW_{i,j}})-\tmu(\tf|_{\cW_{i,j}})^2.\end{align*}
Also define another decomposition
\begin{align}\label{tildefijtrun}\tf_{i,j}&=g_{i,p}+g_{p,j}\nonumber\\
&:=\left(\tf \bI_{\cW_{i,p}}-\tilde\mu(\tf\bI_{\cW_{i,j}})\bI_{\cW_{1,p}}\right)+\left(\tf\bI_{\cW_{p,j}}-\tilde\mu(\tf\bI_{\cW_{i,j}}) \bI_{\cW_{p,\infty}}\right)\end{align}
where the right-side functions are not mean-zero in general (but their sum is).

The function $g_{i,p}$ satisfies $\|g_{i,p}\|_{\infty}\leq \|f\|_{\infty} p$, and its H\"{o}lder norm satisfies $$\|g_{i,p}\|_{\gamma}\leq C\|f\|_{\gamma} p^{1+\gamma},$$ by Lemma \ref{Holdertf}. Thus by (\ref{correT}), we know that
\beq\label{iqiq}
\tmu(g_{i,p}\circ F^k, g_{i,p})\leq C\|g_{i,p}\|_{C^{\gamma}}^2 \vartheta^k +\tmu(g_{i,p})^2\leq C p^{2+2\gamma}\vartheta^k +\cO(p^{2-2\alpha})
\eeq
where we have used Lemma \ref{Mm} (7) in order to estimate
$$\tmu(g_{i,p})=-\tmu(g_{p,j})=\cO(p^{1-\alpha}), \,\,\,\tmu(\tf \cdot \bI_{\cW_{p,j}})=\cO(p^{1-\alpha}).$$
Moreover,
\begin{align}\label{iqiq2}
\tmu(g_{i,p}, \tilde\mu(\tf\bI_{\cW_{i,j}}) \bI_{\cW_{p,\infty}}\circ F^k)&\leq \tilde\mu(\tf\bI_{\cW_{i,j}})\left(C\|g_{i,p}\|_{C^{\gamma}} \vartheta^k +\tmu(g_{i,p})\right)\nonumber\\
&\leq C p^{2-\alpha+\gamma}\vartheta^k +\cO(p^{2-2\alpha}).
\end{align}
Similarly,
\begin{align}\label{iqiq3}
\tmu(\tilde\mu(\tf\bI_{\cW_{i,j}}) \bI_{\cW_{p,\infty}}, \tilde\mu(\tf\bI_{\cW_{i,j}}) \bI_{\cW_{p,\infty}}\circ F^k)&\leq \tilde\mu(\tf\bI_{\cW_{i,j}})^2\left(C\vartheta^k +1\right).
\end{align}

We now estimate
\begin{align*}
&\tmu(g_{i,p},\tf_{\cW_{p,j}}\circ F^k)\leq C\|f\|_{\infty}^2\sum_{m=1}^p\sum_{l=p}^j m\cdot l\cdot \tmu(M_l\cap F^k M_m)\\
&=C\|f\|_{\infty}^2\sum_{m=1}^p\sum_{l=p}^j \sum_{t=1}^m \sum_{s=1}^l \tmu(M_l\cap F^k M_m)\\
&=C\|f\|_{\infty}^2\sum_{m=1}^p\sum_{t=1}^m
\left(\sum_{s=1}^p \sum_{l=p}^j  \tmu(M_l\cap F^k M_m)+\sum_{s=p}^j \sum_{l=s}^j  \tmu(M_l\cap F^k M_m)\right)\\
&=C\|f\|_{\infty}^2\sum_{m=1}^p\sum_{t=1}^m
\left(p\cdot \tmu(\cW_{p,j}\cap F^k M_m)+\sum_{s=p}^j   \tmu(\cW_{s,j}\cap F^k M_m)\right)\\
&=C\|f\|_{\infty}^2\sum_{t=1}^p
\left(p\cdot \tmu(\cW_{p,j}\cap F^k \cW_{t,p})+\sum_{s=p}^j   \tmu(\cW_{s,j}\cap F^k \cW_{t,p})\right).
\end{align*}
This implies by \eqref{estMiMk} that
\begin{align*}
&\tmu(g_{i,p},\tf_{\cW_{p,j}}\circ F^k)\\
&\leq C\|f\|_{\infty}^2\sum_{t=1}^p
\left((\vartheta^{k-1}{t^{\frac{\alpha}{\alpha+1}-\alpha}}+C''t^{-\alpha})p^{1-\alpha} +\sum_{s=p}^j  (\vartheta^{k-1}{t^{\frac{\alpha}{\alpha+1}-\alpha}}+C''t^{-\alpha})s^{-\alpha} \right)\\
\end{align*}
We have that $\alpha_0=\frac{1+\sqrt{5}}{2}$ satisfies $\alpha_0=1+\frac{\alpha_0}{\alpha_0+1}$.
Then combining with (\ref{iqiq2}), one can check that for $\alpha\in (1,\alpha_0)$,
\beq\label{lesalpha}
\tmu(g_{i,p},g_{p,j}\circ F^k)\le C_f(p^{1-\alpha}(\vartheta^k p^{1+\frac{\alpha}{\alpha+1}-\alpha})).
\eeq
On the other hand, for $\alpha\in [\alpha_0,2)$,
 one can check that
\beq\label{lesalpha0}
\tmu(g_{i,p},g_{p,j}\circ F^k)\le C_f p^{1-\alpha}\vartheta^k .
\eeq
Similarly, we can show that, for $\alpha\in (1,\alpha_0)$,
\beq\label{lesalpha00}
\tmu(g_{p,j},g_{i,p}\circ F^k)\le C_f((\vartheta^k j^{1+\frac{\alpha}{\alpha+1}-\alpha}+p^{1-\alpha}))
\eeq
while for $\alpha\in [\alpha_0,2)$,
\beq\label{lesalpha000}
\tmu(g_{p,j},g_{i,p}\circ F^k)\le C_f((\vartheta^k p^{1+\frac{\alpha}{\alpha+1}-\alpha}+p^{1-\alpha})).
\eeq

Next, we estimate
\begin{align*}
&\tmu(\tf_{\cW_{p,j}},\tf_{\cW_{p,j}}\circ F^k)\leq C\|f\|_{\infty}^2\sum_{m=p}^j\sum_{l=p}^j m\cdot l\cdot \tmu(M_l\cap F^k M_m)\\
&=C\|f\|_{\infty}^2\sum_{t=1}^p
\left(p\cdot \tmu(\cW_{p,j}\cap F^k \cW_{p,j})+\sum_{s=p}^j   \tmu(\cW_{s,j}\cap F^k \cW_{p,j})\right)\\
&+C\|f\|_{\infty}^2\sum_{t=p}^j
\left(p\cdot \tmu(\cW_{p,j}\cap F^k \cW_{t,j})+\sum_{s=p}^j   \tmu(\cW_{s,j}\cap F^k \cW_{t,j})\right)\\
&= C\|f\|_{\infty}^2
\left( (\vartheta^{k-1}{p^{\frac{\alpha}{\alpha+1}-\alpha}}+C''p^{-\alpha})p^{1-\alpha} +\sum_{s=p}^j (\vartheta^{k-1}{p^{\frac{\alpha}{\alpha+1}-\alpha}}+C''p^{-\alpha})s^{-\alpha} \right)\\
&+C\|f\|_{\infty}^2\sum_{t=p}^j
\left((\vartheta^{k-1}{t^{\frac{\alpha}{\alpha+1}-\alpha}}+C''t^{-\alpha})p^{1-\alpha} +\sum_{s=p}^j  (\vartheta^{k-1}{t^{\frac{\alpha}{\alpha+1}-\alpha}}+C''t^{-\alpha})s^{-\alpha} \right).
\end{align*}

Combining with (\ref{iqiq3}), we obtain for $\alpha\in (1,\alpha_0)$,
\beq\label{lesalpha06}
\tmu(g_{p,j},g_{p,j}\circ F^k)\le C_f(p^{1-\alpha}(\vartheta^k j^{1+\frac{\alpha}{\alpha+1}-\alpha}+ p^{1-\alpha})).
\ee
On the other hand, for $\alpha\in [\alpha_0,2)$,
\beq\label{lesalpha07}
\tmu(g_{p,j},g_{p,j}\circ F^k)\le C_f(\vartheta^k p^{1-\alpha} p^{1+\frac{\alpha}{\alpha+1}-\alpha}+p^{2(1-\alpha)}).
\eeq

Combining the above estimations, we have for $\alpha\in (1,\alpha_0)$,\begin{align*}
\tmu(\tf_{i,j},\tf_{i,j}\circ F^k)&\leq C_f\(\vartheta^k p^{1-\alpha} j^{1+\frac{\alpha}{\alpha+1}-\alpha}+ \vartheta^k p^{2+2\gamma} + p^{1-\alpha}\)\end{align*}
while for $\alpha\in [\alpha_0,2)$,
 \begin{align*}
\tmu(\tf_{i,j},\tf_{i,j}\circ F^k)&\leq C_f\(\vartheta^k p^{1-\alpha}+ \vartheta^k p^{2+2\gamma} +p^{1-\alpha}\).\end{align*}

Consider $k$ such that $$\vartheta^{-\frac{k}{3(1+\gamma)}}<{\|f\|_{\infty}^{\frac{\alpha+1}{\alpha}}}\frac{1}{\delta} n^{\frac{1}{\alpha}},$$ where the exponent of $\|f\|_\infty$ comes from \eqref{defnC'} below).
For $\alpha\in (1,\alpha_0)$, we choose $p=\vartheta^{-\frac{k}{3(1+\gamma)}}$.
{The above estimations imply that
$$\tmu(\tf_{i,j},\tf_{i,j}\circ F^k)\leq C_f \delta^{-1-\frac{\alpha}{\alpha+1}+\alpha} n^{\frac{2}{\alpha}-1-\frac{1}{\alpha(\alpha+1)}}\theta^k$$
{where $\theta=\vartheta^{\frac{\alpha-1}{3(1+\gamma)}}$} and where we have used that $j\leq \frac{1}{\delta}n^{\frac{1}{\alpha}}$.
For $\alpha\in [\alpha_0,2)$, we also choose $p=\vartheta^{-\frac{k}{3(1+\gamma)}}$.
Then the above estimations imply that
$$\tmu(\tf_{i,j},\tf_{i,j}\circ F^k)\leq C_f\theta^k.$$}

Next, we consider the case where $k$ is such that
\beq\label{properk}
{\vartheta^{-\frac{k}{3(1+\gamma)}}}\geq\|f\|_{\infty}^{\frac{\alpha+1}{\alpha}}\frac{1}{\delta} n^{\frac{1}{\alpha}}.
\eeq
We assume $$h_{i,j}:=\tf\cdot \bI_{\cW_{i,j}}\geq 0,$$ otherwise, we decompose
$h_{i,j}=h^+_{i,j}-h^-_{i,j}$ into the difference of its positive and negative parts.

Define a standard family, denoted as $\tilde\cG_{i,j}=(\cW_{i,j},\nu_{i,j})$, with $d\nu_{i,j}=h_{i,j}d\mu$.
Our first step  is to show that for  $k$ satisfying \eqref{properk}, $F^k\tilde\cG_{i,j}$ is a proper family.
According to Lemma \ref{Mm}, $FM_m$ is approximated by a strip that has length $\sim m^{-\frac{\alpha}{\alpha+1}}$ and width $\sim m^{-\frac{\alpha^2+\alpha+1}{\alpha+1}}$. Also, by construction, the density of $\nu_{i,j}$ is of order $m$ on $ FM_m$. Thus we obtain for $j<\infty$,
\begin{align*}
\cZ(F\tilde\cG_{i,j})&=\tilde\mu(h_{i,j})^{-1}\int_{\aaa\in\cA}|FW_{\aaa}|^{-1}\,\lambda_{\tilde\cG_{i,j}}(d\aaa)\\
\\
&\le C
\|f\|_{\infty}\tilde\mu(h_{i,j})^{-1}\cdot \sum_{m=i}^{j} m\cdot m^{-\frac{\alpha^2+\alpha+1}{\alpha+1}}\\
&= C\|f\|_{\infty}\tilde\mu(h_{i,j})^{-1}\sum_{m=i}^{j}  {m^{-\frac{\alpha^2}{\alpha+1}}}.\end{align*}
Note that for $\alpha\in (1,\alpha_0)$, we have $\frac{\alpha^2}{\alpha+1}<1$, which implies that
$$ \cZ(F\tilde\cG_{i,j})\leq C\|f\|_{\infty} j^{1-\frac{\alpha^2}{\alpha+1}}i^{\alpha-1}\leq C\|f\|_{\infty} j^{\alpha-\frac{\alpha^2}{\alpha+1}}\leq C'\|f\|_{\infty} n^{\frac{1}{\alpha+1}}.$$
Here we used the following fact:
$$\tilde\mu(h_{i,j})= \sum_{m=i}^j \mu(\tf\cdot \bI_{M_m})\sim\sum_{m=i}^j m\cdot \tmu(\bI_{M_m})\sim i^{1-\alpha}.$$
If $\alpha\in [\alpha_0,2)$, then
 we have $\frac{\alpha^2}{\alpha+1}\geq 1$, which also implies that
$$\cZ(F\tilde\cG_{i,j})\leq C\|f\|_{\infty} i^{\alpha-\frac{\alpha^2}{\alpha+1}}\leq C'\|f\|_{\infty} n^{\frac{1}{\alpha+1}}.$$

Using the inequality $$\cZ(F^k\tilde\cG_{i,j})\leq C_1(\vartheta^{k-1}\cZ(F\tilde\cG_{i,j})+C_2)$$ from Lemma \ref{growthlemma} combined with $\cZ(F\tilde\cG_{i,j})\leq C'\|f\|_{\infty} n^{\frac{1}{\alpha+1}}$, we have that for \mbox{$\|f\|_{\infty}^{\frac{\alpha+1}{\alpha}}\frac{1}{\delta} n^{\frac{1}{\alpha}}\le \vartheta^{-\frac{k}{3(1+\gamma)}}$,}
\begin{align}\label{defnC'}\cZ(F^k\tilde\cG_{i,j})&\leq C_1(\vartheta^{k-1}\cZ(F\tilde\cG_{i,j})+C_2)\nn\\
&\leq C_1(C'\vartheta^{k-1}\|f\|_{\ff}n^{\frac{1}{\aa+1}}+C_2)\nn\\
&{\leq C_3(\delta^{\frac{\alpha}{\aa+1}}{\vartheta^{k(1-\frac{\alpha}{3(\alpha+1)(1+\gamma)})-1}}+C_4)}.\end{align}
Using our freedom to choose
$C_\text{prop}$, we  choose it such that \beq\label{cq}
C_\text{prop}>C_3C_4.
\eeq
Since we can always take $\delta<1$, this implies that $F^k\tilde\cG_{i,j}$ is  a proper standard family.
By Lemma \ref{equidistribution},
since $\tilde\cG_{i,j}$ is a proper family, \beq\label{properg12}
|F^k\nu(\tf_{i,j})|=|F^k\nu(h_{i,j})-\tilde\mu(h_{i,j})|\leq C \|h_{i,j}\|_{C^{\gamma}}\vartheta^k.
\eeq
We know that $\|h_{i,j}\|_{\infty}\leq \|f\|_{\infty} j$, and Lemma \ref{Holdertf} implies that the H\"{o}lder norm of $h_{i,j}=\cO(j^{1+\gamma})$. Thus we have by \eqref{properk}
$$
|F^k\nu(\tf_{i,j})|\leq C_0(C_1j+ C_2 j^{1+\gamma})\vartheta^k{\leq{C_f\delta}^{-1-\gamma}n^{\frac{1+\gamma}{\alpha}} \vartheta^k\leq C_f\vartheta^{2k/3}.}$$
Set $\theta=\vartheta^{2/3}$ so that
\begin{align*}
\tmu(\tf_{i,j}\circ F^k, \tf_{i,j})&\leq  C_f\theta^{k}.
\end{align*}
\end{proof}

\subsection{Stable laws for induced functions}

Our next lemma says that $\tf$ is essentially determined by $\tf_I$.
\begin{lemma}[Vanishing of truncated portions]\label{LHzero} Let $\tf$ be induced by a H\"older continuous $f\in L_{\infty}(\cM,\mu)$. Recalling that the truncations $\tf_L, \tf_H$ depend on $\delta$ and $n$, we have
$$\lim_{\delta\to 0}\frac{S_n\tf_L}{n^{\frac{1}{\alpha}}}=\lim_{\delta\to 0}\frac{S_n\tf_H}{n^{\frac{1}{\alpha}}}=0$$  in probability, uniformly in $n$.\end{lemma}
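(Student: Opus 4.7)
My plan is to dispatch $\tf_H$ and $\tf_L$ separately via complementary estimates: an $L^1$/Markov bound for the heavy-tail part $\tf_H$, and an $L^2$/Chebyshev bound for the bounded part $\tf_L$ that uses the decay-of-correlations bound from Proposition~\ref{boundedcovtf}.

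For $\tf_H$, write $\tf_H=\tf\bI_{M^H}-\tmu(\tf\bI_{M^H})$. Since $F$ preserves $\tmu$, the triangle inequality gives $\E|S_n\tf_H|\le n\E|\tf_H|\le 2n\int_{M^H}|\tf|\,d\tmu$. Plugging in the pointwise bound $|\tf|\le\|f\|_\infty m$ on $M_m$ and $\tmu(M_m)\sim m^{-1-\alpha}$ from Lemma~\ref{Mm}(7),
\[
\int_{M^H}|\tf|\,d\tmu\;\le\;C\!\!\sum_{m\ge \delta^{-1}n^{1/\alpha}}\!\! m^{-\alpha}\;\le\;C'\delta^{\alpha-1}n^{(1-\alpha)/\alpha},
\]
so $\E|S_n\tf_H/n^{1/\alpha}|\le C''\delta^{\alpha-1}\to 0$ uniformly in $n$ since $\alpha>1$. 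Markov's inequality closes this case.

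For $\tf_L$, observe first that if $n<\delta^{-\alpha}$ then $M^L=\emptyset$ and $S_n\tf_L=0$, so I may assume $n\ge\delta^{-\alpha}$. Expanding $\var(S_n\tf_L)=n\,\tmu(\tf_L^2)+2\sum_{k=1}^{n-1}(n-k)\,\tmu(\tf_L\cdot\tf_L\circ F^k)$, the same $M_m$-decomposition gives $\tmu(\tf_L^2)\le\|f\|_\infty^2\sum_{m<\delta n^{1/\alpha}}m^2\tmu(M_m)\le C\delta^{2-\alpha}n^{2/\alpha-1}$ for $\alpha\in(1,2)$; after division by $n^{2/\alpha}$ this contributes $O(\delta^{2-\alpha})$. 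The correlation sum is controlled by Proposition~\ref{boundedcovtf} applied with $i=1$, $j=\lfloor\delta n^{1/\alpha}\rfloor$. In the easier regime $\alpha\in[\alpha_0,2)$, the exponential bound $C_f\theta^k$ produces a correlation sum of order $n$; after dividing by $n^{2/\alpha}$ and using $n\ge\delta^{-\alpha}$ to get $n^{1-2/\alpha}\le\delta^{2-\alpha}$, Chebyshev gives $\sup_n\P(|S_n\tf_L/n^{1/\alpha}|>\eps)\to 0$.

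The delicate range is $\alpha\in(1,\alpha_0)$, where Proposition~\ref{boundedcovtf} supplies the bound $C_f\delta^{-1-\alpha/(\alpha+1)+\alpha}n^{2/\alpha-1-1/(\alpha(\alpha+1))}\theta^k$ for $1\le k\le\chi\log n$, whose $n$-exponent is strictly positive. After summing the geometric series in $\theta^k$, multiplying by $n$, and dividing by $n^{2/\alpha}$, the $n$-exponent flips to $-1/(\alpha(\alpha+1))<0$, leaving $C\delta^{-1+\alpha-\alpha/(\alpha+1)}n^{-1/(\alpha(\alpha+1))}$. Substituting $n\ge\delta^{-\alpha}$ replaces $n^{-1/(\alpha(\alpha+1))}$ by $\delta^{1/(\alpha+1)}$ and the total becomes $C\delta^{\alpha(\alpha-1)/(\alpha+1)}$, whose exponent is positive precisely because $\alpha>1$ (and vanishes at $\alpha_0$, consistent with the golden-ratio identity $\alpha_0^2=\alpha_0+1$ that defines the regime split). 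The long-time tail $k\ge\chi\log n$ contributes $n^{1-2/\alpha-\chi|\log\theta|}$, converted to a positive power of $\delta$ by the same substitution. This bookkeeping, rather than any new conceptual ingredient, is the main technical obstacle; once it is carried out, Chebyshev completes the proof.
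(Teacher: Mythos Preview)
Your argument is correct and follows the same overall strategy as the paper: a one-step estimate disposes of $\tf_H$, and an $L^2$/Chebyshev bound via Proposition~\ref{boundedcovtf} handles $\tf_L$. The only real difference is in the $\tf_H$ step: you use an $L^1$/Markov bound ($\E|S_n\tf_H|/n^{1/\alpha}\le C\delta^{\alpha-1}$), whereas the paper uses a union bound on the event that some iterate $F^m x$ lands in $M^H$ (probability $\le n\,\tmu(M^H)\le C\delta^{\alpha}$), on whose complement $S_n\tf_H$ reduces to $n$ times the small centering constant. For $\tf_L$ the paper's version is a bit shorter because it effectively reruns the proof of Proposition~\ref{boundedcovtf} with the sharper cutoff $j=\delta n^{1/\alpha}$ (rather than $j\le\delta^{-1}n^{1/\alpha}$), which flips the sign of the $\delta$-exponent and yields $\mathrm{Var}(S_n\tf_L)\le C\delta^{\eps}n^{2/\alpha}$ directly, avoiding your detour through $n\ge\delta^{-\alpha}$; but your bookkeeping is correct and arrives at the same conclusion.
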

\begin{proof}
Let us start with the easier assertion for $\tf_H$. By Lemma \ref{Mm}
part (7),
$\tmu(M^H)\le C{\delta^{\alpha}}/{n}$ so that
\beq\label{MH2}
\tmu(\{x:|\tf_{H}\circ   F^m|\ge \delta^{-1}n^{1/\alpha} \text{ for some }m\le n\})\le C \delta^{\alpha},
\ee which implies the claim for $\tf_H$.

Concerning the lower truncation, note that
$$\tmu\(\left(n^{-\frac{1}{\alpha}}S_n\tf_L\right)^2\)\le \tmu\(\left(\frac{\|f\|_\ff}{n^{\frac{1}{\alpha}}}S_n\cR_L\right)^2\).$$
where $\cR_L$ is $\cR$ truncated according to \eqref{function decomp}.
By Lemma \ref{Mm} part (7),
$$\tmu\(\cR_L^2\)\le
C{\sum_{k=1}^{\lfloor\delta n^{1/\alpha}\rfloor}} k^2 k^{-1-\alpha} \le C\delta^{2-\aa} n^{\frac{2}{\aa}-1}.$$

Moreover, by Proposition \ref{boundedcovtf}, we have that for $k\ge 1$ and $1<\alpha<\frac{1+\sqrt{5}}{2}$,
$$\text{Cov}(\tf_L, \tf_L\circ   F^k)\leq   C_f\delta^{1+\frac{\alpha}{\alpha+1}-\alpha}n^{\frac{2}{\alpha}-1-\frac{1}{\alpha(\alpha+1)}}\theta^k,$$
and  for $k\ge 1$ and $\frac{1+\sqrt{5}}{2}\le \alpha<2$,
$$\text{Cov}(\tf_L, \tf_L\circ   F^k)\leq   C_f\theta^k.$$

For all $n$ and some $\eps>0$,
\begin{align*}\text{Var}(S_n\tf_L )&\le 2\sum_{k=0}^{n-1} (n-k) \text{Cov}(\tf_L, \tf_L\circ   F^k)\\
&\leq C {\delta^\eps}n^{\frac{2}{\alpha}}\end{align*}
Thus the variance of ${n^{-\frac{1}{\alpha}}}S_n\tf_L$ is bounded, and taking $\delta\to 0$ proves the claim.
\end{proof}

It is thus enough to  show that {$n^{-\frac{1}{\alpha}}S_n\tf_I$} converges  in distribution to our desired stable law as, firstly, $n\to\infty$ and then, secondly, $\delta\to 0$. 

 Let $\cR_I:=\cR|_{M^I}$ and define $E_I$ by writing $\tf_I$  as:
{\beq\label{defnIfI}\tilde{f}_I =:\frac{I_f}{I_1}\(\cR_I-\tmu(\cR_I)\)+E_I\eeq}
 where  as before {$I_f={\frac14 \int_{0}^{\pi} ( f(r',\varphi)+f(r'',\varphi))\sin^{\frac{1}{\alpha}}\varphi\, d\varphi}.$}
 Note that $\tmu(E_I)=0$  since $\tf_I$ has zero mean. We will prove the following estimate on the error term in Section \ref{sec:scale}:
\begin{lemma} \label{E bound}
For some  constant $C>0$, any {$x\in M^I$,}
\beq\nn
|E_I(x)|\leq C \|f\|_{\ff}|\cR_I-\tmu(\cR_I)|^{1-\frac{\gamma}{\beta-1} }.
\eeq
\end{lemma}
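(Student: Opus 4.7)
The plan is to establish the pointwise expansion
$$
\tf(x) = \frac{I_f}{I_1}\,\cR(x) + h(x), \qquad x \in M_N \subset M^I,
$$
with $|h(x)| \le C\|f\|_{\ff}\, N^{1-\gamma/(\beta-1)}$, and then to deduce the lemma by substituting into \eqref{defnIfI}. The centerings $\tmu(\tf_I) = 0$ and $\tmu(\cR_I-\tmu(\cR_I)) = 0$ reduce $E_I$ on $M^I$ to $h(x)-\tmu(h\bI_{M^I})$, whose pointwise size is controlled by $N^{1-\gamma/(\beta-1)}$; on $M^I$ this is comparable to $|\cR_I - \tmu(\cR_I)|^{1-\gamma/(\beta-1)}$ up to $\delta$-dependent constants that may be absorbed into $C$.

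For the pointwise expansion, write $\tf(x) = \sum_{k=0}^{N} f(T^k x)$ on $M_N$, separate the $O(1)$ iterations outside the cusp, and for each cusp iteration apply H\"older continuity:
$$
\bigl|f(r_k,\varphi_k) - f(r^{\ast}(k),\varphi_k)\bigr| \le \|f\|_\gamma\,|r_k - r^{\ast}(k)|^\gamma \le C\|f\|_\gamma\, s_k^\gamma,
$$
where $r^{\ast}(k)\in\{r',r''\}$ is the cusp-vertex coordinate of the wall containing $T^k x$, and we used \eqref{sn}. Lemma~\ref{Mm}(2)--(3), together with its reversal on $[N_3, N]$, gives $\rho_k \sim k^{-1/(\alpha+1)} N^{-\alpha/(\alpha+1)}$ on the entering and exiting periods and $\rho_k \sim k^{-1}$ on the turning period. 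Since $s_k \sim \rho_k^{1/(\beta-1)}$ and $1/(\beta-1) = \alpha - 1$, a segment-by-segment summation yields
$$
\sum_{k=1}^{N} s_k^\gamma \le C\,N^{1 - \gamma(\alpha-1)} = C\,N^{1-\gamma/(\beta-1)},
$$
so $\bigl|\tf(x) - \sum_{k=1}^{N} f(r^{\ast}(k),\varphi_k)\bigr| \le C\|f\|_\gamma\, N^{1-\gamma/(\beta-1)}$.

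It then remains to verify the Riemann-sum identity
$$
\sum_{k=1}^{N} f(r^{\ast}(k),\varphi_k) = \frac{N}{4I_1}\int_0^{\pi}\bigl[f(r',\varphi) + f(r'',\varphi)\bigr]\sin^{1/\alpha}\varphi\, d\varphi + O(1) = \frac{I_f}{I_1}\,N + O(1).
$$
The change of variable $k \mapsto \eta_k$ arising from $\eta_k \sim (k/N)^{\alpha/(\alpha+1)}$ on the entering period (Lemma \ref{Mm}(5)) has Jacobian proportional to $N\eta^{1/\alpha} \sim N\sin^{1/\alpha}\varphi$, converting the sum into an integral against $\sin^{1/\alpha}\varphi\, d\varphi$; the corresponding conversions on the turning and exiting segments, summed over both cusp walls, combine to produce the constant $I_f/I_1$. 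This Riemann-sum step is the main obstacle, since one must track the alternation of collisions between $\Gamma_1$ and $\Gamma_2$, match the three corner-series segments with consistent constants, and absorb edge/logarithmic contributions in the $O(1)$ remainder. Fortunately the same change of variable underlies the scale-parameter computation of Lemma \ref{Mm2} carried out in Section \ref{sec:scale}, so the bookkeeping may be organized in parallel; given that, the H\"older step above is a routine application of Lemma \ref{Mm}.
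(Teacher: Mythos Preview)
Your approach is essentially the same as the paper's: replace $r_k$ by the cusp-vertex coordinate using H\"older continuity (your bound $\sum s_k^\gamma \le CN^{1-\gamma/(\beta-1)}$ is exactly what the paper obtains), then convert $\sum_k f(r^\ast(k),\varphi_k)$ into the integral $\frac{I_f}{I_1}N$ by a change of variable, and finally absorb the centering constants. Two points deserve correction.

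First, the remainder $O(1)$ you claim for the Riemann-sum step is too strong. For a $\gamma$-H\"older integrand the Riemann sum over $\sim N$ nodes approximates the integral only up to $O(N^{1-\gamma})$; the paper records exactly this term. Since $\beta>2$ gives $\gamma/(\beta-1)<\gamma$, one has $N^{1-\gamma}\le N^{1-\gamma/(\beta-1)}$, so the lemma still follows, but the stated $O(1)$ is not attainable in general.

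Second, and more importantly, Lemma~\ref{Mm}(5) alone cannot produce the exact constant $I_f/I_1$: the relation $\eta_k \sim (k/N)^{\alpha/(\alpha+1)}$ is only up to unspecified multiplicative constants, so the ``Jacobian $\propto N\sin^{1/\alpha}\varphi$'' argument determines the form of the weight but not its normalization. The paper's device is the elliptic-integral change of variable $v=\Psi(\eta):=\int_0^\eta(\sin u)^{1/\alpha}\,du$ together with the near-conservation law $H_n=s_n^\beta\sin\eta_n=C_N+O(s_n^{2\beta-1})$ (see \eqref{hn}). From these one derives the sharp increment $v_{n+1}-v_n=2H_n^{1/\alpha}+O(N^{-1}n^{-1})$ and hence $v_n=2nI_1/N+O(\ln n/N)$, which is precisely what pins down the constant and yields the controlled error. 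Your pointer to Section~\ref{sec:scale} is correct---this is indeed the same computation underlying Lemma~\ref{Mm2}---but the sketch should invoke $\Psi$ and \eqref{hn} explicitly rather than the coarser asymptotic of Lemma~\ref{Mm}(5).
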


\bigskip

\begin{proof}[Proof of Theorem \ref{thm:3}]
Approximating the measure of $M^I$ as $\cO(1/n)$ using Lemma \ref{Mm} part (7), and combining this with  Lemma \ref{E bound}, we have that $$\text{Var}(E_I)=\tmu(E_I^2)=\cO(n^{\frac{2}{\alpha}-1-\frac{2\gamma(\alpha-1)}{\alpha}} )$$
where we have used that $\beta-1=(\alpha-1)^{-1}$.
By Proposition \ref{boundedcovtf}, for $k\ge 1$,
$$\text{Cov}(E_I, E_I\circ   F^k)\leq   Cn^{\frac{2}{\alpha}-1-\frac{1}{\alpha(\alpha+1)}} \theta^k.$$
Thus, for all $n$,
\begin{align*}\text{Var}(S_n E_I )&\le 2\sum_{k=0}^{n-1} (n-k) \text{Cov}( E_I, E_I\circ   F^k)\\
&\leq C n^{\frac{2}{\alpha}-\epsilon}\end{align*}
for some $\epsilon>0$.
We  therefore have that,
\beq\label{Enkdecay}
 \frac{S_n E_I}{\sqrt[\alpha]{n}}\to 0
\eeq
in $L_2$, as $n\to\infty$.

Combining \eqref{Enkdecay} with (\ref{defnIfI}), we have that
$$\lim_{n\to\ff}\sum_{k=1}^n\frac{\tilde f_I\circ F^k(x)}{\sqrt[\alpha]{n}} =\lim_{n\to\ff}\frac{I_f}{I_1}\sum_{k=1}^n\frac{\cR_I\circ F^k(x)-\tmu(\cR_I)}{\sqrt[\alpha]{n}}.$$
The limit on the right side was found in Subsection \ref{sec:R}, thus we have that
$$\lim_{\delta\to 0}\lim_{n\to\ff}\sum_{k=1}^n\frac{\tilde f_I\circ F^k(x)}{\sqrt[\alpha]{n}} ={S_{\alpha,\tilde\sigma_f}}$$
with $\tilde\sigma_f=\frac{I_f}{I_1} \tilde\sigma_{\cR}$.
\end{proof}

\section{ Proof of the main theorem}\label{sec:main thm}

In this section we prove that Theorem \ref{thm:1} follows from Theorem \ref{thm:3}. For this we will utilize  weak convergence of processes in the space of  cadlag functions, $D([0,2])$.

Recall that $\tf_I$ is defined in terms of $\delta$.  We do not assume $\tmu(f_I)=0$ here.
The first thing to notice is that for each fixed $\delta>0$, we have that $$\left(n^{-\frac{1}{\alpha}}S_{[nt]}\tf_I, t\in[0,2]\right)$$ weakly converges in Skorokhod's $J_1$ topology, as $n\to\infty$, to a compound Poisson process, i.e., a L\'evy process with characteristic exponent \eqref{eq:Poisson convergence2}
(see \cite{billingsley1999convergence} for a discussion of the $J_1$ topology and \cite{kyprianou2006introductory} for a discussion of L\'evy processes). This follows because, for $\delta>0$, the intensity measure of the limiting Poisson point process is finite:
\beq
\int_\delta^{1/\delta}x^{-1-\alpha}\,dx<\infty.
\ee
Therefore, there are a bounded (in distribution) number of nonzero terms in the summation. In other words, there is a random variable $X$ such that the number of nonzero terms in $S_{[nt]}\tf_I$ is stochastically bounded by $X$ for all $n$.

This implies that the centered sums
\beq\label{eq:flim}
\(n^{-\frac{1}{\alpha}}S_{[nt]}(\tf_I-\tmu(\tf_I)), t\in[0,2]\right)
\eeq
also converge in the $J_1$ topology.

Next, let $\cN(n,x)$ be the function which counts the number of returns to $M$ up to time $n$, with respect to the map $T$, starting from $x\in M$.
By the Ergodic Theorem, $$\tmu(\cR)\frac{\cN(n,x)}{n}\to 1$$ as $n\to\infty$ for $\tmu$-a.s. $x\in M$. Therefore, the following sequence of functions over the time interval $t\in[0,2]$ converges in $C([0,2])$ in the uniform topology ($\tmu$-a.s., and thus weakly):
\begin{equation}\label{eq:yn}
y_n(t)=\begin{cases}\frac{\cN(n,x)\tmu(\cR)}{n}t, \ t\in[0,1]\\
2+\left(2-\frac{\cN(n,x)\tmu(\cR)}{n}\rt)\left(t-2\rt), \ t\in[1,2].
\end{cases}
\end{equation}
The sequence $(y_n)$ in fact converges to the identity function, and it is not difficult to see that the  composition map  $(x,y) \to x \circ y\,$ is continuous from $D([0,2]) \times C([0,2])$ to $D([0,2])$.

By the Continuous Mapping Theorem we obtain that for each fixed $\delta>0$,
the limiting normalized sums
\begin{align}\label{eq:random sums}
&\lim_{n\to\ff}\(\tmu(\cR)\cN\)^{-\frac{1}{\alpha}}\left(S_{[\tmu(\cR)\cN t]}(\tf_I-\tmu(\tf_I))\right)
\end{align}
weakly converge in $J_1$ (where $t\in[0,2]$) to the same limit as that of \eqref{eq:flim}.
Similarly since
$$\tmu(\cR)\frac{\cN(n,x)+1}{n}\to 1$$
as $n\to\infty$ for $\tmu$-a.s. $x\in M$,
we can conclude that \eqref{eq:random sums} with $\cN+1$ replacing $\cN$ also converges to the same limit as that of \eqref{eq:flim}. Moreover, the same convergence holds for any $\cN'([nt],x)$ between $\cN$ and $\cN+1$.

To complete the proof, we simply consider the marginal distribution at $t=1/\tmu(\cR)=\mu(M)\in (0,1)$
and appeal to Lemma \ref{LHzero}  which allows us to take the limit as $\delta\to 0$.
This gives us that
\begin{align}\label{eq:random sums2}
\lim_{n\to\ff}\frac{S_{\cN}\tf}{\cN^{\frac{1}{\alpha}}}
&=\lim_{n\to\ff}\frac{\cS_{n}f}{\cN^{\frac{1}{\alpha}}}\frac{\cN^{\frac{1}{\alpha}}}{(\mu(M)n)^{\frac{1}{\alpha}}}
\end{align}
converges in distribution to $S_{\alpha,\tilde\sigma}$.

\vspace{5mm}
\noindent{\bf Remark}: Note that the convergence of \eqref{eq:random sums} is a process-level convergence in $J_1$; however, this corresponds to the truncated function $\tf_I$ for $\delta>0$. In order to obtain convergence in $J_1$ for the  full normalized summation, one is required to show $$\left([\cN t]^{-\frac{1}{\alpha}}S_{[\cN t]}(\tf_L-\tmu(\tf_L)), t\in[0,2]\right)$$ converges to the $0$ process in $J_1$ which is presumably quite difficult (the analogous result for the ``high'' truncation would also be needed, but would follow from Lemma \ref{LHzero}).

\section{Proofs of Lemmas \ref{Mm2} and \ref{E bound}: the scale parameter}\label{sec:scale}

Here we calculate the parameter
\beq\label{eq:scale parameter2}\tilde\sigma_f^\alpha:=\lim_{n\to\infty} n\varepsilon^\alpha\tmu\left({|\tf|} >\varepsilon\, n^{\frac{1}{\alpha}}\right)=\frac{2 I_f^{\alpha}}{\beta\mu(M)|\partial Q|}
\eeq
which easily implies
$$
\quad\sigma_f=\(\mu(M)\tilde\sigma_f^\alpha\rt)^{1/\alpha}.$$
We start with the special case of the return time function where {$I_f=I_1$} as in \eqref{eq:scale parameter}.

\begin{proof}[Proof of Lemma \ref{Mm2}]
Note that $M$ consists of all collisions on $\partial Q\setminus(\Gamma_1\cup\Gamma_2)$, thus the measure of $M$ satisfies
\beq\nn
\mu(M)=\frac{1}{2 |\partial Q|}\int_0^{|\partial Q|-|\Gamma_1|-|\Gamma_2|}\int_{0}^{\pi} \sin\varphi\,d\varphi\,dr=\frac{|\partial Q|-|\Gamma_1|-|\Gamma_2|}{|\partial Q|}
\eeq

To estimate the constant $I_1$ in {(\ref{eq:scale parameter})}, we need to estimate the measure of the set $\{\cR\geq n\}$ as a subset in $\cM$; although $\cR(x)$ is originally defined on $M$, it can be extended to the entire space $\cM$ as the first hitting time function.
Recall that we denote $ F ^n x=(r_n,\varphi_n)$ and $\eta_n:=\min(\varphi_n, \pi-\varphi_n)$. Lemma \ref{Mm} and \eqref{sn} imply that for any $n\in [1,\cdots, N_2]$, $s_n$ and $\eta_n$ have the following relation:
{\beq\label{snphin}s_n^{\beta-1}\sim \frac{\eta_n}{n}\eeq}
According to \cite[Proposition 2 part (6)]{Z2016b}, we know that for $N$ large enough and for any $x\in M_N$, the following sequence is almost constant for $n=1,\ldots, N$:
\beq\label{hn} H_n:=s_n^{\beta} \sin \eta_n=C_N+\cO(s_n^{2\beta-1})\eeq
where $C_N=c_h N^{-\alpha} $ is a constant which depends only on $N$, and $c_h>0$ is a constant. 

 In order to estimate $c_h$, we use an elliptic integral and introduce
\beq\nn v_n:=\int_0^{\eta_n} (\sin u)^{1-\frac{1}{\beta}}\, du,\eeq
 for $n=1,\ldots, N_2$ with $N_2$ as in Lemma \ref{Mm} and where $\eta_n$ is increasing in $n$.
Then
\beq\label{vn1} v_{n+1}-v_n=\int_{\eta_{n}}^{\eta_{n+1}} (\sin u)^{1-\frac{1}{\beta}}\, du=(\sin\eta_n^*)^{1-\frac{1}{\beta}} (\eta_{n+1}-\eta_{n})\eeq
for some $\eta_n^*\in [\eta_{n},\eta_{n+1}]$. By \eqref{hn} we have, 
\beq\label{hn2}
\sin\eta_n=\frac{H_n}{s_n^{\beta} }.
\eeq
 By (3.3) in  \cite{Z2016b} and then by (3.8) in the same paper,
\begin{align}\label{phin+1n}
\eta_{n+1}-\eta_{n}=s_n^{\beta-1}+s_{n+1}^{\beta-1}=2s_n^{\beta-1}+\cO\(s_n^{\beta-1}/n\).
\end{align}
Now combining the above and recalling that $\alpha=\beta/(\beta-1)$, we rewrite \eqref{vn1} as
\begin{align}\label{vn+1}
v_{n+1}-v_n=2 H_n^{\frac{1}{\alpha}}+{\cO(N^{-1} n^{-1})}.
\end{align}
Recalling \eqref{hn}, if we use a dummy variable and sum \eqref{vn+1} from $1$ to $n$, we get
\beq\label{vnn}v_n=2nH_n^{\frac{1}{\alpha}}+\cO(N\ln n)=2nC_N^{\frac{1}{\alpha}}+\cO(\ln n/N)+\cO(s_n^{2\beta-1}N^{\alpha-1}).\eeq
In particular, for $n = N_2 = N/2 + \cO(1)$ we get $${\int_{0}^{\pi/2}(\sin u)^{1-\frac{1}{\beta}}\,du} =NC_N^{\frac{1}{\alpha}}+\cO(\ln N/N).$$
Thus
\begin{align}\label{eq:dN}
c_h=\(\int_{0}^{\pi/2}(\sin u)^{1-\frac{1}{\beta}}\,du\)^{\alpha} +\cO( \ln N/N)=I_1^{\alpha}+\cO(\ln N/N).
\end{align}

 The above implies that the set $\cup_{n=1}^{N-1}  T^n M_N\subset\cM$  is bounded by the  line $r=r'$ (or $r=r''$), $\varphi=0$, $\varphi=\pi$  and a curve described implicitly by the equation
 \beq\label{rbeta}r^{\beta}= \frac{C_N}{ \sin\varphi}(1+\cO(r^{2\beta-1}C_N^{-1}))\eeq
or equivalently, using a Taylor expansion on the $\beta$th root of the second factor on the right,
{$$r= \frac{C_N^{\frac{1}{\beta}}}{ \sin^{\frac{1}{\beta}}\varphi}+\cO\left(\frac{r^{2\beta-1}}{C_N^{1-\frac{1}{\beta}}\sin^{\frac{1}{\beta}}\varphi}\right).$$}
Using the above the calculate the tail of the return time function, we have
 \begin{align}\label{muRn}
 \mu(x\in \cM\,:\, \cR\geq N)&=\sum_{m\geq N}\sum_{k=0}^{m-N} \mu(T^k M_m)\nn\\
 &={\frac{1}{\alpha}\sum_{m\geq N}\sum_{k=0}^{m-1} \mu(T^k M_m)+\cO(N^{-\alpha})}\nn\\
 &={\frac{2}{2{\alpha|\partial Q|}}\int_{0}^{\pi}\left(\frac{C_N^{\frac{1}{\beta}}}{ \sqrt[\beta]{\sin\varphi}}+\cO\left(\frac{r^{2\beta-1}}{C_N^{1-\frac{1}{\beta}}\sin^{\frac{1}{\beta}}\varphi}\right)\right)\,  \sin\varphi\,d\varphi\nonumber+\cO(N^{-\alpha})}\\
 &=\frac{C_N^{\frac{1}{\beta}}}{\alpha|\partial Q| }\int_{0}^{\pi} (\sin\varphi)^{1-\frac{1}{\beta}}\,d\varphi+{\cO(N^{-\frac{\beta(3\beta-2)}{(\beta-1)(2\beta-1)}}+N^{-\frac{\beta}{\beta-1}})}\nonumber\\
 &=\frac{2I_1 C_N^{\frac{1}{\beta}}}{{\alpha}|\partial Q| }+\cO(N^{-\frac{\beta}{\beta-1}})\end{align}
 where we have used (\ref{rbeta}) to estimate:
\begin{align*}
\int_0^{\pi/2}\frac{r^{2\beta-1}}{C_N^{1-\frac{1}{\beta}}\sin^{\frac{1}{\beta}}\varphi}\, \sin\varphi \, d\varphi&=\int_{0}^{\pi/2}\frac{ r^{2\beta-1}\sin^{1-\frac{1}{\beta}}\varphi}
{C_N^{1-\frac{1}{\beta}}}\,d\varphi\\
&=\int_{0}^{\pi/2}  r^{3\beta-2}\, d\varphi=\cO(N^{-\frac{\beta(3\beta-2)}{(\beta-1)(2\beta-1)}}).\end{align*}

Now combining \eqref{muRn} with \eqref{eq:dN} and the definition of $C_N$, we get
$$\lim_{N\to\infty}N^{\frac{1}{\beta-1}}\mu(x\in \cM\,:\, \cR\geq N) ={\frac{2I_1^{\alpha}}{\alpha|\partial Q| }.}$$
Using the fact that
$$\mu(x\in \cM\,:\, \cR\geq N)=\sum_{m\geq N}\mu( x\in M\,:\, \cR\geq m)+\cO(N^{-\alpha})$$
and the fact that $\frac{1}{\beta-1}=\alpha-1$ we obtain
\begin{align}\nn\lim_{N\to\infty}N^{\alpha}\tmu(x\in M\,:\, \cR\geq N)&=\frac{2I_1^{\alpha}}{\alpha(\beta-1)\mu(M)|\partial Q|}\nn\\
&=\frac{2I_1^{\alpha}}{\beta\mu(M)|\partial Q|}.
 \end{align}
This allow us to calculate
\begin{align*}
\tilde\sigma_\cR^\alpha=\lim_{N\to\infty} N\tmu\left(\cR > N^{\frac{1}{\alpha}}\right)
&=\frac{2I_1^{\alpha}}{\beta\mu(M)|\partial Q|}.
\end{align*}
which completes the proof of Lemma \ref{Mm2}.
\end{proof}

In order to extend the calculation of the scale parameter to general $\tf$, we need to prove the error bound in Lemma \ref{E bound}.

\begin{proof}[Proof of Lemma \ref{E bound}]
Consider any $\gamma$-H\"{o}lder continuous function $f$ on $\cM$.
Similar to the proof of Lemma \ref{Mm2}, we write
\beq\nn v=\Psi(\eta):=\int_0^{\eta} (\sin u)^{\frac1{\alpha}}\, du,\eeq
where we have set the variable $v$ equal to the function $\Psi(\eta)$ since want to use $$\eta=\Psi^{-1}(v)\in(0,\pi/2).$$
In particular,
\beq\label{Psi-1}
\frac{d\Psi^{-1}(v)}{d v}=(\sin\eta)^{-\frac{1}{\alpha}}.
\eeq
By \eqref{hn2},
\begin{align*}
\frac{d\Psi^{-1}(v_n)}{d v}&=(\sin\eta_n)^{-\frac{1}{\alpha}}\\
&= \left(\frac{H_n}{s_n^{\beta} }+\cO(s_n^{\beta-1}) \right)^{-\frac{1}{\alpha}}\\
&= \frac{s_n^{\beta-1}}{H_n^{\frac{1}{\alpha}}}\left(1+\cO(s_n^{2\beta-1}) \right)^{-\frac{1}{\alpha}}= \frac{s_n^{\beta-1}}{H_n^{\frac{1}{\alpha}}}\left(1+\cO(s_n^{2\beta-1}) \right)\\
&=\frac{s_n^{\beta-1}}{H_n^{\frac{1}{\alpha}}}+\cO(s_n^{3\beta-2}H_n^{-\frac{1}{\alpha}})\end{align*}
thus using (\ref{vn+1}), we know that
\beq\label{angle bound}
|\Psi^{-1}(v_{n+1})-\Psi^{-1}(v_n)|\leq 3 s_n^{\beta-1}
\eeq
Combining \eqref{eq:dN}  with (\ref{vnn}) gives
$$v_n=\frac{{2nI_1}}{N}+\cO(\ln n/N).$$
By the Mean Value Theorem, there exists $\eta_*$ lying between $\eta_{n}=\Psi^{-1}(v_n)$ and $\eta_{n+1}=\Psi^{-1}(v_{n+1})$ such that
$$\Psi^{-1}(v_n)-\Psi^{-1}(v_{n+1})= (\sin\eta_*)^{-\frac{1}{\alpha}} (v_n-v_{n+1}) $$
thus
\begin{align}\label{Psins}
\eta_n&=\Psi^{-1}\(2n I_1/N+\cO(\ln n/N)\)\nonumber\\
&=\Psi^{-1}(2I_1n/N)+ (\sin\eta_*)^{-\frac{1}{\alpha}}\cO(\ln n/N)\nn\\
&=\Psi^{-1}(2I_1n/N)+\cO(s_n^{\beta-1}\ln n).
\end{align}

We now estimate the sum $\cS_{N_2}:=
\sum_{n=1}^{N_2} f(r_n,\varphi_n)$.
Note that collisions with the curves $\Gamma_1$ and $\Gamma_2$ alternate. Thus it is convenient to introduce: $$\bar f(\varphi)=(f(r',\varphi)+f(r'',\pi-\varphi))/2$$ where $r',r''$ are as in \eqref{r'}.

Also,  recall that the function $f$ is H\"{o}lder continuous with exponent
 $\gamma $ in the variables $(r,\varphi)$.
Thus we have
\begin{align*}
f(r,\Psi^{-1}(v_{n+1}))-f(r,\Psi^{-1}(v_n))&\leq \|f\|_{\gamma}|\Psi^{-1}(v_{n+1})-\Psi^{-1}(v_n)|^{\gamma } \leq C\|f\|_{\gamma}s_n^{(\beta-1)\gamma}.
\end{align*}
Moreover, if $r_n$ is based on the same boundary as $r'$, noting that $s_n=|r_n-r'|+\cO(|r_n-r'|^2)$, then using (\ref{phin+1n}), we get
\beq\label{r bound}
|f(r_n,\varphi_n)-f(r',\varphi_n)|\leq C\|f\|_{\gamma}s_n^{\gamma(\beta-1)}.
\eeq
Recall that $$I_f=\frac{1}{2}\int_{0}^{\pi} \bar f(\varphi) (\sin\varphi)^{1-\frac{1}{\beta}}\,d\varphi.$$
Using Lemma \ref{Mm}, \eqref{angle bound}, \eqref{Psins}, and \eqref{r bound}  gives
\begin{align*}
\cS_{N_2} f
&=\sum_{n=1}^{N_2} (f(r_n,\varphi_n))\\
&=\sum_{n=1}^{N_2}\frac{f(r',\varphi_n)+f(r'',{\pi-\varphi_n})}{2}+ {\cO\(\sum_{n=1}^{N_2}s_n^{\gamma}\)+\cO\(\sum_{n=1}^{N_2}s_n^{(\beta-1)\gamma}(\ln n)^{\gamma}\)}\\
&=\sum_{n=1}^{N_2}\frac{f(r',\varphi_n)+{f(r'',\pi-\varphi_n)}}{2}+ {\cO\(\sum_{n=1}^{N_2}s_n^{\gamma}\)}\\
&=\sum_{n=1}^{N_2}  \bar f(\varphi_n) + \cO\(\sum_{n=1}^{N_2}(nN^{\alpha})^{-\frac{\gamma}{(\alpha+1)(\beta-1)}}\)\\
&={\sum_{n=1}^{N_2}\(\bar f(\Psi^{-1}(2I_1 n/N))+\cO(N^{\frac{2\gamma}{\alpha+1}} n^{-(1+\frac{2}{\alpha+1})\gamma})\)}+\cO(N^{1- \frac{\gamma}{\beta-1}})\\
&={\frac{N}{2I_1}}\int_0^{I_1}\bar f(\Psi^{-1}(v))\,dv +\cO(N^{1- \gamma})+\cO(N^{1- \frac{\gamma}{\beta-1}})\\
&=\frac{N}{{2I_1}}{\int_0^{\pi/2}}\bar f(\varphi)\cdot (\sin\varphi)^{1-\frac{1}{\beta}} \,d\varphi+\cO(N^{1- \frac{\gamma}{\beta-1}})\\
&{=\frac{N I_f}{{2I_1}}}+\cO(N^{1- \frac{\gamma}{\beta-1}}).\end{align*}
By time reversibility, the trajectory going out of the cusp during
the period $N_2\leq n\leq N$ has similar properties. Thus
$$\sum_{n=N_2+1}^N f(r_n, \varphi_n)=\frac{N-N_2}{2I_1}{\int_{\pi/2}^{\pi}}\bar f(\varphi)\cdot (\sin\varphi)^{1-\frac{1}{\beta}} \,d\varphi+\cO(N^{1- \frac{\gamma}{\beta-1}})={\frac{N I_f}{{2I_1}}}+\cO(N^{1- \frac{\gamma}{\beta-1}}).$$
Therefore we can get  the following estimation on the sum for $x\in M_N$, \beq\nn\tilde f(x) =\(\frac{I_f}{I_1}\)(\cR(x)-\tmu(\cR))+E(x)\eeq
where for some $C>0$
\beq\nn\label{errorbound}
|E(x)|\leq C\cR(x)^{1- \frac{\gamma}{\beta-1}}.\eeq
\end{proof}
\section{Proof of Proposition \ref{distortionbound}}\label{sec:7}

 We will show that curves in $\cW^u_H$  have uniformly bounded curvature and uniform distortion bounds.
\subsection{Bounded curvature.}   For any $x\in M$, and any unit tangent vector $dx=(dr,d\varphi)\in \cT_{x}M$.
\begin{lemma}[Curvature bounds]\label{curvbd} Fix $C_0>0$. There exists a constant $C\geq C_0$, such that for any twice continuously differentiable unstable curve $W$,  with curvature bounded  by $C_0$, we have that every curve $W'$, such that  $W'\subset  F W$, is also twice differentiable with its curvature bounded by $C$:
\beq\label{slopebd}|d^2\varphi/dr^2|\leq C. \eeq
\end{lemma}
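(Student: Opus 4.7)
The plan is to propagate the curvature bound through the $\cR$ iterations of $T$ comprising one application of $F$, using the single-collision curvature recursion together with the detailed corner-series asymptotics of Lemma \ref{Mm}. Write $x_n=(r_n,\varphi_n)=T^n x$, let $\tau_n$ denote the free flight time and $\cK_n$ the boundary curvature at $x_n$, and put $\cB_n:=d\varphi_n/dr_n$, $\cB'_n:=d^2\varphi_n/dr_n^2$ for the slope and curvature of $T^nW$.

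The first ingredient is the classical Chernov--Markarian transformation rule: $\cB_n$ evolves by a fractional-linear recursion that keeps it inside the unstable cone, while $\cB'_{n+1}$ is a rational function of $(\cB'_n,\cB_n,\tau_n,\cK_n,\cK_{n+1},\varphi_n,\varphi_{n+1})$ which is linear in $\cB'_n$ with a strongly contracting coefficient whenever $\cK_n$ is bounded away from zero. Consequently, any collision on a uniformly dispersing component $\Gamma_i$, $i\geq 3$, automatically drives $\cB'$ below a universal constant depending only on the lower bound of $\cK$ there. In particular the final exit collision of the return block takes care of itself, so it suffices to prevent blow-up along the interior corner-series iterations on $\Gamma_1\cup\Gamma_2$.

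The main obstacle is therefore the cusp portion of the orbit, on which $\cK_n\to 0$ as $x_n\to P$ and the curvature recursion is no longer uniformly contracting. The plan is to feed the sharp asymptotics from Lemma \ref{Mm} parts (2)--(5)---namely $\rho_n\sim n^{-1}$ on $[N_1,N_2]$, $\rho_n\sim n^{-1/(\alpha+1)}N^{-\alpha/(\alpha+1)}$ on $[1,N_1]$, $\tau_n\sim\rho_n+\rho_{n+1}$, $\cK_n\sim s_n^{\beta-2}$, and $\eta_n\sim (n/N)^{\alpha/(\alpha+1)}$---into the recursion and verify by induction that $\cB'_n$ grows at most by a controlled factor at each step. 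The coupling supplied by the near-invariant $H_n=s_n^\beta\sin\eta_n\sim N^{-\alpha}$ of equation \eqref{hn} provides precisely the cancellations needed to keep the induction from snowballing, and the homogeneity condition $T^{-k}W'\subset\bH_k$ guarantees that $\cos\varphi_n/\cos\varphi_{n+1}$ stays of order one within each strip, so that no spurious blow-up enters through the angle factors in the recursion.

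The hardest step will be closing the induction through the turning period $[N_1,N_3]$, where $\tau_n$ and $\sin\varphi_n$ are simultaneously smallest and a naive estimate would allow $\cB'_n$ to grow by a positive power of $N$. The remedy is twofold: first, exploit the time-reversal symmetry of the cusp trajectory about $N_2$ so that the polynomial growth accumulated while entering is undone while exiting; second, absorb any residual polynomial factor using the strong contraction at the final dispersing collision. Combining these ingredients yields the uniform bound $|d^2\varphi/dr^2|\leq C$ on every $W'\subset FW$ with $C=C(Q,C_0)$, as claimed.
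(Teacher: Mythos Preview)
Your plan attacks the problem by iterating the single-collision curvature recursion through the entire corner series and controlling the accumulated growth by induction. The paper does something much simpler and more direct: it never tracks $\cB'_n$ through the interior of the corner series at all. Instead, it invokes Lemma~\ref{Mm}(8), which says that all the images $T^kM_N$, $k=1,\dots,N$, lie inside a single explicit strip $H_N$ bounded by curves of the form $|r-r_f|^\beta\cos\varphi=C_N$. Differentiating this equation gives the slope of the tangent to $T^{N-1}W$ directly, namely $\cV_0\sim\cot\varphi_0/r_0$, with no induction needed. One application of the differential formula then yields $\cV\sim\cK$ and $d\cV/dr$ on $T^NW$, and a second application produces the bound on $FW=T^{N+1}W$, where the crucial facts are only that the last free path satisfies $\tau\ge\tau_{\min}$ and that $Fx$ lands on the genuinely dispersing wall $\Gamma_3$ with $\cK_1\le\cK_{\max}$. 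So the paper's argument is essentially a two-step computation at the very end of the corner series, with the geometry of $H_N$ doing all the work that your induction would have to do.

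Your inductive approach is not obviously wrong, but two of its ingredients are shaky. First, the appeal to ``time-reversal symmetry of the cusp trajectory about $N_2$'' to cancel the growth accumulated on the way in against the way out is suspect: the expansion factors in the entering and exiting periods have \emph{different} leading coefficients (cf.\ \eqref{lambda1} with $\beta/(2\beta-1)$ versus \eqref{lambda2} with $(\beta-1)/(2\beta-1)$), so the corner series is not symmetric in the sense your cancellation argument needs. Second, ``absorb any residual polynomial factor using the strong contraction at the final dispersing collision'' is only valid if you have already quantified that residual and checked it against the actual contraction available there; as written this is a hope, not an estimate. The paper's $H_N$-strip shortcut sidesteps both issues entirely, and I would recommend adopting it rather than trying to close the step-by-step induction.
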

 \begin{proof}
 Fix $N_0>1$. We consider two cases: the first being $M'=\{M_N,\,\, N\leq N_0\}$, and the second being $M\setminus M'$.

 For any $x\in M'$, we can find bounds $\cK_{\max}>\cK_{\min}>0$ and $\tau_{\max}>\tau_{\min}>0$, such that the curvature of the boundary at the base of $x$ satisfies $\cK(x)\in [\cK_{\min},\cK_{\max}]$, and the length of the free path $\tau(x)\in  [\tau_{\min},\tau_{\max}]$. This implies that restricted to $M'$, the map $F$ is a dispersing billiard map. Thus by \cite{chernov2006chaotic}~Proposition 4.29, (\ref{slopebd}) holds for $x$ and its forward iterations $T^k x$, $k=1,\cdots, N$. This also implies that every smooth curve $W'\subset F W(x)$ has the same property.

  Next we consider $x\in M\setminus M'$.  First note that for any $x\in M\setminus M'$, the free paths corresponding to the first and last collisions have lengths uniformly bounded from below:
  \beq\label{taumin}
  \tau(x)\geq \tau_{\min}.
  \eeq
  For long series of collisions in the corner, the unstable manifolds $T^k W(x)$, $k=1,\cdots, N$, can be approximated by the singular curves which form the boundaries of $H_N$, according to Lemma \ref{Mm}.
  Since we are interested in the last collision in the corner series, we estimate the slope of the tangent vectors of the boundary of $H_N\cap T^N M_N$  on the collision space of $\Gamma_1$ (or $\Gamma_2$ by symmetry).

  Let $y=(r,\varphi)\in T^NW$ be a point in $T^{N}M_N$ that lies on the long boundary of $H_N$.  Moreover, set $y_0=(r_0, \varphi_0)\in T^{N-1}M_N$ such that $y=Ty_0$ and also set $y_1=Ty=(r_1,
  \varphi_1)\in FW=T^{N_1}W$. In order to calculate the slope unstable vector at $y_1$ for $FW$, we will first calculate that of $y_0$ at $T^{N-1}W$, which also gives an estimation for   the slope of the tangent vector at $y\in T^{N}W$.

  Using  item (8) of Lemma \ref{Mm},  one can check that the slope of the tangent vector to the curve $T^{N-1}W$ at $y_0=(r_0,\varphi_0)$ satisfies \beq\label{slopeHN}\cV_0:=d\varphi_0/dr_0\sim \cot \varphi_0/r_0.\eeq

  According to the differential formula for billiards (2.26) in  \cite{chernov2006chaotic},  if we denote $\cV=d\varphi/dr$ as the slope of the tangent vector $d y=D T  dy_0=(dr, d\varphi)$, then it satisfies
\begin{align}\label{slope}
\cV &=\frac{\tau \cK\cK_0+\cK_0\cos\varphi+\cK\cos\varphi_0+(\tau_0\cK+\cos\varphi)\cV_0}{\tau_0\cK_0+\cos\varphi_0+\tau_0 \cV_0}\\
&=\nn \cK+\frac{\cos\varphi(\cK_0+\cV_0)}{\tau_0(\cK_0+ \cV_0)+\cos\varphi_0}\sim \cK\sim N^{-\frac{\beta(\beta-2)}{(\beta-1)(2\beta-1)}},
\end{align}
where $\tau_0$ is the length of the free path between $y_0$ and $y$,  $\cK=\cK(y)$, and $\cK_0=\cK(y_0)$. Here, we used the estimates in Lemma \ref{Mm} for the last step. Similarly, one can check that
\begin{align*}
d\cV/dr&\sim N^{-\frac{\beta(\beta-3)}{(\beta-1)(2\beta-1)}}.
\end{align*}
This implies that   for an unstable curve $W\subset M\setminus M'$,  the slope of the unstable curve $T^{N} W$ is approximately $\cV\sim  N^{-\frac{\beta(\beta-2)}{(\beta-1)(2\beta-1)}}$, and its curvature $\sim N^{-\frac{\beta(\beta-3)}{(\beta-1)(2\beta-1)}}.$
We use the differential formula (2.26) in \cite{chernov2006chaotic} again, to get the slope for the curve $FW=T^{N+1}W$, which for $y_1=(r_1,\varphi_1)=Ty\in FW=T^{N+1}W$, $y=(r,\varphi)\in T^N W$, can be approximated as
$$d\varphi_1/dr_1=\frac{\tau \cK\cK_1+\cK\cos\varphi_1+\cK_1\cos\varphi+(\tau\cK_1+\cos\varphi_1)\cV}{\tau\cK+\cos\varphi+\tau \cV}\sim \cK_1+\frac{\cos\varphi_1}{\tau}$$
where $\tau=\tau(y)$, and $\cK=\cK(y)$, $\cK_1=\cK(y_1)$. Note that $F x$ is a unit vector on $\Gamma_3$ (the boundary portion of the billiard table opposing the cusp), thus $\cK(F x)\leq \cK_{\max}$, and by (\ref{taumin}), the length of the free path for the last collision satisfies $\tau(T^N x)>\tau_{\min}$. This implies that $|d\varphi_1/dr_1|\leq \cK_{\max}+\tau_{\min}^{-1}$.
Similarly, one can check that $d^2\varphi_1/dr_1^2$ is also uniformly bounded on $FW$.
    \end{proof}

\noindent Remark:  From now on whenever we mention an unstable curve $W$, we assume that  $W$ is contained in $FM_N$, for some $N\geq 1$, and that every  iteration $T^{-j} W$, $j=1,\ldots,N$ is homogeneous. In particular, this requires that $T^{-1}W$ is contained in a single homogeneity strip $\bH_k$.    For any $x=(s_x,\varphi_x)$ and $y=(s_y,\varphi_y)\in T^{-1}W\subset \bH_k$, we then have that $\cos\varphi_x\sim \cos\varphi_y \sim 1/k^2$.  If $T^{-1}W$  stretches fully in $ \bH_k$, then the stretch in the $\varphi$-dimension of $T^{-1}W$ is approximately $k^{-3}$. By (\ref{slope}) we know that the slope of a tangent vector to $T^{-1}W\subset \bH_k$ is approximately $N^{-\frac{\beta(\beta-2)}{(2\beta-1)(\beta-1)}}$. Thus
\beq\label{|W|}|T^{-1}W|\sim N^{\frac{\beta(\beta-2)}{(2\beta-1)(\beta-1)}}\cos\varphi_1^{3/2}.\eeq 

Now using  \cite{Z2016b}~Lemma 18 we obtain the expansion factor
\beq\label{lambda11}\frac{\|dx\|}{\|dx_1\|}\sim \frac{\|dx\|_p}{\|dx_1\|_p}\cdot \frac{\cos\varphi_1}{\cos\varphi}\sim  \cK(x_1)\sim  N^{-\frac{\beta(\beta-2)}{(2\beta-1)(\beta-1)}}.\eeq
Thus by the bounded curvature property, for any $x\in W$, with $T^{-1}x=(r_1,\varphi_1)$,
\beq\label{Wcos}|W|\sim  |T^{-1}W| \max_{x\in W} \frac{\|dx\|}{\|dx_1\|}\sim \cos\varphi_1^{3/2}.\eeq
Moreover, note the following observation regarding an unstable curve $W\subset FM_N$, such that $F^{-1}W\subset M_N$ is an unstable curve. Although the slope of the tangent line for $T^{-1}W$ is given by (\ref{slope}); its backward images under $T^{-n}$, for $n=2,\cdots, N-1$, satisfy \beq\label{slope1}
|d\varphi_n/dr_n|\sim \frac{\cot\varphi_n}{r_n}\eeq according to (\ref{slopeHN}) by using the estimates for the boundary of $H_N$ given in item (8) of Lemma \ref{Mm}.

  \subsection{Distortion bounds of the Jacobian under
	$F$.}\label{sec:distortion2}
  Denote by $J_W^n(x)$ the Jacobian of $F^n$ along $W$ at $x\in W$. In the following proof we use $C$, depending only on the billiard table,  to represent a generic positive constant which may change from line to line, and sometimes within the same line.
\begin{lemma}[Distortion bounds]\label{distorbd} Let $W\subset FM_m\subset \bH_k$ be an unstable curve, for some $k$ large and $m\geq 1$. Then
\beq\label{distortion2}|\ln J_WF^{-1}(x)-\ln J_WF^{-1}(y)|\leq C_b d_W(x,y)^{a},\eeq for any $a\in (0,1/3)$, where   $C_b>0$ is a constant depending only on the billiard table. \end{lemma}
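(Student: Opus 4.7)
The plan is to decompose $F^{-1}$ on $W\subset FM_m$ into $m+1$ iterations of the single-step map $T^{-1}$ and control the resulting telescoping sum step by step. Setting $W_k=T^{-k}W$ and, for $x,y\in W$, $x_k=T^{-k}x$, $y_k=T^{-k}y$, the chain rule gives
\[
\ln J_W F^{-1}(x)-\ln J_W F^{-1}(y)=\sum_{k=0}^{m}\bigl[\ln J_{W_k}T^{-1}(x_k)-\ln J_{W_k}T^{-1}(y_k)\bigr].
\]
For each single-step contribution I would invoke the standard billiard derivative formula (essentially equation (2.26) of \cite{chernov2006chaotic}) together with the bounded curvature granted by Lemma \ref{curvbd} to obtain a pointwise bound of the form $|\frac{d}{ds_k}\ln J_{W_k}T^{-1}|\leq C/\cos\varphi_k$, where $s_k$ is arclength along the homogeneous unstable curve $W_k\subset \bH_{j_k}$.

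Next I would convert this derivative bound into a H\"older-type single-step distortion estimate. The defining property of the homogeneity strips combined with the slope estimate \eqref{slope1} and the length bounds \eqref{|W|}--\eqref{Wcos} yield $|W_k|\lesssim \cos^{3/2}\varphi_k$, equivalently $\cos\varphi_k\gtrsim |W_k|^{2/3}$. Interpolating between the pointwise bound $|d\ln J/ds_k|\leq C/\cos\varphi_k$ and the trivial bound $d_{W_k}(x_k,y_k)\leq |W_k|$ produces
\[
\bigl|\ln J_{W_k}T^{-1}(x_k)-\ln J_{W_k}T^{-1}(y_k)\bigr|\leq C\, d_{W_k}(x_k,y_k)^{a}
\]
for any $a<1/3$; here the threshold $1/3$ is exactly what the ratio $|W_k|/\cos\varphi_k \lesssim |W_k|^{1/3}$ forces.

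To assemble the telescoping sum I would split the index range into the entering, turning, and exiting segments $[1,N_1]$, $[N_1,N_3]$, $[N_3,m]$ of the corner series (Lemma \ref{Mm}), together with the two endpoint iterations $k=0$ and $k=m$. In the middle segments the expansion factors satisfy $1+\lambda_k\geq 1+c/k$ by \eqref{lambda1}--\eqref{lambda2}, so pulling back contracts $d_{W_k}(x_k,y_k)$ below $d_W(x,y)$ by a factor decaying polynomially in $k$; the resulting series $\sum_k d_{W_k}(x_k,y_k)^a$ is bounded by a geometric-type sum and yields a constant multiple of $d_W(x,y)^a$. For the two endpoint iterations the estimate \eqref{lambda11} provides one large expansion factor of order $N^{\beta(\beta-2)/((\beta-1)(2\beta-1))}$, which absorbs the weakened single-step distortion coming from a possibly small $\cos\varphi_0$ or $\cos\varphi_m$.

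The main obstacle will be precisely this endpoint issue: the first and last collisions are asymptotically tangential and the naive single-step distortion bound degrades as $1/\cos\varphi$. The delicate part is to track how the ``$\cos\varphi$ loss'' at these grazing iterations is genuinely absorbed by the large expansion factor and by the homogeneity constraint forcing $T^{-1}W\subset\bH_k$ for some single $k$, rather than simply shifted to another term of the sum. Once this bookkeeping is carried out using \eqref{slope1}, \eqref{|W|}, \eqref{Wcos} and Lemma \ref{Mm}, no ingredient beyond what is already established in Section \ref{sec:prelim} and Lemma \ref{curvbd} is required to conclude \eqref{distortion2}.
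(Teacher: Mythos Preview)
Your telescoping setup is correct, and the single-step H\"older estimate you sketch is plausible for each individual $k$. The genuine gap is in the summation. You assert that the series $\sum_k d_{W_k}(x_k,y_k)^a$ is ``geometric-type'' and hence bounded by $C\,d_W(x,y)^a$, but this is false for corner series. The expansion factors in \eqref{lambda1}--\eqref{lambda2} are only $1+\lambda_k$ with $\lambda_k\sim c/k$, so the cumulative contraction from $W$ to $W_n$ is merely polynomial in $n$; a direct computation using \eqref{expansion} and Lemma~\ref{Mm} gives $\|dx_n\|/\|dx\|\lesssim n^{-2\beta/(2\beta-1)} N^{\beta(\beta-2)/((2\beta-1)(\beta-1))}$ in the entering segment. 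Raising to the power $a<1/3$ yields $n^{-2a\beta/(2\beta-1)}$ with exponent strictly less than $1$ (since $2\beta<3(2\beta-1)$ for $\beta>2$), so the sum over $n$ up to $\hat N$ grows as a positive power of $N$. There is no geometric decay to save you; the endpoint expansion you invoke in the last paragraph is a single large factor and cannot compensate for $\sim N$ intermediate terms each contributing comparably.

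The paper's argument is structurally different: it never converts to a H\"older bound at the single-step level. Instead it sums the \emph{Lipschitz} derivative bounds $\frac{d}{dx}\ln J_{W_n}T^{-1}(x_n)$ (derivatives with respect to the terminal point $x$, not $x_n$) and accepts that the total grows like $\ln N\cdot N^{\beta/(2\beta-1)}$. The H\"older exponent $a<1/3$ enters only once, at the very end, through the relation $d_W(x,y)\le |W|\sim\cos^{3/2}\varphi_1\lesssim N^{-3\beta/(2(2\beta-1))}$: writing $d_W(x,y)=d_W(x,y)^a\cdot d_W(x,y)^{1-a}$ and bounding the second factor by $|W|^{1-a}$ produces exactly enough negative powers of $N$ to kill the growth, precisely when $3(1-a)/2>1$. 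This global cancellation between the length of the corner series and the smallness of $|W|$ is the mechanism you are missing. Note also that the single-step derivative bound in this setting is $C/r_n$ rather than $C/\cos\varphi_n$ (the curvature vanishes at the flat point, so the standard dispersing estimate does not apply verbatim), and the paper must switch parametrisations at an intermediate index $\hat N$ where the slope $d\varphi_n/dr_n$ crosses~$1$.
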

\begin{proof}

Again we fix $N_0>1$, and consider the two cases $M'=\{M_N,\,\, N\leq N_0\}$ and $M\setminus M'$.

 When restricted to $M'$, as argued before, the map $F$ is a dispersing billiard map. Thus by \cite{chernov2006chaotic}~Lemma 5.27, the distortion bounds hold, for $x\in M_N$, with $N\leq N_0$, and its forward iterations $T^k x$, with $k=1,\cdots, N$, with Holder exponent $1/3$. This also implies that (\ref{distortion2}) holds for $F$.

Let $x_{n}= T ^{-n} x=(r_{n}, \varphi_{n})\in W_{n}=T^{-n}W$, for $x$ belonging to some unstable curve $W\subset FM_N$ with  $N>N_0$, and with $n=0,\cdots, N+1$.  Let $\cK_{n}=\cK(x_{n})$ and $\tau_{n}=\tau(x_{n}).$  For a tangent vector $dx=(dr, d\varphi)\in \cT_xM$, let its  $p$-norm be given by $\|dx\|_p=|dr|\cos\varphi$, and its Euclidean norm be given by $\|dx\|=\sqrt{|dr|^2+|d\varphi|^2}$.

Using formula (\ref{slope1}), we set
	\beq\label{eq:nhat2}
	\hat N=CN^{\frac{\beta(\beta-2)}{(\beta+1)(\beta-1)}}
	\eeq
	 so that $\cot\varphi_{\hat N}\sim r_{\hat N}$, and in particular, for $n=2,\cdots, \hat N$ or $n=N-\hat N, \cdots, N$, $|d\varphi_n/d r_n|\leq 1$, while for $n=\hat N, \cdots, N-\hat N$, we have $|d\varphi_n/d r_n|\geq 1$.

We have\begin{align*}	 J_{W_{n}}T^{-1}(x_n)&=\frac{1}{J_{T^{-1}W_{n}}(x_{n+1})}=\frac{\|dx_{n+1}\|}{\|dx_{n}\|}\\
&=\frac{\|dx_{n+1}\|_p}{\|dx_{n}\|_p}\cdot \frac{\cos\varphi_{n}}{\cos\varphi_{n+1}}\cdot\sqrt{\frac{1+(d\varphi_{n+1}/dr_{n+1})^2}{1+(d\varphi_{n}/dr_{n})^2}}\\
&=\frac{1}{1+\tau_{n+1}(d\varphi_{n+1}/dr_{n+1}+\cK_{n+1})/\cos\varphi_{n+1}}\frac{\cos\varphi_{n}}{\cos\varphi_{n+1}}\sqrt{\frac{1+(d\varphi_{n+1}/dr_{n+1})^2}{1+(d\varphi_{n}/dr_{n})^2}}\\
&=\frac{\cos\varphi_n}{\cos\varphi_{n+1}+\tau_{n+1}(\cK_{n+1}+d\varphi_{n+1}/dr_{n+1})}\sqrt{\frac{1+(d\varphi_{n+1}/dr_{n+1})^2}{1+(d\varphi_n/dr_n)^2}}.\end{align*} Therefore
\begin{align}
   \ln  J_{W_{n}}T^{-1}(x_n) &=
   \ln Z(x_n)
  +\ln \cos\varphi_n -
	\ln P(x_{n})	\label{lncJnBB}
\end{align}
where $$P(x_n)= \cos\varphi_{n+1}+\tau_{n+1}(\cK_{n+1}+d\varphi_{n+1}/dr_{n+1})\,\,\,\,\,\,\text{ and }\,\,\,\,\,\,\,\, Z(x_n)=\sqrt{\frac{1+(d\varphi_{n+1}/dr_{n+1})^2}{1+(d\varphi_n/dr_n)^2}}.$$
Differentiate (\ref{lncJnBB}) with respect to $x_n$ to get
\begin{align*}
\frac{d}{dx_n}\ln J_{W_{n}}T^{-1}(x_n)&=\frac{d}{dx_n}\ln Z(x_n)+\tan\varphi_n  \frac{d\varphi_n }{dx_n}-\frac{d}{dx_n}\ln P(x_n).
\end{align*}
Using Lemma \ref{Mm}, one can check that for $n=1,\cdots, N$, $$P(x_n)\sim \cos\varphi_{n+1},\,\,\text{and }\,\,\,P(x_0)\sim \cK_1.$$ Moreover, note that
\begin{align*}
\frac{d}{d x}=\frac{d x_n}{dx}\cdot \frac{d}{dx_n}=J_{W_n}T^{-n}(x)\frac{d}{dx_n}.
\end{align*}
Also,\begin{align}\label{expansion}\frac{\|dx_n\|}{\|dx\|}&=\frac{\|dx_{n}\|_p}{\|dx\|_p}\cdot \frac{\cos\varphi}{\cos\varphi_{n}}\cdot\sqrt{\frac{1+(d\varphi_{n}/dr_{n})^2}{1+(d\varphi/dr)^2}}\nonumber\\
&\leq C\cos\varphi_1\cdot n^{-\frac{\beta}{2\beta-1}}N^{\frac{\beta(\beta-2)}{(2\beta-1)(\beta-1)}}\cdot \frac{\cos\varphi}{\cos\varphi_{n}}\cdot\sqrt{\frac{1+(d\varphi_{n}/dr_{n})^2}{1+(d\varphi/dr)^2}},\end{align}
according to \cite{Z2016b}~Proposition 4, for $n=1,\cdots, N$. Here we also note that if $x\in FM_N$, for $N>N_0$, with $x=(r,\varphi)$, then $\cos\varphi>C_{N_0}$. In other words, $\cos\varphi$ is uniformly bounded away from $0$ by some constant $C_{N_0}>0$.

By (\ref{slope1}) and the definition of $\hat N$,
 for $n=1,\cdots, \hat N$ and for any smooth function $f(x_n)$,
 $$\frac{1}{\sqrt{2}}\left|\frac{d f(x_n)}{ d r_n}\right|\leq \left|\frac{d f(x_n)}{d x_n}\right|= \left|\frac{d f(x_n)}{ d r_n}\right|\cdot \frac{1}{\sqrt{1+(d\varphi_n/dr_n)^2}}\leq \left|\frac{d f(x_n)}{ d r_n}\right|.$$
 Thus it is enough to consider $d f(x_n)/dr_n$. Using Lemma \ref{Mm} and Equation (3.32) in \cite{Z2016b}~Proposition 2, as well as the estimates for $\tau_n$ in the last paragraph of the proof there, we know that \beq\label{tauKcos}\cK_{n+1}\tau_{n+1}/\cos\varphi_{n+1} \sim n^{-2}, \,\,\,\,\,\,\,\tau_{n+1}\sim n^{-\frac{2\beta}{2\beta-1}}N^{-\frac{\beta}{(2\beta-1)(\beta-1)}}.\eeq Using (\ref{slope1}), this implies that
$$\left|\frac{d}{dx_n}\ln Z(x_n)\right|\leq C/r_n\cdot |d\varphi_n/dr_n|\leq C \cos\varphi_n/r_n^2$$
and
\begin{align*}
\left|\frac{d}{dx_n}\ln P(x_n)\right|&\leq \frac{C}{\cos\varphi_{n+1}}\left(C\cos\varphi_{n+1}/r_{n+1}+C\tau_{n+1} r_{n+1}^{\beta-3}+C\cos\varphi_{n+1} \tau_{n+1} /r_{n+1}^2\right)\\
&\leq \frac{C}{r_{n+1}}\cdot\left(C+Cn^{-2}+C/n\right)\leq C/r_{n+1}.
\end{align*}

Now we consider
$$\tan\varphi_n  \left|\frac{d\varphi_n }{dx_n}\right|\leq C\frac{1}{\cos\varphi_n}\left|\frac{d\varphi_n}{dr_n}\right|.$$
For $n=1,\cdots, \hat N$, using (\ref{slope1}), we get
$$\tan\varphi_n  \left|\frac{d\varphi_n }{dx_n}\right|\leq C/ r_n,$$
while for $n=0$,
we get
$$\tan\varphi  \left|\frac{d\varphi}{dx}\right|\leq C.$$
Moreover, by (\ref{slope}) and (\ref{lambda11}), together with Lemma \ref{Mm}, for $n=0$, we have
$$\left|\frac{d}{dx}\ln Z(x)\right|\leq  C\cK_1 r_1^{\beta-3} \cdot \frac{\|dx_1\|}{\|dx\|}\leq C r_1^{\beta-3},$$
$$ \left|\frac{d}{dx}\ln P(x)\right|\leq\frac{C\tau_1\cK_1+C\tau_1r_1^{\beta-3}}{\tau_1\cK_1 }\cdot \frac{\|dx_1\|}{\|dx\|}\leq C r_1^{\beta-3}.$$
Similarly, for $n=N$, we can check that
$$\frac{d}{dx_N}\ln J_{W_N}T^{-1}(x_N)\leq \frac{C}{r_{N}}.$$
By \cite{Z2016b}~Proposition 4
\begin{align}\label{expansionM}\frac{\|dx_N\|}{\|dx\|}&=\frac{\|dx_{N}\|_p}{\|dx\|_p}\cdot \frac{\cos\varphi}{\cos\varphi_{N}}\cdot\sqrt{\frac{1+(d\varphi_{N}/dr_{N})^2}{1+(d\varphi/dr)^2}}\nonumber\\
&\leq C\cos\varphi_1\cos\varphi_N\cdot N^{-\frac{\beta}{2\beta-1}}N^{-\frac{\beta-1}{2\beta-1}}r_N^{1-\beta} r_1^{2-\beta}\cdot \frac{\cos\varphi}{\cos\varphi_{N}}\cdot\sqrt{\frac{1+(d\varphi_{N}/dr_{N})^2}{1+(d\varphi/dr)^2}}\nonumber\\
&\leq CN^{-1}r_N^{3-2\beta}.\end{align}
Thus
$$\frac{d}{dx}\ln J_{W_N}T^{-1}(x_N)\leq \frac{C}{r_{N}} \cdot \frac{\|dx_N\|}{\|dx\|}\leq CN^{-1}r_N^{2-2\beta}\leq C N^{\frac{1}{2\beta-1}}.$$
Combining the above facts, we have that \beq\label{dost1}\frac{d}{dx}\ln J_{W}T^{-1}(x)\leq C r_1^{\beta-3}\eeq and for $n=1,\cdots, \hat N$, we have
\beq\label{dist2}\frac{d}{dx_n}\ln J^{-1}_{W_n}(x_n)\leq C/ r_n.\eeq
By symmetry, one can also show that for $n=N-\hat N, \cdots, N$, (\ref{dist2}) also holds.

For $n=2,\cdots,\hat N$, using (\ref{expansion}),
\begin{align}\label{beforehatN}&\frac{d}{dx}\ln J_{W_n}T^{-1}(x_n)\leq \frac{C}{r_{n}} \cdot \frac{\|dx_n\|}{\|dx\|}\nonumber\\
&\leq \frac{C}{r_{n}}\cos\varphi_1\cdot n^{-\frac{\beta}{2\beta-1}}N^{\frac{\beta(\beta-2)}{(2\beta-1)(\beta-1)}}\cdot \frac{\cos\varphi}{\cos\varphi_{n}}\\&\leq
C n^{-\frac{\beta}{2\beta-1}}N^{\frac{\beta(\beta-2)}{(2\beta-1)(\beta-1)}}\cdot \frac{\cos\varphi_1}{r_n\cos\varphi_n}\nonumber\\
&\leq C n^{-\frac{\beta}{2\beta-1}}N^{\frac{\beta(\beta-2)}{(2\beta-1)(\beta-1)}} \cdot n^{\frac{1}{2\beta-1}}N^{\frac{\beta}{(2\beta-1)(\beta-1)}}\cdot N^{\frac{\beta}{2\beta-1}} n^{-\frac{\beta}{2\beta-1}}\cdot\cos\varphi_1\nonumber\\
&=C n^{-1}N^{\frac{2\beta}{2\beta-1}}\cdot\cos\varphi_1\nonumber.\end{align}
For $n=1$, by (\ref{expansion}),
\begin{align*}\frac{d}{dx}\ln J_{W_1}T^{-1}(x_1)&\leq \frac{C}{r_{1}}\cos\varphi_1\cdot N^{\frac{\beta(\beta-2)}{(2\beta-1)(\beta-1)}}\cdot
\frac{\cos\varphi}{\cos\varphi_{1}}\leq
C r_1^{\beta-3}.\end{align*}
Thus
\beq\label{dist3}\sum_{n=1}^{\hat N} \frac{d}{dx}\ln J_{W_n}T^{-1}(x_n)\leq C \ln \hat N\cdot N^{\frac{2\beta}{2\beta-1}}\cdot\cos\varphi_1\leq C\ln N\cdot N^{\frac{\beta}{2\beta-1}}.\eeq

For $n=\hat N,\ldots,N-\hat N$, we know that $d\varphi_n/dr_n\geq 1$ so we use another formula for the Jacobian. Now, denoting $\hat\cV_n=dr_n/d\varphi_n$,
\begin{align}\label{2JTW}	& J_{W_{n}}T^{-1}(x_n)=\frac{1}{J_{T^{-1}W_{n}}(x_{n+1})}=\frac{\|dx_{n+1}\|}{\|dx_{n}\|}=\frac{|d\varphi_{n+1}|}{|d\varphi_{n}|}\cdot\sqrt{\frac{1+\hat\cV_{n+1}^2}{1+\hat\cV_n^2}}\\
&=\left(\tau_n\cK_{n+1}+\cos\varphi_{n+1}+(\tau_n\cK_n\cK_{n+1}+\cK_n\cos\varphi_{n+1}+\cK_{n+1}\cos\varphi_{n+1})\hat\cV_n\right) \sqrt{\frac{1+\hat\cV_{n+1}^2}{1+\hat\cV_n^2}}.\nonumber\end{align} Therefore  for $n=\hat N, \cdots, N-\hat N$,
\begin{align}
   \ln  J_{W_{n}}T^{-1}(x_n) &=
   \ln \tZ(x_n)
 +
	\ln \tP(x_{n+1})	\label{lncJnBB1}
\end{align}
where $$\tP(x_n)= \tau_n\cK_{n+1}+\cos\varphi_{n+1}+(\tau_n\cK_n\cK_{n+1}+\cK_n\cos\varphi_{n+1}+\cK_{n+1}\cos\varphi_{n+1})\hat\cV_n$$ and $\tZ(x_n)= \sqrt{\frac{1+\hat\cV_{n+1}^2}{1+\hat\cV_n^2}}.$

By (\ref{slope1}) and the definition of $\hat N$,
 for $n=\hat N,\cdots, N-\hat N$, and for any smooth function $f(x_n)$,
 $$\frac{1}{\sqrt{2}}\left|\frac{d f(x_n)}{ d \varphi_n}\right|\leq \left|\frac{d f(x_n)}{d x_n}\right|= \left|\frac{d f(x_n)}{ d \varphi_n}\right|\cdot \frac{1}{\sqrt{1+(dr_n/d\varphi_n)^2}}\leq \left|\frac{d f(x_n)}{ d \varphi_n}\right|.$$
 Thus it is enough to consider $d f(x_n)/d\varphi_n$ for $n=\hat N,\cdots, N-\hat N$.

Differentiate (\ref{lncJnBB1}) with respect to $x_n$ and get
\begin{align*}
\frac{d}{dx}\ln J_{W_{n}}T^{-1}(x_n)&=\frac{d}{dx}\ln \tZ(x_n)+\frac{d}{dx}\ln \tP(x_n).
\end{align*}
By (\ref{slope1}), one can check that for $n=1,\cdots, \hat N$,
$$\left|\frac{d}{dx_n}\ln \tZ(x_n)\right|\leq C,\,\,\,\,\,\,\left|\frac{d}{dx_n}\ln \tP(x_n)\right|\leq C$$
for some constant $C>0$.
Summing over $n=\hat N,\cdots, N-\hat N$, we have
\begin{align*}
&\sum_{n=\hat N}^{N-\hat N}\frac{d}{dx}\ln J_{W_n}T^{-1}(x_n)\leq C \sum_{n=\hat N}^{N-\hat N}\frac{\|dx_{n}\|}{\|dx\|}\\
&\leq  C \sum_{n=\hat N}^{N-\hat N}\cos\varphi_1\cdot n^{-\frac{\beta}{2\beta-1}}N^{\frac{\beta(\beta-2)}{(2\beta-1)(\beta-1)}}\cdot \frac{\cos\varphi}{\cos\varphi_{n}}\\
&\leq C \sum_{n=\hat N}^{N-\hat N} n^{-\frac{2\beta}{2\beta-1}}N^{\frac{\beta(\beta-2)}{(2\beta-1)(\beta-1)}}\\
&\leq C \hat N^{-\frac{1}{2\beta-1}}N^{\frac{\beta(\beta-2)}{(2\beta-1)(\beta-1)}}\leq C_5 N^{\frac{\beta^2(\beta-2)}{(2\beta-1)(\beta-1)(\beta+1)}},
\end{align*}
where we used (\ref{beforehatN}) for the estimation of $\frac{\|dx_{\hat N}\|}{\|dx\|}$ in the last step.

If we denote $d_W(x,y)$ as the distance between $x$ and $y$ measured along $W$, then by the bounded curvature of $W$ and (\ref{Wcos}), we have $$d_W(x,y)<C|W|\sim \cos\varphi_1^{3/2}.$$ This implies
\begin{align*}
	&\bigl| \ln J_{W}T^{-1}(y)
  - \ln J_{W}T^{-1}(x) \bigr| \leq d_W(x,y) \sum_{n=0}^N\max_{x\in W}\Bigl|\frac{d}{dx}
  \ln J_{W_k}T^{-1}(x_k) \Bigr|\\&
=d_W(x,y) \left(\sum_{n=1}^{N}\max_{x\in W}\Bigl|\frac{d}{dx}
  \ln J_{W_k}T^{-1}(x_k) \Bigr| +\max_{x\in W}\Bigl|\frac{d}{dx}
  \ln J_{W}T^{-1}(x) \Bigr| \right)\\
  &\leq d_W(x,y)^{a}\left(C\ln N\cdot N^{\frac{\beta}{2\beta-1}}  +C N^{\frac{\beta^2(\beta-2)}{(2\beta-1)(\beta-1)(\beta+1)}}+r_1^{\beta-3}\right)|\cos\varphi_1|^{3(1-a)/2}\\
  &\leq
d_W(x,y)^a\left(C\ln N\cdot N^{\frac{\beta}{2\beta-1}}  +C N^{\frac{\beta^2(\beta-2)}{(2\beta-1)(\beta-1)(\beta+1)}} +C N^{\frac{\beta(3-\beta)}{(2\beta-1)(\beta-1)}}\right) \cos\varphi_1^{3(1-a)/2}\\
   &\leq C
d_W(x,y)^{a}\ln N\cdot N^{\frac{\beta}{2\beta-1}}\cos\varphi_1^{3(1-a)/2}\end{align*}
where we used (\ref{Wcos}) in the  last step. Note that,  in the second last inequality, the first term dominates. Now using the fact that $\cos\varphi_1=\cO(N^{-\frac{\beta}{2\beta-1}})$, one can check that for any $a\in (0,1/3)$,
$$\bigl| \ln J_{W}F^{-1}(y)
  - \ln J_{W}F^{-1}(x) \bigr| \leq C_b d_W(x,y)^a,$$ for some constant $C_b>0$.
\end{proof}
The above lemma exhibits the following. Regarding the expansion factor $\Lambda(x)$ (in the Euclidean metric), the function $\ln \Lambda(x)$ is less regular on points which enter longer series of corner collisions, i.e, at the points where the H\"{o}lder exponent is smaller than $1/3$. For points where the trajectory has a bounded number of collisions in the corner series, the  function $\ln \Lambda(x)$ is still H\"{o}lder continuous with exponent $1/3$, which is similar to  the situation for dispersing  billiards, see \cite{chernov2006chaotic}.

\section*{Acknowledgements}

The research of H. Zhang was supported in
part by NSF grant DMS-1151762, and also in part by a grant from the Simons
Foundation (337646, HZ). The research of P. Jung was supported in part by NSA grant H98230-14-1-0144 and NRF grant N01170220. We would like to thank Dmitry Dolgopyat for posing the questions and also suggesting the main results discussed in this paper, i.e., the emergence of stable laws in billiard systems exhibiting slow decay of correlations. H. Zhang also thanks him for many invaluable discussions and suggestions.

\bibliographystyle{alpha}

\end{document}